\newtheorem{theorem}{Theorem}
\newtheorem{lemma}{Lemma}
\newtheorem{proposition}{Proposition}
\newtheorem{corollary}{Corollary}
\newtheorem{claim}{Claim}
\theoremstyle{plain}
\newtheorem{rrule}{Reduction Rule}
\theoremstyle{remark}
\newtheorem{observation}{Observation}
\newcommand{\C}{\mathcal{C}}
\newcommand{\F}{\mathcal{F}}
\newcommand{\N}{\mathbb{N}}
\newcommand{\NP}{\ensuremath{\mathsf{NP}}\xspace}
\newcommand{\containment}{\ensuremath{\mathsf{NP\subseteq coNP/poly}}\xspace}
\newcommand{\mktwo}{\ensuremath{\mathsf{MK[2]}}\xspace}
\newcommand{\Oh}{\mathcal{O}}
\newcommand{\T}{\mathcal{T}}
\newcommand{\dunion}{\mathbin{\dot{\cup}}}
\def\3SAT{\textup{\textsc{3sat}}}
\def\NAE3SAT{\textup{\textsc{nae 3sat}}}
\def\1IN3SAT{\textup{\textsc{1-in-3sat}}}
\def\P1IN3SAT{\textsc{positive 1-in-3sat}}
\def\pq3SAT{\textup{\textsc{$(p,q)$-3sat}}}
\def\2,23SAT{\textup{\textsc{$(2,2)$-3SAT}}}
\def\SCS{\textsc{Stable Cutset}}
\newcommand{\probname}[1]{\textup{\textsc{#1}}\xspace}
\newcommand{\cnfsatn}{\probname{CNF-SAT[$n$]}}
\newcommand{\hittingset}{\probname{Hitting Set}}
\newcommand{\hittingsetn}{\probname{Hitting Set[$n$]}}
\newcommand{\multicoloredhittingsetn}{\probname{Multicolored Hitting Set[$n$]}}
\newcommand{\setsplitting}{\probname{Set Splitting}}
\newcommand{\setsplittingn}{\probname{Set Splitting[$n$]}}
\newcommand{\stablecutset}{\probname{Stable Cutset}}
\newcommand{\stablecutsetk}{\probname{Stable Cutset[$k$]}}
\newcommand{\stablecutsetmodclique}{\probname{Stable Cutset[mod clique]}}
\newcommand{\stablecutsetmodcluster}{\probname{Stable Cutset[mod cluster]}}
\newcommand{\stablecutsetmodcocluster}{\probname{Stable Cutset[mod co-cluster]}}
\newcommand{\stablecutsetmodlinforest}{\probname{Stable Cutset[mod lin forest]}}
\newcommand{\stablecutsetmodpath}{\probname{Stable Cutset[mod path]}}
\newcommand{\stablecutsettc}{\probname{Stable Cutset[tc]}}
\newcommand{\stablecutsettw}{\probname{Stable Cutset[tw]}}
\newcommand{\stablecutsetvc}{\probname{Stable Cutset[vc]}}
\newcommand{\threecoloring}{\probname{$3$-Coloring}}
\newtcolorbox{problembox}[2][]{%
  attach boxed title to top left
               = {yshift=-8pt,xshift = 8pt},
  colback      = black!5,
  colframe     = black,
  coltitle     = black, 
  fonttitle    = \textcolor{black},
  colbacktitle = black!15,
  title        = #2,#1,
  enhanced,
  before skip = 8pt, %Spacing to the text before, to compensate the -8pt yshift of headbox
}
\newcommand{\introduceproblem}[3]{
\begin{problembox}{#1}
\vspace{0.2cm}
\begin{tabular}{l l}
\emph{Instance:}& #2\\
\emph{Question:}& #3
\end{tabular}
\end{problembox}}
\newcommand{\introduceparameterizedproblem}[4]{
\begin{problembox}{#1}
\vspace{0.2cm}
\begin{tabular}{l l}
\emph{Instance:}& #2\\
\emph{Parameter:}& #3\\
\emph{Question:}& #4
\end{tabular}
\end{problembox}}
\definecolor{teal}{rgb}{0.00,0.5,0.5} %für blau=true
\definecolor{flax}{rgb}{0.90,0.78,0.00} %für rot=false
\colorlet{lightred}{red!25}
\colorlet{lightblue}{blue!25}
\title{On polynomial kernelization for Stable Cutset}
\author{Stefan Kratsch\\Humboldt-Universität zu Berlin, Berlin, Germany\\kratsch@informatik.hu-berlin.de%\\https://orcid.org/0000-0002-0193-7239
\and
Van Bang Le\\Institut für Informatik, Universität Rostock, Rostock, Germany\\van-bang.le@uni-rostock.de%\\https://orcid.org/0000-0002-3303-8326}
}
\begin{document}

\maketitle              % typeset the header of the contribution

\centerline{\emph{For Dieter Kratsch on his 65th birthday}}
\bigskip

\begin{abstract}
A \emph{stable cutset} in a graph $G$ is a set $S\subseteq V(G)$ such that vertices of $S$ are pairwise non-adjacent and such that $G-S$ is disconnected, i.e., it is both stable (or independent) set and a cutset (or separator). Unlike general cutsets, it is $\NP$-complete to determine whether a given graph $G$ has any stable cutset. Recently, Rauch et al.\ [FCT 2023] gave a number of fixed-parameter tractable (FPT) algorithms, time $f(k)\cdot |V(G)|^c$, for \stablecutset under a variety of parameters $k$ such as the size of a (given) dominating set, the size of an odd cycle transversal, or the deletion distance to $P_5$-free graphs. Earlier works imply FPT algorithms relative to clique-width and relative to solution size.

We complement these findings by giving the first results on the existence of polynomial kernelizations for \stablecutset, i.e., efficient preprocessing algorithms that return an equivalent instance of size polynomial in the parameter value. Under the standard assumption that $\mathsf{NP\nsubseteq coNP/poly}$, we show that no polynomial kernelization is possible relative to the deletion distance to a single path, generalizing deletion distance to various graph classes, nor by the size of a (given) dominating set. We also show that under the same assumption no polynomial kernelization is possible relative to solution size, i.e., given $(G,k)$ answering whether there is a stable cutset of size at most $k$. On the positive side, we show polynomial kernelizations for parameterization by modulators to a single clique, to a cluster or a co-cluster graph, and by twin cover.
% 
% \keywords{stable cutset  \and kernelization \and parameterized complexity.}
\end{abstract}
\section{Introduction}
A \emph{stable cutset} in a graph $G$ is a set $S\subseteq V(G)$ such that vertices of $S$ are pairwise non-adjacent and such that $G-S$ is disconnected, i.e., it is both stable (or independent) set and a cutset (or separator). The \stablecutset problem asks whether or not a given graph has a stable cutset:

\introduceproblem{\stablecutset}{A graph $G$.}{Does $G$ have a stable cutset?}

Unlike finding general cutsets, the \stablecutset problem is $\NP$-complete, and this remains true even on various restricted graph classes~\cite{Chvatal84,KleinF96,BrandstadtDLS00,LeR03,LeMM08}. Moreover, it follows from a result by Brandstädt et al.~\cite{BrandstadtDLS00} that, assuming the Exponential-Time Hypothesis, \SCS\ cannot be solved in $2^{o(n)}$ time on $n$-vertex graphs, even when restricted to $K_4$-free graphs of diameter~4.

The importance of stable cutsets has been demonstrated first in connection to perfect graphs~\cite{Tucker83,CorneilF93}. Tucker~\cite{Tucker83} proved that if a graph $G$ has a stable cutset $S$ such that no induced cycle of odd length at least five has a vertex in $S$, then the coloring problem on $G$ reduces to the same problem on the smaller graphs induced by $S$ and the connected components of $G-S$. Later, structural and algorithmic aspects of stable cutsets were studied: It is known that any $n$-vertex graph with at most $2n-4$ edges has a stable cutset and such one can be computed in polynomial time~\cite{ChenY02}, extremal graphs with $2n-3$ edges and without stable cutsets have been described in~\cite{LeP13}, and graphs with small stable cutsets were studied in~\cite{ChenFJ02}. 
While the complexity of \SCS\ in graphs of maximum degree four is still open, it is known that \SCS\ remains $\NP$-complete in planar graphs of maximum degree five \cite{LeMM08} and in 5-regular line graphs of bipartite graphs \cite{LeR03}.

In fact, stable cutsets in \emph{line graphs} have been studied under the notion of a matching cut. Here, a \emph{matching cut} is an edge cut that is also a matching. It can be seen that, for graphs $G$ with minimum degree at least two, $G$ has a matching cut if and only if the line graph of~$G$, $L(G)$, admits a stable cutset. 
Graphs having a matching cut were first discussed in \cite{Graham70,Chvatal84}. It follows from~\cite{Chvatal84,Moshi89} that \SCS\ is $\NP$-complete even on line graphs of bipartite graphs, hence on perfect graphs, and it follows from~\cite{LuckePR23} that computing a \emph{maximum} (\emph{minimum}) stable cutset in line graphs is $\NP$-hard. 
We refer to the recent papers \cite{KomusiewiczKL20,ChenHLLP21,GolovachKKL22,FeghaliLPR23isaac,LuckePR23} for more information on matching cuts, i.e., stable cutsets in line graphs.

The (likely) intractability of \stablecutset on general graphs (and many restricted classes) as well as its relation to graph structure motivate a study of its parameterized complexity: How does structure of the input graph $G$ influence the complexity of \SCS? E.g., is the problem significantly easier to solve not just on, say, chordal graphs, but also on graphs that are close to being chordal (e.g., at most $\ell$ vertex deletions away)? Similarly, does having (small) treewidth at most $w$ or asking for a stable set of (small) size at most $k$ allow for faster algorithms? To this end, one studies \stablecutset with any chosen parameter(s), say, $k$, and asks whether the resulting parameterized problem is \emph{fixed-parameter tractable} (FPT), i.e., if there is an algorithm that decides inputs $(x,k)$ in time $f(k)\cdot |x|^c$ for some computable function $f$ (also called an FPT-algorithm). Some but (likely) not all FPT problems also admit a strong form of efficient preprocessing: A \emph{polynomial kernelization} is an efficient algorithm that given $(x,k)$ returns an input $(x',k')$ with same answer but size bounded by a polynomial in $k$.

So far, there are only few results regarding the parameterized complexity of \stablecutset, in particular, no dedicated results on kernelization. A general result of Marx et al.~\cite{MarxOR13} implies that \stablecutsetk (\SCS\ parameterized by solution size) is fixed-parameter tractable. Very recently, Rauch et al.~\cite{RauchRS23} proved, among others, that \SCS\ is FPT when parameterized by the domination number, by the distance to bipartite graphs, and by the distance to chordal graphs. Bodlaender observed that the problem can be solved in $\Oh(3^wn^c)$ when a tree decomposition of width $w$ is given.\footnote{Private communication in August 2015.}

\paragraph{Our work.}
We initiate the study of upper and lower bounds for kernelizations for the \stablecutset problem. The results obtained have some similarity to what is known for \threecoloring, in particular, results obtained by Jansen and Kratsch~\cite{JansenK13}. While the problems are different, some similarities can be observed:
A set $S\subseteq V(G)$ is a stable cutset of $G$ if and only if $G-S$ is disconnected, and hence can be partitioned into two parts $G[A]$ and $G[B]$ with no connecting edges and with $A,B\neq\emptyset$. In other words, $G$ has a stable cutset if and only if its vertices can be partitioned into $V(G)=S\dunion A\dunion B$ such that $S$ is a (possibly empty) stable set, $A$ and $B$ are both nonempty, and there is no edge with one endpoint in $A$ and one in $B$. Intuitively, this leaves us with a $3$-coloring-like problem (that could also be cast as $H$-coloring with vertices $S$, $A$, and $B$, suitable edges/loops, and condition that $A,B\neq\emptyset$). A key difference between \stablecutset and \threecoloring, however, is that a stable cutset may be rather local, in the sense of separating only a small part, say, $A$ and having most of the graph in $B$. Unlike \threecoloring, the \SCS\ problem is also not monotone, i.e., deletion of vertices may turn the instance from yes to no.

We obtain several several polynomial kernelizations for \stablecutset with respect to different parameters. All but one of the results involve one or more marking-based reduction rules that preserve one vertex/component each for certain adjacencies of $G-X$ into $X$, similar to what was done for \threecoloring parameterized by vertex cover~\cite{JansenK13}.

\begin{theorem}\label{theorem:intro:upperbounds}
    Kernelizations with the following bounds exist for \stablecutset parameterized by $|X|$ when a set $X$ with the required properties is provided:
    \begin{itemize}
        \item $\Oh(|X|)$ vertices when $G-X$ is a clique.
        \item $\Oh(|X|^3)$ vertices when $X$ is a vertex cover (or $G-X$ is a stable set).
        \item $\Oh(|X|^3)$ vertices when $X$ is a twin cover.
        \item $\Oh(|X|^5)$ vertices when $G-X$ is a cluster graph.
        \item $\Oh(|X|^3)$ vertices when $G-X$ is a co-cluster graph.
    \end{itemize}
\end{theorem}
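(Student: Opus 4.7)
The plan is to prove all five bounds via a common template. A stable cutset corresponds to a vertex partition $V(G)=S\dunion A\dunion B$ with $S$ independent, $A,B\neq\emptyset$, and no $A$--$B$ edges. In each case I would analyze what $G-X$ contributes to such a partition, introduce marking rules that retain only polynomially many vertices of $G-X$ per ``relevant neighbourhood pattern'' in $X$, and verify safety by a swap argument: any stable cutset of the original graph is transformed into one of the reduced graph by reassigning a deleted vertex to an equivalent marked representative, and vice versa.

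The clique case ($\Oh(|X|)$) serves as warm-up: since $K:=G-X$ is a clique, $|S\cap V(K)|\le 1$ and $V(K)\setminus S$ lies entirely in one side, say $A$. Hence a stable cutset is either of type~A (no pivot in $V(K)$) or of type~B with a pivot $v^\star\in V(K)\cap S$. For the decision only the following about each $x\in X$ matters: does $x$ have zero, exactly one, or at least two neighbours in $V(K)$. I would mark, for every $x\in X$, at most two arbitrary neighbours of $x$ in $V(K)$ and delete the rest; any pivot $v^\star$ outside the marked set turns out to induce no new feasible partition on $X$ compared with type~A (its $X$-neighbourhood is the only datum that matters and is then strictly more restrictive than type~A), so it is safe to restrict pivots to the marked set. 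The reduced graph has $\Oh(|X|)$ vertices.

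For the remaining parameters the same idea is lifted using types. Assign each $v\in V(G)\setminus X$ the type $N(v)\cap X$; type-equal vertices are interchangeable. When $X$ is a vertex cover, a partition $(S_X,A_X,B_X)$ of $X$ extends to $V(G)\setminus X$ iff every type avoids at least one of the three parts, which fails only if $|N(v)|\ge 3$. Keeping, for each ordered triple of $X$-vertices, a constant number of independent-set vertices realising the relevant neighbourhood patterns --- together with a pair-based rule ensuring $A$ and $B$ can be made nonempty --- gives $\Oh(|X|^3)$. The twin-cover bound is obtained by first truncating each clique of twins to constant size (since vertices within such a clique share a single $X$-type, so the clique-case analysis bounds the useful part by a constant) and then applying the type-triple marking to the resulting cluster of types, again giving $\Oh(|X|^3)$. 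For the co-cluster parameter, the key observation is that any two vertices in different co-cliques of $G-X$ are adjacent, which forces all non-$S$ vertices outside a single chosen co-clique onto one side; this reduces the analysis to a clique-like marking per co-clique combined with a triple-based cross-co-clique marking, giving $\Oh(|X|^3)$.

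The main technical obstacle will be the cluster case ($\Oh(|X|^5)$), which genuinely combines the two difficulties: within each clique the clique analysis requires a per-$X$-vertex marking, while across cliques different cliques have different $X$-types so a vertex-cover-style type marking is also needed. A single stable cutset may place distinct cliques of $G-X$ on different sides, so the safety proof for the cross-clique rule must accommodate exchanges that move entire cliques between the marked and unmarked portions. The bookkeeping for this combination produces two extra factors of $|X|$ beyond the cubic case, yielding the $\Oh(|X|^5)$ bound. All marking rules are polynomially computable, so the resulting algorithms are polynomial kernelizations as claimed.
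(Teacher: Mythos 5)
Your overall architecture matches the paper's: tuple-indexed marking rules over $X$ that keep representatives of each relevant adjacency pattern in $G-X$, with safety shown by re-placing a deleted vertex into $S$, $A$, or $B$ guided by the marked witnesses. The structural observations you use (a clique of $G-X$ meets $S$ in at most one vertex and otherwise lies on one side; in the co-cluster case all of $G-X$ minus one stable set and minus $S$ is forced onto a single side) are exactly the ones the paper exploits, and all five bounds come out right.

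There is, however, one genuine gap. Your safety template --- ``any stable cutset of the original graph is transformed into one of the reduced graph by reassigning a deleted vertex to an equivalent marked representative'' --- is fine when a rule deletes a single vertex (the paper proves once and for all that single-vertex deletion cannot turn a yes-instance into a no-instance, given the preliminary rules). It fails when a rule deletes an entire clique component $C$ of $G-X$, which your twin-cover and cluster kernelizations must do to bound the \emph{number} of components. The problematic case is a stable cutset $S$ whose small side is exactly $C\setminus S$, i.e.\ $S=N(C)$ or $S=N(C\setminus\{v\})$ for some $v\in C$: deleting $C$ destroys this cutset outright, and no marked representative elsewhere need witness it, since the marking records adjacencies into $X$ but this cutset is certified by $C$ alone. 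The paper patches this by defining $C$-simple stable cutsets, testing for them before any whole-component deletion (and answering yes if one exists), and only then invoking the analogue of the single-vertex deletion lemma. You need the same guard, both for the size-two twin classes and for the unmarked cluster components. Separately, in the cluster case your claim that the cross-clique bookkeeping costs ``two extra factors of $|X|$'' is asserted rather than derived: the reason the paper needs $|X|^4$ (not $|X|^3$) component representatives is that a clique component can donate one vertex to $S$, so certifying that an unmarked component cannot be placed consistently requires up to four pairwise-adjacent marked witnesses per $4$-tuple of $X$-vertices; without working this out the exponent $5$ is a guess rather than a bound.
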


As is common, we assume that a suitable witness (in this case a suitable set $X$) is provided with the input. That being said, all parameterizations in the previous theorem have efficient constant-factor approximation algorithms.

Our main lower bound result is to show that \stablecutset parameterized by the size of a deletion set (also called modulator) to a single path is hard for a class called \mktwo (introduced by Hermelin et al.~\cite{HermelinKSWW15}) under polynomial parameter transformations (a special parameterized reduction~\cite{BodlaenderTY11}). To this end, we first establish that \setsplittingn\footnote{Given a ground set $U$ and a set family $\F$ over $U$, is there a subset $S\subseteq U$ that contains neither $F$ nor the complement $\overline{F}$ for any $F\in\F$. The parameter is $n=|U|$.} is complete for \mktwo. We then give the required reduction from \setsplittingn to \stablecutsetmodlinforest, which also works for parameterization by (connected) domination number. It was proved by Hermelin et al.~\cite{HermelinKSWW15} that \mktwo-hard problems do not admit polynomial kernelizations unless \containment. They also conjectured that such problems do not admit so-called polynomial \emph{Turing kernelizations}.

\begin{theorem}\label{theorem:intro:lowerbounds}
    The following parameterizations of \stablecutset are hard for the class \mktwo under polynomial parameter transformations and thus do not admit polynomial kernelizations unless \containment (and PH collapses):
    \begin{itemize}
        \item \SCS\ parameterized by the size of $X$ such that $G-X$ is a single path.
        \item \SCS\ parameterized by the size of a given (connected) dominating set $X$. 
    \end{itemize}
\end{theorem}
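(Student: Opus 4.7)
My plan follows the two-step strategy hinted at in the introduction: first establish \setsplittingn as \mktwo-complete, and then give polynomial parameter transformations from \setsplittingn to \stablecutset parameterized by a modulator to a single path and to \stablecutset parameterized by a given (connected) dominating set. Combined with the result of Hermelin et al.~\cite{HermelinKSWW15} that no \mktwo-hard problem admits a polynomial kernelization unless \containment, this yields both parts of the theorem.

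Membership of \setsplittingn in \mktwo is immediate since a subset $S\subseteq U$ is a certificate of exactly $|U|=n$ bits, verifiable in polynomial time. For hardness I would reduce from \cnfsatn via the textbook trick of introducing one fresh polarity variable $s$ and turning every clause $C$ into the not-all-equal clause $C\cup\{s\}$, yielding an \NAE3SAT instance on $n+1$ variables. Next, I would split each variable $x_i$ into two universe elements $x_i^+,x_i^-$, map literals $x_i$ and $\overline{x_i}$ to $x_i^+$ and $x_i^-$ respectively, and add the pair $\{x_i^+,x_i^-\}$ to $\F$ to force the two elements into opposite parts of the splitting. This turns NAE-SAT into positive NAE-SAT, i.e.\ \setsplitting, on a universe of size $2(n+1)$, which is a polynomial parameter transformation.

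For the reduction into \stablecutset, starting from $(U,\F)$ I would construct a graph $G$ with a modulator $X$ of size $\Oh(n)$ such that $G-X$ is a single path and $G$ has a stable cutset iff $(U,\F)$ has a set splitting. The set $X$ contains one vertex $v_u$ per element $u\in U$ together with a constant-size \emph{frame} of anchor vertices; the frame is designed so that in any stable cutset $V(G)=S\dunion A\dunion B$, one anchor is forced into $A$ and one into $B$ (up to the trivial $A$/$B$-symmetry) and every $v_u$ is forced into $A\cup B$, thus inducing a bipartition $(T,\overline{T})$ of $U$. Each constraint $F\in\F$ is encoded by a short gadget sitting on the path in $G-X$; its path vertices are adjacent only to the $v_u$ with $u\in F$ and to the frame anchors, and the gadget admits an extension to a stable cutset if and only if $F$ is split by $(T,\overline{T})$. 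Because no gadget adds new vertices to $X$, the modulator stays of size $\Oh(n)$ even when $|\F|$ is super-polynomial, and the per-constraint segments can be concatenated into a single path in $G-X$ by short \emph{guard} sub-paths that the frame anchors pin entirely into $A$ so that they do not create spurious stable cutsets. The same graph, with a small connected subset $D\subseteq X$ made adjacent to all of $V(G)$, provides a (connected) dominating set of size $\Oh(n)$, giving the second parameterization with essentially the same correctness proof.

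The main technical obstacle is the design of the per-constraint gadget: stable-cutset constraints are permissive (edges within $A$ or within $B$ are allowed, and vertices of $S$ may be adjacent to anything outside $S$), so a naive probing of the $v_u$ by a single path vertex lets all $v_u$ with $u\in F$ sit on the same side. I expect to handle this by exploiting the two opposite frame anchors to force a transition vertex of the gadget into $S$ in a way that is only feasible when at least one $v_u$ with $u\in F$ lies in $A$ and at least one in $B$, which is exactly the split condition for $F$. Once the gadget works, correctness of the full reduction follows by combining the local gadget analysis with the global anchor and frame structure.
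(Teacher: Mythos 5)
Your overall architecture matches the paper's: establish \setsplittingn as \mktwo-complete, then give a PPT to \stablecutset with a small modulator to a path that doubles as a (connected) dominating set. Your first step is fine and in fact takes a more direct route than the paper: you go \cnfsatn $\to$ NAE-SAT (via the standard polarity variable $s$) $\to$ \setsplitting by splitting each variable into $x_i^+,x_i^-$ and adding the forcing pair $\{x_i^+,x_i^-\}$, whereas the paper routes through \multicoloredhittingsetn and an adaptation of a reduction of Cygan et al. Both yield a universe of size $\Oh(n)$, so either works; note only that the clauses of \cnfsatn have unbounded width, so you obtain unbounded-arity NAE-SAT rather than \NAE3SAT, which is harmless for \setsplitting.

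The genuine gap is in the second step. You correctly identify what the per-constraint gadget must do --- be extendable to a stable cutset iff the constraint is split --- and you correctly identify why a naive probe fails (stable-cutset constraints are permissive), but you do not construct the gadget; you only state that you ``expect to handle this'' via two opposite anchors forcing a transition vertex into $S$. That construction is the entire technical content of the result. In the paper each set $F$ gets \emph{two} paths of length $2|F|-1$ (an $L_F$ and an $R_F$) threaded through paired element vertices $u_a,u_b$ with carefully placed extra edges for the first two elements of $F$, and the anchors $a_1,a_2,b_1,b_2,s$ attach to the path endpoints and midpoints so that an unsplit set yields a forbidden $A$--$B$ path. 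Moreover, the direction ``stable cutset $\Rightarrow$ splitting'' requires showing that \emph{no} stable cutset can be local, i.e., that $G-S$ always decomposes into exactly the two components containing $a_1$ and $b_1$; the paper spends most of its proof on this via a ``weak reachability'' induction along the set paths. Your sketch acknowledges the danger of spurious local cutsets (your ``guard sub-paths'') but gives no argument ruling them out. Finally, your remark that the dominating-set parameterization follows by making a subset of $X$ ``adjacent to all of $V(G)$'' is not safe as stated --- adding edges changes the set of stable cutsets --- whereas the paper obtains domination for free because every path vertex is already adjacent to $s$ or to an element vertex by construction. As it stands the proposal is a plausible plan, not a proof.
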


Furthermore, we show a simple construction that, unless \containment, rules out polynomial kernelizations for \stablecutset parameterized by solution size or by treewidth (or by vertex integrity).

\paragraph{Organization.}
Section~\ref{section:preliminaries} contains the preliminaries. In Section~\ref{section:properties} we collect various properties of stable cutsets and develop some first reduction rules. Section~\ref{section:upperbounds} contains our polynomial kernelizations, while Section~\ref{section:lowerbounds} contains the claimed lower bounds for kernelization. We conclude in Section~\ref{section:conclusion}.

\section{Preliminaries}\label{section:preliminaries}

\paragraph{Graphs.} 
We consider only finite and simple graphs $G=(V,E)$ with vertex set $V(G)=V$ and edge set $E(G)=E$. Throughout we use $n=|V|$ and $m=|E|$. A set $S\subseteq V(G)$ is a \emph{stable set} in $G$ if $E(G)$ contains no edges between vertices of $S$, respectively a \emph{clique} if all vertices in $S$ are pairwise adjacent. A set $S\subseteq V(G)$ is a \emph{cutset} of $G$ if $G-S$ is disconnected; the empty set is a cutset of any disconnected graph $G$. A \emph{consistent partition} of a disconnected graph $G$ is a partition $V(G)=A\dunion B$ with $A,B\neq\emptyset$ and such that no edge of $G$ has one endpoint in $A$ and the other in $B$. 

\begin{proposition}
    A set $S\subseteq V(G)$ is a cutset of $G$ if and only if $G-S$ has a consistent partition.
\end{proposition}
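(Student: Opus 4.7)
The proposition unfolds the definition of cutset via the definition of disconnectedness, so my plan is just to establish the equivalence ``$H$ is disconnected iff $H$ admits a consistent partition'' applied to $H = G - S$.

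For the forward direction, I would assume $S$ is a cutset, so $G - S$ is disconnected by definition. Then I pick any connected component $C$ of $G - S$ and set $A := V(C)$ and $B := V(G-S) \setminus A$. Since $G - S$ has at least two components, both $A$ and $B$ are nonempty, and by definition of connected component there is no edge of $G - S$ with one endpoint in $A$ and the other in $B$. Hence $A \dunion B$ is a consistent partition of $G - S$.

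For the backward direction, I would assume $G - S$ has a consistent partition $V(G - S) = A \dunion B$. Any walk in $G - S$ must stay entirely on one side of the partition (by induction on its length, since no edge crosses), so no vertex of $A$ is reachable from any vertex of $B$ within $G - S$. Because both parts are nonempty, $G - S$ is disconnected, i.e., $S$ is a cutset. One small thing to mention is the edge case where $S = \emptyset$: this is consistent with the convention stated in the preliminaries that the empty set is a cutset of any disconnected graph.

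There is no real obstacle here; the statement is essentially a reformulation of ``disconnected = splits into two nonempty parts with no edges between them,'' and the only mild point worth being explicit about is that the existence of a consistent partition is a witness-style restatement of disconnectedness that will be convenient later when constructing or analysing stable cutsets by exhibiting the pair $(A, B)$.
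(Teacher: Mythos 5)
Your proof is correct and is exactly the definitional argument the paper has in mind; the paper in fact states this proposition without proof, treating it as immediate. Both directions are handled properly (component plus its complement for the forward direction, non-reachability across the partition for the backward direction), so there is nothing to add.
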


We use $N(v)$ to denote the (open) neighborhood of a vertex $v$ and, $N[v]=N(v)\cup\{v\}$ for its closed neighborhood. A vertex $v$ is \emph{simplicial} if $N(v)$ is a clique. A set $M\subseteq V(G)$ is a \emph{module} of $G$ if $N(u)\setminus M= N(v)\setminus M$ holds for all $u,v\in M$.
Two vertices $u$ and $v$ of $G$ are \emph{true twins} if $N[u]=N[v]$ and \emph{false twins} if $N(u)=N(v)$, i.e., if they have the same neighbors outside of $\{u,v\}$ and they are either adjacent or non-adjacent. Say that $u$ and $v$ are \emph{twins} if they are true twins or false twins.

\begin{proposition}
    Vertices being twins in a graph $G$ is an equivalence relation on $V(G)$ and each equivalence class is a clique or stable set. Moreover, each equivalence class is a (clique or stable) module of $G$.
\end{proposition}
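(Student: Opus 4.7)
The plan is to verify the three claims in order: equivalence relation, clique-or-stable-set structure, and modularity. Reflexivity ($N[v]=N[v]$) and symmetry are immediate from the definitions, so the only work is for transitivity. Given distinct $u,v,w$ with $u,v$ twins and $v,w$ twins, I would split into four cases based on which flavor of twin each pair is.

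The two ``same-type'' cases are routine. If $u,v$ and $v,w$ are both true twins, then $N[u]=N[v]=N[w]$, so $u,w$ are true twins. If both pairs are false twins, then $N(u)=N(v)=N(w)$; moreover $w\notin N(v)=N(w)$ implies $w\notin N(u)$, so $u,w$ are non-adjacent and hence false twins. The heart of the argument is to rule out the ``mixed'' case, say $u,v$ true twins and $v,w$ false twins (the other mixed case is symmetric). Here I would observe that $w\in N[v]$ iff $w=v$ or $w\sim v$, and both fail since $v,w$ are distinct and non-adjacent; hence $w\notin N[v]=N[u]$, so $u\not\sim w$. On the other hand, $u\in N(v)$ (from $u,v$ being adjacent true twins) and $N(v)=N(w)$ together give $u\in N(w)$, so $u\sim w$. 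This contradiction shows the mixed cases cannot occur, which completes transitivity.

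The same case analysis immediately yields the second claim. In an equivalence class $M$ with at least two vertices, pick $u,v\in M$ and note that the impossibility of the mixed case forces every pair in $M$ to be of the same type as $u,v$: if any other $w\in M$ formed a pair of the opposite type with $v$, we would land in the forbidden mixed case. Hence either every pair in $M$ is true twins (so $M$ induces a clique) or every pair is false twins (so $M$ induces a stable set); singleton classes satisfy both trivially.

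For modularity, I would take $u,v\in M$ and $x\in V(G)\setminus M$ and show $x\in N(u)\iff x\in N(v)$. In the true-twin case, $N[u]=N[v]$, and since $x\notin\{u,v\}$ this gives $x\in N(u)\iff x\in N[u]\iff x\in N[v]\iff x\in N(v)$. In the false-twin case we have directly $N(u)=N(v)$, which is even stronger. Thus $N(u)\setminus M=N(v)\setminus M$, so $M$ is a module, and by construction it is a clique module or a stable module according to the two cases. The only subtle step is the contradiction in the mixed case; the rest is bookkeeping.
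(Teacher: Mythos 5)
Your proof is correct and complete. The paper states this proposition without proof (it is treated as a standard fact), so there is nothing to compare against; your argument — reducing everything to the observation that a mixed true-twin/false-twin configuration $u,v,w$ is impossible, which simultaneously yields transitivity and the uniform clique/stable structure of each class, with modularity then following by restricting $N[u]=N[v]$ or $N(u)=N(v)$ to $V(G)\setminus M$ — is exactly the standard argument the authors implicitly rely on.
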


\paragraph{Parameterized complexity and kernelization.}
A \emph{parameterized problem} is a language $L\subseteq\Sigma^*\times\N$; the second component $k$ of an instance $(x,k)$ is called its \emph{parameter}. A parameterized problem $L$ is \emph{fixed-parameter tractable} if there is an algorithm $A$, a computable function $f$, and a constant $c$ such that $A$ decides $(x,k)\in L$ in time $f(k)|x|^c$. A \emph{kernelization} for a parameterized problem $L$ is an efficient algorithm that given $(x,k)\in\Sigma^*\times\N$ takes time polynomial in $|x|+k$ and returns an instance $(x',k')$ of size $|x'|+k'$ upper bounded by $h(k)$ for some computable function $h\colon\N\to\N$ and with $(x,k)\in L$ if and only if $(x',k')\in L$. The function $h$ is also called the \emph{size} of the kernelization and for a \emph{polynomial kernelization} it is required that $h$ is polynomially bounded. It is common to allow for a kernelization to also (correctly) answer yes or no regarding membership of $(x,k)$ in $L$, which could be handled by returning any constant-size yes- or no-instance. A \emph{polynomial parameter transformation} from $Q\subseteq\Sigma^*\times\N$ to $Q'\subseteq\Gamma^*\times\N$ is an efficiently computable mapping $\pi\colon\Sigma^*\times\N\to\Gamma^*\times\N$ such that for each $(x,k)\in\Sigma^*\times\N$ and $(x',k')=\pi((x,k))$ we have that $(x,k)\in Q$ if and only if $(x',k')\in Q'$ and that $k'$ is polynomially bounded in $k$.

\paragraph{Graph parameters.} Let $G=(V,E)$ be a graph. A \emph{vertex cover} of $G$ is a set $X\subseteq V$ such that $G-X$ is a stable set. A \emph{twin cover} of $G$ is a set $X\subseteq V$ such that $N[u]=N[v]$ holds for every edge $\{u,v\}$ of $G-X$, i.e., every two adjacent vertices of $G-X$ are twins (which means that they must be true twins). For a graph class $\C$, a \emph{modulator of $G$ to $\C$} is a set $X\subseteq V$ such that $G-X\in\C$. E.g., the vertex covers are exactly the modulators to the class of stable sets, and each twin cover is also a modulator to the class of cluster graphs (but not vice versa). 
A \emph{dominating set} of $G$ is a set $X\subseteq V$ such that $V(G)=N[X]$, i.e., such that each vertex of $G$ is contained in $X$ or adjacent to a vertex of $X$.

\section{Some properties of stable cutsets}\label{section:properties}

In this section, we observe and recall several properties of stable cutsets. This leads to a number of reduction rules that will be used in Section~\ref{section:upperbounds}.

The first three rules handle several cases where $G$ has a trivial stable cutset, such as the empty set for disconnected graphs, or no (stable) cutset, such as for cliques. These can be applied for any parameterization because they outright answer the problem rather than modifying the input. Throughout we assume that reduction rules are applied in the order that they are introduced in, so that later rules benefit from structural properties that are established by earlier ones.

\begin{rrule}\label{rrule:disconnected:cutvertex}
    If $G$ is disconnected or $G$ has a cut-vertex then return yes.
\end{rrule}

\begin{rrule}\label{rrule:clique}
    If $G$ is a clique then return no.
\end{rrule}

\begin{rrule}\label{rrule:stableneighborhood}
    If $G$ has any vertex whose neighborhood is a stable set then return yes.
\end{rrule}

\begin{lemma}\label{lemma:rrule:stableneighborhood:safe}
    Reduction Rule~\ref{rrule:stableneighborhood} is safe, i.e., if Reduction Rules~\ref{rrule:disconnected:cutvertex} and~\ref{rrule:clique} do not apply to $G$ and the neighborhood of $v$ is a stable set then $G$ has a stable cutset.
\end{lemma}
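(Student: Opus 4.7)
The plan is to take $S := N(v)$ as the candidate stable cutset. By hypothesis $S$ is a stable set, so it only remains to show that $G-S$ is disconnected. Note that $v \in V(G) \setminus N(v)$ and that $v$ is an isolated vertex of $G-S$, since all of its neighbors lie in $S$. Hence it suffices to exhibit at least one further vertex in $V(G)\setminus N(v) = V(G)\setminus N[v] \cup \{v\}$; equivalently, to show that $N[v] \neq V(G)$.

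Suppose for contradiction that $N[v] = V(G)$, i.e., $v$ is adjacent to every other vertex. Together with the assumption that $N(v)$ is a stable set, this forces $G$ to be a star with center $v$ (possibly of order $1$ or $2$). If $|V(G)| \leq 2$, then $G$ is either $K_1$ or $K_2$ and hence a clique, contradicting the fact that Reduction Rule~\ref{rrule:clique} does not apply. If $|V(G)| \geq 3$, then $v$ is a cut-vertex of the star (removing $v$ leaves at least two isolated leaves), contradicting the fact that Reduction Rule~\ref{rrule:disconnected:cutvertex} does not apply. Thus $N[v] \subsetneq V(G)$, so $V(G)\setminus N(v)$ contains $v$ together with some further vertex $u$, and $u$ lies in a component of $G-S$ different from $\{v\}$.

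Consequently $G-S$ has at least two connected components, so $S = N(v)$ is a cutset, and since it is also stable it is a stable cutset of $G$, as required. The entire argument is elementary once Reduction Rules~\ref{rrule:disconnected:cutvertex} and~\ref{rrule:clique} have been invoked; there is no real obstacle, the only point requiring care is ruling out the degenerate situation in which $v$ dominates the whole graph, which is precisely where the two preceding rules are used.
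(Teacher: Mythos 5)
Your proof is correct and follows essentially the same route as the paper's: take $S=N(v)$, use Rules~\ref{rrule:disconnected:cutvertex} and~\ref{rrule:clique} to rule out the degenerate case $N[v]=V(G)$ (where $v$ would be a cut-vertex or $G$ a tiny clique), and then observe that $\{v\}$ and $V(G)\setminus N[v]$ witness the disconnection of $G-S$. No gaps.
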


\begin{proof}
    We need to check that the few corner cases without stable cutset are correctly handled by Reduction Rule~\ref{rrule:clique}. 
    As Reduction Rules~\ref{rrule:disconnected:cutvertex} and \ref{rrule:clique} do not apply to $G$, it must have at least three vertices. If $V(G)=N[v]$ then $v$ would be a cut vertex (separating at least two neighbors of $v$ as $N(v)$ is a stable set) and Reduction Rule~\ref{rrule:disconnected:cutvertex} would apply; a contradiction. Else, let $u\in V(G)\setminus N[v]$ and observe that $S=N(v)$ is a stable cutset with consistent partition $V(G)\setminus S=A\dunion B$ where $A=\{v\}\neq\emptyset$ and $B=V(G)\setminus N[v]\supseteq\{u\}\neq\emptyset$.
\end{proof}

Several of our reduction rules allow the deletion of a single vertex $v$ or of a clique $C$ from $G$.
The following lemma eases proofs of safeness for the case of vertex deletion by showing that this cannot change the answer from yes to no.

\begin{lemma}\label{lemma:deletevertex:nofalsenegative}
    Let $G$ be a graph to which Reduction Rules~\ref{rrule:disconnected:cutvertex}, \ref{rrule:clique}, and~\ref{rrule:stableneighborhood} do not apply and let $v$ be any vertex of $G$. If $G$ has a stable cutset then so does $G-v$.
\end{lemma}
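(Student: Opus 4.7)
The plan is to take an arbitrary stable cutset $S$ of $G$ with consistent partition $V(G)\setminus S = A \dunion B$ and to exhibit, for each possible location of $v$, a stable cutset of $G-v$. The natural candidates are $S$ itself (when $v \in A \cup B$) and $S \setminus \{v\}$ (when $v \in S$). The only thing that could go wrong is that one side of the induced partition becomes empty, so the key preliminary step is to rule this out.

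The crucial observation is that under the hypothesis that Reduction Rule~\ref{rrule:stableneighborhood} does not apply, one has $|A| \geq 2$ and $|B| \geq 2$ for every such consistent partition. Indeed, for any $u \in A$ we have $N_G(u) \subseteq S \cup A$; since $N_G(u)$ is not a stable set (Rule~\ref{rrule:stableneighborhood} does not apply) and $S$ is stable, $u$ must have a neighbor inside $A$, whence $|A| \geq 2$. The same argument works for $B$. This is the one place where the hypotheses on $G$ are actually used.

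Given this, I would split into three cases. If $v \in S$, then $S' := S \setminus \{v\}$ is stable, and $V(G-v)\setminus S' = A \dunion B$ is still a consistent partition (the absence of edges between $A$ and $B$ is inherited from $G$, and both parts are nonempty), so $S'$ is a stable cutset of $G-v$. If $v \in A$, then $S$ remains stable in $G-v$, and $V(G-v)\setminus S = (A\setminus\{v\}) \dunion B$ is a consistent partition because $|A|\geq 2$ ensures $A\setminus\{v\} \neq \emptyset$ and no new edges appear. The case $v \in B$ is symmetric.

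The only possible obstacle is the emptiness issue, and it is precisely resolved by the preliminary claim $|A|,|B|\geq 2$. No use is made of Reduction Rules~\ref{rrule:disconnected:cutvertex} or~\ref{rrule:clique} beyond ensuring that Reduction Rule~\ref{rrule:stableneighborhood} being non-applicable is a meaningful condition (in the sense that $G$ is then guaranteed to have at least three vertices), so the argument is short and essentially a case distinction on the location of $v$.
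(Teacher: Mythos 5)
Your proposal is correct and follows essentially the same argument as the paper: the case $v\in S$ is immediate, and for $v\in A$ (or $B$) the non-applicability of Reduction Rule~\ref{rrule:stableneighborhood} is used to guarantee that the part containing $v$ has another vertex, so the partition stays consistent after removing $v$. The only cosmetic difference is that you establish $|A|,|B|\geq 2$ upfront for every consistent partition, whereas the paper derives the same contradiction only in the case $A=\{v\}$; the underlying observation is identical.
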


\begin{proof}
    Let $v\in V(G)$, let $S$ be a stable cutset of $G$, and let $V\setminus S=A\dunion B$ be a consistent partition of $G-S$. If $v\in S$ then $S\setminus\{v\}$ is a stable cutset of $G-v$ with consistent partition $V(G-v)\setminus S= A\dunion B$. Else, w.l.o.g.\ $v\in A$. If $A=\{v\}$ then $N(v)\subseteq S$ is a stable set, contradicting the assumption that Reduction Rule~\ref{rrule:stableneighborhood} does not apply to $G$. Thus $A\supsetneq\{v\}$ but then $S$ is a stable cutset of $G-v$ with consistent partition $V(G-v)\setminus S=(A\setminus\{v\})\dunion B$.
\end{proof}

Something similar can be done for the case of deleting a clique $C$ with $|C|\geq 2$ from a graph $G$. Crucially, any stable cutset $S$ of $G$ contains at most one vertex of $C$ while the rest $C\setminus S$ must be part of a single connected component of $G-S$. For a clique $C$, we say that a stable cutset $S$ is \emph{$C$-simple} if $S=N(C)$ or if $S=N(C\setminus \{v\})$ for some vertex $v\in C$. For a given clique $C$, it can be tested efficiently if there is any $C$-simple stable cutset and, if so, we may return yes. (There is probably no efficient way of doing this for all cliques of $G$, so it has to be done ``on demand'' when a suitable clique is found by a reduction rule.) Deleting any clique $C$ without $C$-simple stable cutset cannot change the answer from yes to no (analogous to Lemma~\ref{lemma:deletevertex:nofalsenegative}).

\begin{lemma}\label{lemma:deleteclique:nofalsenegative}
    Let $G$ be a graph and let $C\subseteq V(G)$ be a clique in $G$ without $C$-simple stable cutset. If $G$ has a stable cutset then so does $G-C$.
\end{lemma}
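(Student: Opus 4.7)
The plan is to mimic the proof of Lemma~\ref{lemma:deletevertex:nofalsenegative} but with a careful case analysis that distinguishes exactly the two failure modes captured by the notion of a $C$-simple stable cutset.

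First I would fix an arbitrary stable cutset $S$ of $G$ with consistent partition $V(G)\setminus S = A\dunion B$, and make two structural observations. Since $C$ is a clique and $S$ is stable, $|C\cap S|\le 1$. Since $C\setminus S$ is a (possibly empty) clique in $G-S$, it is connected, so it lies entirely in $A$ or entirely in $B$; by symmetry assume $C\setminus S\subseteq A$. This reduces everything to two cases depending on whether $C\cap S=\emptyset$ or $C\cap S=\{v\}$ for some vertex $v$, and in each case a subcase on whether the ``other side'' $A$ equals $C\setminus S$ or strictly contains it.

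In the ``strictly contains'' subcases the deletion step is straightforward: if $C\cap S=\emptyset$ and $A\supsetneq C$, then $S$ is still a stable cutset of $G-C$ witnessed by the consistent partition $(A\setminus C)\dunion B$, both parts being nonempty. If $C\cap S=\{v\}$ and $A\supsetneq C\setminus\{v\}$, then $S\setminus\{v\}$ is stable (as a subset of $S$) and is a stable cutset of $G-C$ with the consistent partition $(A\setminus C)\dunion B$ in $V(G)\setminus C$; again both parts are nonempty.

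The remaining subcases are where the claimed contradiction with the $C$-simpleness assumption is forced. If $C\cap S=\emptyset$ and $A=C$, then every neighbor of $C$ outside $C$ lies in $S$ (since $B$ has no edges to $A=C$), so $N(C)\subseteq S$ is a stable set; moreover $N(C)$ separates $C$ from $B\neq\emptyset$, making $N(C)$ a $C$-simple stable cutset. Symmetrically, if $C\cap S=\{v\}$ and $A=C\setminus\{v\}$, then $N(C\setminus\{v\})\subseteq S\cup\{v\}=S$ (the vertex $v$ is adjacent to all of $C\setminus\{v\}$ because $C$ is a clique), this set is stable, and it separates $C\setminus\{v\}$ from $B\neq\emptyset$, so it is a $C$-simple stable cutset. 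Both cases contradict the hypothesis on $C$, ruling them out and completing the proof.

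The only step that needs real care is verifying in the ``small $A$'' subcases that $N(C)$ respectively $N(C\setminus\{v\})$ is indeed contained in $S$ and that its removal actually disconnects $C$ (or $C\setminus\{v\}$) from a nonempty remainder; this is where the consistent-partition structure of $S$ and the fact that $|C|\ge 2$ in the second subcase are both used.
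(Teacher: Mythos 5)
Your proof is correct and follows essentially the same route as the paper's: split on whether $A$ strictly contains $C\setminus S$ (in which case $S\setminus C$ survives the deletion) or equals it (in which case a $C$-simple stable cutset is forced, contradicting the hypothesis). The only difference is cosmetic: the paper takes $S$ minimal and concludes $S$ itself equals $N(C)$ or $N(C\setminus\{v\})$, whereas you work with an arbitrary $S$ and exhibit $N(C)$ resp.\ $N(C\setminus\{v\})\subseteq S$ directly as the $C$-simple cutset, which is a perfectly valid (and arguably slightly cleaner) way to reach the same contradiction.
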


\begin{proof}
    Let $S$ be a minimal stable cutset of $G$, and let $V\setminus S=A\dunion B$ be a consistent partition of $G-S$. Since $C$ is a clique, $C\subseteq A\cup S$ or $C\subseteq B\cup S$. By symmetry, say $C\subseteq A\cup S$. If $A\setminus C\not=\emptyset$ then $S\setminus C$ is a stable cutset of $G-C$ and we are done. If not, $A=C\setminus S$, hence 
    by the minimality of $S$, $S=N(C)$ (if $C\cap S=\emptyset$) or $S=N(C\setminus\{v\})$ for the vertex $v$ in $C\cap S$. But then $S$ is a $C$-simple stable cutset, a contradiction.
\end{proof}

We can now present several further reduction rules that allow to delete vertices from the graph under certain conditions. They can be applied for any parameterization that is monotone under vertex deletions such as deletion distance to some hereditary graph class but not, e.g., for max leaf number.

\begin{rrule}\label{rrule:simplicial}
    Delete any simplicial vertex from $G$.
\end{rrule}

\begin{lemma}
    Reduction Rule~\ref{rrule:simplicial} is safe, i.e., if Reduction Rules~\ref{rrule:disconnected:cutvertex}, \ref{rrule:clique}, and \ref{rrule:stableneighborhood} do not apply to $G$ and if $v$ is a simplicial vertex in $G$ then $G$ has a stable cutset if and only if $G-v$ has a stable cutset.
\end{lemma}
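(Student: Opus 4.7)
The plan is to prove both directions of the equivalence. The ``only if'' direction is immediate from Lemma~\ref{lemma:deletevertex:nofalsenegative}: since Rules~\ref{rrule:disconnected:cutvertex}, \ref{rrule:clique}, \ref{rrule:stableneighborhood} do not apply, any stable cutset of $G$ yields one of $G-v$. So the substance is in the ``if'' direction: extend a stable cutset $S$ of $G-v$ (with consistent partition $V(G-v)\setminus S=A\dunion B$) to one of $G$ by placing the single vertex $v$ on an appropriate side.

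The key observation I would rely on is that $v$ being simplicial makes $N_G(v)$ a clique, while $S$ being stable forces $|N_G(v)\cap S|\leq 1$. In particular the set $N_G(v)\setminus S$ is a clique contained in $A\cup B$, and because the consistent partition has no edges between $A$ and $B$, this clique lies entirely in $A$ or entirely in $B$ (or is empty). Say without loss of generality $N_G(v)\setminus S\subseteq A$. Then I would add $v$ to $A$ and claim that $V(G)\setminus S=(A\cup\{v\})\dunion B$ is a consistent partition of $G-S$: there are no edges from $v$ to $B$ because $N_G(v)\cap B=\emptyset$, and both parts remain nonempty. Hence $S$ is already a stable cutset of $G$, with no modification needed.

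The only thing that needs a moment of care is the corner case where $v$ itself might be missing from $V(G-v)$ but appear in the extended picture, and verifying that the stability of $S$ is trivially unaffected (we added no vertex to $S$). I expect no real obstacle here — the argument is a short structural case distinction using simpliciality plus the clique-vs-stable parity with $S$. The most natural trap to avoid is trying to re-route the argument through Lemma~\ref{lemma:deleteclique:nofalsenegative} with $C=\{v\}$; that lemma goes the wrong direction and moreover the simple structure of a single simplicial vertex makes the direct argument above cleaner and avoids the $C$-simple stable cutset test.
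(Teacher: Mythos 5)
Your proof is correct and follows essentially the same route as the paper: the forward direction via Lemma~\ref{lemma:deletevertex:nofalsenegative}, and the converse by observing that the clique $N_G(v)$ meets $S$ in at most one vertex, so $N_G(v)\setminus S$ lies entirely in one part of the consistent partition and $v$ can be added to that part. The paper additionally invokes the non-applicability of Rule~\ref{rrule:stableneighborhood} to note $|N_G(v)|\geq 2$, but as your argument shows this is not actually needed for this direction.
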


\begin{proof}
    By Lemma~\ref{lemma:deletevertex:nofalsenegative}, if $G$ has a stable cutset then $G-v$ also has a stable cutset. It remains to check the converse.

    Let $S$ be a stable cutset of $G-v$ and let $V(G-v)\setminus S=A\dunion B$ be a consistent partition of $(G-v)-S$. Since Reduction Rule~\ref{rrule:stableneighborhood} does, in particular, not apply to $v$, its neighborhood is not a stable set, i.e., $C=N_G(v)$ is a clique of size at least two. The clique $C$ is also present in $G-v$ and at least one of $A$ and $B$ cannot contain vertices of $C$. W.l.o.g., $C\subseteq S\cup A$. As all neighbors of $v$ are in $C$, in particular there is no neighbor in $B$, this implies that $S$ is also a stable cutset of $G$ with consistent partition $V(G)\setminus S=(A\cup\{v\})\dunion B$.
\end{proof}

\begin{rrule}\label{rrule:comparableneighborhood}
    If $G$ has vertices $u$ and $v$ with $N(v)\subseteq N(u)$ then delete $v$. 
\end{rrule}

\begin{lemma}\label{lemma:rrule:comparableneighborhood:safe}
    Reduction Rule~\ref{rrule:comparableneighborhood} is safe, i.e., if Reduction Rules~\ref{rrule:disconnected:cutvertex}, \ref{rrule:clique}, and \ref{rrule:stableneighborhood} do not apply and $G$ has two vertices $u$ and $v$ with $N(v)\subseteq N(u)$ then $G$ has a stable cutset if and only if $G-v$ has a stable cutset.
\end{lemma}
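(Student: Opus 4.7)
The plan is to apply Lemma~\ref{lemma:deletevertex:nofalsenegative} for one direction and then extend a stable cutset of $G-v$ into one of $G$ for the converse, exploiting the containment $N(v)\subseteq N(u)$ to place $v$ in whichever ``slot'' already contains~$u$.

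By Lemma~\ref{lemma:deletevertex:nofalsenegative}, if $G$ has a stable cutset then so does $G-v$, so I only need to prove the converse. Take a stable cutset $S$ of $G-v$ with consistent partition $V(G-v)\setminus S=A\dunion B$. I will do a case distinction on where $u$ sits, and extend the cutset/partition to cover $v$.

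In the case $u\in S$: the stability of $S$ in $G-v$ says $N_{G-v}(u)\cap S=\emptyset$; since $v\notin V(G-v)$, this is equivalent to $N_G(u)\cap S=\emptyset$. Combined with $N(v)\subseteq N(u)$, this forces $N(v)\cap S=\emptyset$, so $S'=S\cup\{v\}$ is stable in $G$. The graph $G-S'$ equals $(G-v)-S$, which still has consistent partition $A\dunion B$, so $S'$ is a stable cutset of $G$. In the other case $u\in A\cup B$, w.l.o.g.\ $u\in A$. Consistency of $A,B$ in $G-v$ gives $N(u)\cap B=\emptyset$ in $G$, hence $N(v)\cap B=\emptyset$. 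Then $S$ itself is a stable cutset of $G$ with consistent partition $V(G)\setminus S=(A\cup\{v\})\dunion B$, where both parts are nonempty since $u\in A$ and $B\neq\emptyset$ already.

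There is no real obstacle here; the only thing to be careful about is distinguishing $N_G$ from $N_{G-v}$ when arguing stability, but since $v\notin S$ in both cases, the two coincide on $S$. No appeal to Reduction Rules~\ref{rrule:disconnected:cutvertex}--\ref{rrule:stableneighborhood} is actually needed for the converse direction; they only enter via Lemma~\ref{lemma:deletevertex:nofalsenegative}.
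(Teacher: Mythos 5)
Your proof is correct and follows essentially the same route as the paper's: one direction via Lemma~\ref{lemma:deletevertex:nofalsenegative}, and for the converse a case distinction on whether $u$ lies in $S$, $A$, or $B$, using $N(v)\subseteq N(u)$ to place $v$ accordingly. The only cosmetic difference is that the paper first notes $u$ and $v$ are non-adjacent to identify $N_G(u)$ with $N_{G-v}(u)$, whereas you sidestep this by observing that $v$ lies in none of $S$, $A$, $B$; both are fine.
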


\begin{proof}
    By Lemma~\ref{lemma:deletevertex:nofalsenegative}, if $G$ has a stable cutset then $G-v$ also has a stable cutset. It remains to check the converse.
    
    Let $S$ be a stable cutset of $G-v$ and let $V(G-v)\setminus S=A\dunion B$ be a consistent partition of $(G-v)-S$. Observe that $u$ and $v$ are not adjacent, as otherwise $u\in N_G(v)\subseteq N_G(u)$; thus $N_G(v)\subseteq N_G(u)=N_{G-v}(u)$. If $u\in S$ then $N_{G-v}(u)\subseteq A\cup B$ as $S$ is a stable set. It follows that $N_G(v)\subseteq N_{G-v}(u)\subseteq A\cup B$. Thus, $S\cup\{v\}$ is a stable cutset of $G$ with consistent partition $V(G)\setminus (S\cup\{v\})=A\dunion B$. If, instead, $u\in A$ then $N_{G-v}(u)\subseteq S\cup A$. It follows that $N_G(v)\subseteq N_{G-v}(u)\subseteq S\cup A$. Thus, $S$ is a stable cutset of $G$ with consistent partition $V(G)\setminus S=(A\cup\{v\})\dunion B$. The case that $u\in B$ is symmetric.
\end{proof}

Amongst others, Reduction Rule~\ref{rrule:comparableneighborhood} strongly restricts the size of stable modules of $G$, i.e., of modules that are also stable sets.

\begin{lemma}
    If Reduction Rules~\ref{rrule:disconnected:cutvertex}, \ref{rrule:clique}, \ref{rrule:stableneighborhood}, and \ref{rrule:comparableneighborhood} do not apply to $G$ then each module of $G$ that is a stable set has size at most one.
\end{lemma}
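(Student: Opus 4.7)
The plan is a short contrapositive argument using the definition of module together with Reduction Rule~\ref{rrule:comparableneighborhood}. Suppose, for contradiction, that $G$ contains a stable module $M$ with $|M|\geq 2$, and pick two distinct vertices $u,v\in M$.

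First I would unpack what the assumptions on $M$ give us for the neighborhoods of $u$ and $v$. Since $M$ is a module, $N(u)\setminus M = N(v)\setminus M$. Since $M$ is a stable set, no vertex of $M$ has a neighbor inside $M$, so $N(u)\cap M = N(v)\cap M = \emptyset$, which means $N(u)=N(u)\setminus M$ and $N(v)=N(v)\setminus M$. Combining these two observations yields $N(u)=N(v)$, and in particular $N(v)\subseteq N(u)$.

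But then $u,v$ form a pair on which Reduction Rule~\ref{rrule:comparableneighborhood} is applicable, contradicting our assumption that this rule cannot be applied to $G$. (As a sanity check, the safety argument of Lemma~\ref{lemma:rrule:comparableneighborhood:safe} implicitly required $u$ and $v$ to be non-adjacent, which is automatic here because $M$ is a stable set.) Hence no stable module of size at least two can exist, proving the lemma.

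There is essentially no obstacle: the proof is a direct one-line consequence of the definitions once one notices that in a stable module, \emph{all} neighbors of a vertex lie outside the module, so being a module forces equal neighborhoods. I would keep the proof to a few sentences along the lines above.
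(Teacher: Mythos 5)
Your proposal is correct and follows exactly the paper's argument: in a stable module all neighbors lie outside the module, so the module condition forces $N(u)=N(v)$, making Reduction Rule~\ref{rrule:comparableneighborhood} applicable — a contradiction. No differences worth noting.
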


\begin{proof}
    Let $M$ be a module of $G$ that is also a stable set, and let $u$ and $v$ be two different vertices of $M$. By definition of a module we have $N(u)\setminus M=N(v)\setminus M$. Since $M$ is a stable set, it follows that $N(u)=N(u)\setminus M=N(v)\setminus M=N(v)$. But then Reduction Rule~\ref{rrule:comparableneighborhood} would apply to $u$ and $v$; a contradiction.
\end{proof}

For clique modules, i.e., modules that are also cliques, the following reduction rule can be applied.

\begin{rrule}\label{rrule:cliquemodule}
    If $Q$ is a clique module of $G$ with at least three vertices then delete any vertex $v\in Q$ from $G$.
\end{rrule}

\begin{lemma}\label{lemma:rrule:cliquemodule:safe}
    Reduction Rule~\ref{rrule:cliquemodule} is safe, i.e., if Reduction Rules~\ref{rrule:disconnected:cutvertex}, \ref{rrule:clique}, and \ref{rrule:stableneighborhood} do not apply and $G$ has a clique module $Q$ of size at least three then $G$ has a stable cutset if and only if $G-v$ has stable cutset, for each $v\in Q$.
\end{lemma}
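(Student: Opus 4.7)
The plan is to handle the two directions separately. The forward direction ($G$ has a stable cutset implies $G-v$ has one) is immediate from Lemma~\ref{lemma:deletevertex:nofalsenegative}, since Rules~\ref{rrule:disconnected:cutvertex}, \ref{rrule:clique}, \ref{rrule:stableneighborhood} are assumed not to apply. The substantive part is the converse: starting from a stable cutset $S$ of $G-v$ with consistent partition $V(G-v)\setminus S = A \dunion B$, I aim to show that $S$ itself is a stable cutset of $G$, with $v$ placed on a suitable side.

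The first step is to identify which side $v$ should go to, using the hypothesis $|Q| \geq 3$. Since $Q$ is a clique and $S$ is a stable set, at most one vertex of the clique $Q \setminus \{v\}$ (which has size at least $2$) can lie in $S$. Hence there exists some $u \in (Q \setminus \{v\}) \setminus S$; by symmetry assume $u \in A$. Because $Q \setminus \{v\}$ is a clique of size at least two and $A,B$ are disconnected in $G-S$, every vertex of $(Q \setminus \{v\}) \setminus S$ must actually lie in $A$, so $Q \setminus \{v\} \subseteq S \cup A$.

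Next, I invoke the module property of $Q$: since $u, v \in Q$, we have $N_G(v) \setminus Q = N_G(u) \setminus Q$. Because $v \in Q$ and $u \neq v$, the set $N_G(u) \setminus Q$ coincides with $N_{G-v}(u) \setminus Q$, and since $u \in A$ and $S$ is a cutset of $G-v$, this set is contained in $S \cup A$. Combining with the previous step, $N_G(v) = (Q \setminus \{v\}) \cup (N_G(v)\setminus Q) \subseteq S \cup A$, so placing $v$ into the $A$-side yields the consistent partition $V(G)\setminus S = (A \cup \{v\}) \dunion B$ of $G-S$. Both sides are nonempty (as $A,B\neq\emptyset$ already), and $S$ remains stable, establishing the claim.

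The main obstacle, and the reason the rule requires $|Q| \geq 3$ rather than $|Q| \geq 2$, is Step~1: with $|Q|=2$ the single vertex $u = Q \setminus \{v\}$ could sit in $S$, leaving no information about which side of $A \dunion B$ the vertex $v$ should join; its external neighbors (shared with $u$) might split across $A$ and $B$, so neither extension of the partition would be consistent. The condition $|Q| \geq 3$ is precisely what guarantees that at least one neighbor of $v$ inside $Q$ falls on a definite side of the partition, which is then propagated to all external neighbors via the module property.
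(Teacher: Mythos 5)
Your proof is correct and follows essentially the same route as the paper: both find a vertex $u\in Q\setminus\{v\}$ outside $S$ (using that $Q\setminus\{v\}$ is a clique of size at least two), place it w.l.o.g.\ in $A$, and transfer $N_{G-v}(u)\subseteq S\cup A$ to $v$ via the clique-module (true-twin) property to conclude that $(A\cup\{v\})\dunion B$ is a consistent partition. The paper phrases the last step more compactly as $N_G[v]=N_G[u]=\{v\}\cup N_{G-v}[u]$, whereas you split $N_G(v)$ into its parts inside and outside $Q$; the content is the same.
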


\begin{proof}
    By Lemma~\ref{lemma:deletevertex:nofalsenegative}, if $G$ has a stable cutset then $G-v$ has a stable cutset. It remains to check the converse.

    Let $S$ be a stable cutset of $G-v$ and let $V(G-v)\setminus S=A\dunion B$ be a consistent partition.
    As $Q\setminus\{v\}$ is a clique of size at least two in $G-v$, there is at least one vertex $u\in Q\setminus \{v\}$ that is not in $S$. W.l.o.g., $u\in A$, hence $N_{G-v}(u)\subseteq A\cup S$. It follows that $S$ is a stable cutset for $G$ with consistent partition $V(G)\setminus S=(A\cup\{v\})\dunion B$: We have $N_G[v]=N_G[u]=\{v\}\cup N_{G-v}[u]\subseteq (A\cup\{v\})\cup S$.
\end{proof}

\begin{observation}\label{observation:rrule:cliquemodule:efficient}
    If $Q$ is a clique module of $G$ then each $Q'\subseteq Q$ is also a clique module. In particular, if $|Q|\geq 3$ then there is a clique module $Q'\subseteq Q$ of size exactly three to which Reduction Rule~\ref{rrule:cliquemodule} can be applied. Thus, exhaustively checking triplets of vertices for being a clique module (and possibly deleting one of them) suffices to efficiently implement Reduction Rule~\ref{rrule:cliquemodule}. Afterwards, each clique module has size at most two. (A more efficient routine can be obtained via linear-time modular decomposition.)
\end{observation}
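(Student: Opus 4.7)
The plan is to verify the four claims of the observation in sequence; all of them are routine once the module definition is unpacked, so the main task is to spell out the inclusions carefully.

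First, I would show that for any clique module $Q$ of $G$ and any nonempty $Q'\subseteq Q$, the set $Q'$ is again a clique module. That $Q'$ is a clique is immediate from $Q'\subseteq Q$. For the module property, fix $u,v\in Q'$ and compute
\[
N(u)\setminus Q' \;=\; (N(u)\setminus Q)\,\cup\,\bigl(N(u)\cap (Q\setminus Q')\bigr).
\]
Since $Q$ is a clique containing $u$, we have $N(u)\cap Q = Q\setminus\{u\}$, and because $u\in Q'$ we get $N(u)\cap(Q\setminus Q') = Q\setminus Q'$. Thus $N(u)\setminus Q' = (N(u)\setminus Q)\cup (Q\setminus Q')$, and the same identity holds for $v$. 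Since $Q$ is a module we have $N(u)\setminus Q = N(v)\setminus Q$, so $N(u)\setminus Q' = N(v)\setminus Q'$, as required.

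Second, the size-three claim follows by taking any three vertices $v_1,v_2,v_3$ from a clique module $Q$ with $|Q|\geq 3$ and applying the previous paragraph with $Q'=\{v_1,v_2,v_3\}$; Reduction Rule~\ref{rrule:cliquemodule} is applicable to such a $Q'$. For the algorithmic claim I would describe the implementation as follows: enumerate all unordered triples of vertices of $G$; for each triple $\{v_1,v_2,v_3\}$, test in constant time (given adjacency-matrix access) whether it is a clique and whether $N(v_i)\setminus\{v_1,v_2,v_3\}$ is the same for $i=1,2,3$; if so, delete one of the three vertices and restart. Each deletion strictly decreases $|V(G)|$, so the overall procedure runs in polynomial time.

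Finally, for the last claim I would argue by contrapositive: if after exhaustive application some clique module $Q$ of size at least three remained, then by the first claim every triple $Q'\subseteq Q$ would be a clique module of size exactly three, and the triple-enumeration step would have fired on $Q'$, contradicting exhaustion. I do not anticipate any real obstacle; the only subtle point is keeping track of which vertices belong to $Q\setminus\{u\}$ versus $Q\setminus Q'$ in the first computation, and making sure the algorithmic description only claims what is needed (a polynomial-time, not necessarily linear-time, implementation), with the parenthetical remark about modular decomposition left as a pointer rather than something to be proved.
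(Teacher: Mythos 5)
Your proposal is correct and matches the intended justification: the paper states this as an observation without a separate proof, and your argument (subsets of a clique module are clique modules via the identity $N(u)\setminus Q'=(N(u)\setminus Q)\cup(Q\setminus Q')$, then triple enumeration plus a contrapositive for exhaustion) is exactly the routine verification being left implicit. The only quibble is that comparing $N(v_i)\setminus\{v_1,v_2,v_3\}$ across the three vertices takes linear rather than constant time per triple, but this does not affect the polynomial-time conclusion.
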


\section{Polynomial kernelizations for Stable Cutset}\label{section:upperbounds}

\subsection{Vertex cover number and twin cover number}

We first give a kernelization with a cubic number of vertices for \stablecutset parameterized by the size of a given vertex cover. The approach is similar to a kernelization for \probname{$3$-Coloring} parameterized by vertex cover~\cite{JansenK13}. We then generalize the result to work also for the smaller parameter twin-cover.

\introduceparameterizedproblem{\stablecutsetvc}{A graph $G$ and a vertex cover $X$ of $G$.}{$|X|$.}{Does $G$ have a stable cutset?}

We use Reduction Rules~\ref{rrule:disconnected:cutvertex}, \ref{rrule:clique}, and~\ref{rrule:stableneighborhood} as well as the following reduction rule.

\begin{rrule}\label{rrule:vertexcover:marking}
   If $|V(G)\setminus X|>|X|^3$, then, starting with all vertices as unmarked, mark (isolated) vertices in $G-X$ according to the following condition:
    \begin{itemize}
        \item For each $(x_1,x_2,x_3)\in X^3$ mark a vertex $p$ that is adjacent to $x_1$, $x_2$, and~$x_3$.
    \end{itemize}
    Delete any unmarked vertex $v\in V(G)\setminus X$, i.e., return the instance $(G-v,X)$.
\end{rrule}

\begin{observation}
    If $|V(G)\setminus X|>|X|^3$ then there will always be an unmarked vertex, so each application will lead to the deletion of a vertex. (A more efficient implementation would of course delete all unmarked vertices in one go.)
\end{observation}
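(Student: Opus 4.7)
The plan is a one-line counting argument. The marking procedure in Reduction Rule~\ref{rrule:vertexcover:marking} iterates over all triples $(x_1,x_2,x_3) \in X^3$, and for each such triple it marks at most one vertex of $V(G) \setminus X$. Therefore the total number of marked vertices is at most $|X^3| = |X|^3$.

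The first step is to make this upper bound explicit: set $M \subseteq V(G)\setminus X$ to be the set of marked vertices produced by one execution of the procedure and note that the function sending each marking step to the vertex it marks (if any) is defined on a set of size $|X|^3$, hence $|M| \leq |X|^3$. The second step is to invoke the hypothesis $|V(G) \setminus X| > |X|^3 \geq |M|$, which forces $V(G) \setminus X \setminus M \neq \emptyset$, i.e., at least one unmarked vertex exists. That vertex is then eligible for deletion by the rule, proving that each application indeed removes a vertex.

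There is no real obstacle here: the statement is a pigeonhole consequence of the definition of the marking step, and does not depend on any structural property of $G$, of the vertex cover $X$, or of stable cutsets. The only mild subtlety worth mentioning in the write-up is that the bound $|X|^3$ is loose (for instance, if two different triples require the same neighborhood pattern the same vertex may be marked twice, and triples with repeated coordinates correspond to vertices with only one or two neighbors among $\{x_1,x_2,x_3\}$), but this only strengthens the inequality $|M| \leq |X|^3$ and is not needed for the conclusion. The parenthetical remark about a more efficient implementation (deleting all unmarked vertices at once) follows immediately, since safeness of the rule for a single unmarked vertex extends by induction to any sequence of such deletions as long as the hypothesis on the size of $V(G)\setminus X$ continues to hold, i.e., down to the kernel bound $|V(G)\setminus X| \leq |X|^3$.
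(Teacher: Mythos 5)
Your pigeonhole argument is exactly the reasoning the paper intends: the rule marks at most one vertex per triple in $X^3$, so at most $|X|^3$ vertices are marked, and the hypothesis $|V(G)\setminus X|>|X|^3$ guarantees an unmarked vertex remains. The paper states this as an observation without an explicit proof, and your write-up is a correct and faithful elaboration of that same counting argument.
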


We now prove that Reduction Rule~\ref{rrule:vertexcover:marking} is safe. Part of the proof implicitly handles vertices of $V(G)\setminus X$ that have degree at most two, as they cannot have neighbors in $S\cap X$, $A\cap X$, and $B\cap X$ for any partition $V(G)=S\dunion A\dunion B$.

\begin{lemma}
    Reduction Rule~\ref{rrule:vertexcover:marking} is safe.
\end{lemma}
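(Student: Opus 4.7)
The plan is to establish both implications separately. One direction---if $G$ has a stable cutset then so does $G-v$---will follow immediately from Lemma~\ref{lemma:deletevertex:nofalsenegative}, because Reduction Rules~\ref{rrule:disconnected:cutvertex}--\ref{rrule:stableneighborhood} are assumed not to apply. So essentially the entire work sits in the converse: lifting a stable cutset of $G-v$ to one of $G$, where $v \in V(G)\setminus X$ is the unmarked vertex being deleted.

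For that direction, I would fix a stable cutset $S$ of $G-v$ together with a consistent partition $V(G-v)\setminus S = A \dunion B$. Because $X$ is a vertex cover, $V(G)\setminus X$ is a stable set, so $N(v) \subseteq X$. The key claim I would isolate and prove is that at least one of $N(v)\cap S$, $N(v)\cap A$, $N(v)\cap B$ is empty. Once this claim is in hand, I can turn $S$ into a stable cutset of $G$ by placing $v$ into the empty-intersection part: take $S\cup\{v\}$ with partition $A \dunion B$ when $N(v)\cap S=\emptyset$; take $S$ with partition $(A\cup\{v\})\dunion B$ when $N(v)\cap B=\emptyset$; and symmetrically in the third case. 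In each case $A$ and $B$ remain nonempty, so I obtain a valid stable cutset of~$G$.

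The real content, and the main obstacle, is proving the claim, and this is exactly where the marking is used. Arguing by contradiction I would pick representatives $x_S \in N(v)\cap S$, $x_A\in N(v)\cap A$, and $x_B\in N(v)\cap B$. The vertex $v$ is itself adjacent to the triple $(x_S, x_A, x_B) \in X^3$, so the marking rule commits to marking \emph{some} vertex adjacent to all three; since $v$ is unmarked by assumption, that marked vertex must be some $p \in V(G)\setminus X$ distinct from $v$. Now $p$ lies in exactly one of $S$, $A$, $B$ in the partition of $V(G-v)$, and I would check that each of the three possibilities gives an immediate contradiction: $p \in S$ is adjacent to $x_S \in S$, violating stability of $S$; $p \in A$ is adjacent to $x_B \in B$, violating the consistent partition; $p \in B$ is adjacent to $x_A \in A$, likewise.

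The hardest part is just getting the marking bookkeeping right so that the ``other witness'' $p\neq v$ is guaranteed: the triple $(x_S,x_A,x_B)$ forces the marking rule to select somebody in $V(G)\setminus X$, and $v$ not being that somebody supplies the distinct $p$ that drives the three-case contradiction. After that, the placement step for $v$ is entirely routine.
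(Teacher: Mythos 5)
Your proposal is correct and follows essentially the same argument as the paper: one direction via Lemma~\ref{lemma:deletevertex:nofalsenegative}, and the converse by observing that $v$ can be placed into $S$, $A$, or $B$ unless it has neighbors in all three, in which case the marked witness for the triple $(x_S,x_A,x_B)\in X^3$ yields the same three-way contradiction. No gaps.
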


\begin{proof}
    Let $v$ be the vertex that is being deleted. By Lemma~\ref{lemma:deletevertex:nofalsenegative}, if $G$ has a stable cutset then so does $G-v$. Thus, if $(G,X)$ is yes then so is $(G-v,X)$. It remains to show the converse.

    Assume that $(G-v,X)$ is yes and let $S$ be a stable cutset of $G-v$ with consistent partition $V(G-v)\setminus S=A\dunion B$. If $v$ has no neighbor in $S$ then $S\cup\{v\}$ is a stable cutset of $G$ with same consistent partition. Similarly, if $v$ has no neighbor in $A$ (respectively in $B$) then $S$ is a stable cutset of $G$ with consistent partition $V(G)\setminus S=A\dunion (B\cup\{v\})$ (resp.~$V(G)\setminus S=(A\cup\{v\})\dunion B$). It remains to consider the case that all three types of neighbors exist. 

    Note that $N_G(v)\subseteq X$ and let $x_1\in N_G(v)\cap S$, let $x_2\in N_G(v)\cap A$, and let $x_3\in N_G(v)\cap B$. As $v$ was unmarked, there is a marked vertex $u\in V(G-v)\setminus X$ that is adjacent to $x_1\in S$, $x_2\in A$, and $x_3\in B$. Because of $x_1$, we must have $u\notin S$. Similarly, because of $x_2$ and $x_3$ we must have $u\notin B$ and $u\notin A$. This is a contradiction since $u$ is present in $G-v$ and $V(G-v)\setminus S=A\dunion B$ is a consistent partition. It follows that this final case cannot happen, so $(G,X)$ is yes.
\end{proof}

It is now easy to complete our first kernelization result.

\begin{theorem}\label{theorem:stablecutsetvc:kernelization}
    \stablecutsetvc admits a polynomial kernelization to equivalent instances with at most $\Oh(|X|^3)$ vertices.
\end{theorem}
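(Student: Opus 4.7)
The plan is to simply combine the reduction rules already established. First I would observe that Reduction Rules~\ref{rrule:disconnected:cutvertex}, \ref{rrule:clique}, and \ref{rrule:stableneighborhood} can each be checked in polynomial time (checking connectedness and cut vertices, testing whether $G$ is a clique, and for each vertex testing whether its neighborhood is independent), and each either correctly answers the instance or leaves the graph unchanged. So I apply them once; if any triggers, I return a trivial equivalent constant-size instance of the corresponding answer.

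Next, assuming none of Rules~\ref{rrule:disconnected:cutvertex}--\ref{rrule:stableneighborhood} apply, I apply Reduction Rule~\ref{rrule:vertexcover:marking}. Concretely, I iterate over all triples $(x_1,x_2,x_3)\in X^3$ and, if there exists a vertex in $V(G)\setminus X$ adjacent to all three (note that $V(G)\setminus X$ is independent since $X$ is a vertex cover, so such a vertex is isolated in $G-X$), mark one such vertex. Then I delete every unmarked vertex of $V(G)\setminus X$ in one sweep. This runs in time polynomial in $|V(G)|$ and $|X|$. Correctness of each deletion follows from the safeness lemma already proved for Reduction Rule~\ref{rrule:vertexcover:marking}, and a simple induction over the deletions yields that the final instance $(G',X)$ is equivalent to $(G,X)$.

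For the size bound, after this step the number of marked vertices is at most $|X|^3$ (one per triple), so $|V(G')\setminus X|\leq |X|^3$ and consequently $|V(G')|\leq |X|+|X|^3=\Oh(|X|^3)$. The total encoding size is polynomial in $|X|$ since the graph is simple. The parameter $|X|$ is unchanged, so this is a polynomial kernelization of the claimed size.

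I do not expect any real obstacle here: safeness of the marking rule has already been proved, and the counting argument for the size bound is immediate from the construction. The only thing to be slightly careful about is confirming that Rules~\ref{rrule:disconnected:cutvertex}--\ref{rrule:stableneighborhood} are indeed a prerequisite (their non-applicability is used in the safeness proof of Rule~\ref{rrule:vertexcover:marking} via Lemma~\ref{lemma:deletevertex:nofalsenegative}), and that applying them once at the start suffices because Rule~\ref{rrule:vertexcover:marking} either terminates the instance or only removes vertices from $V(G)\setminus X$ without creating new connectivity issues that would be exploited later in the kernelization chain.
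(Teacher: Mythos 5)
Your overall plan is the same as the paper's: combine Reduction Rules~\ref{rrule:disconnected:cutvertex}--\ref{rrule:stableneighborhood} with the marking rule~\ref{rrule:vertexcover:marking} and count at most $|X|^3$ surviving vertices outside $X$. The counting and running-time parts are fine. The gap is in your equivalence argument. The safeness lemma for Rule~\ref{rrule:vertexcover:marking} (via Lemma~\ref{lemma:deletevertex:nofalsenegative}) has as a \emph{precondition} that Rules~\ref{rrule:disconnected:cutvertex}--\ref{rrule:stableneighborhood} do not apply to the graph from which the vertex is being deleted. You check this only once, on the original $G$, and then delete many vertices in one sweep, invoking ``a simple induction over the deletions.'' That induction needs the precondition to hold for every intermediate graph $G-\{v_1,\dots,v_i\}$, and it can fail: deleting vertices of $V(G)\setminus X$ can create a cut vertex, and it can turn $N(x)$ into a stable set for some $x\in X$ (e.g.\ when the only edges inside $N(x)$ were incident to deleted vertices). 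So your closing justification --- that the deletions create ``no new connectivity issues'' --- is simply false, and the chain of equivalences as you set it up does not close.

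The paper avoids this by applying the rules \emph{exhaustively and in order}: after each single deletion, Rules~\ref{rrule:disconnected:cutvertex}--\ref{rrule:stableneighborhood} are re-checked first, and if one of them now triggers, the algorithm answers (correctly, since the current instance is equivalent to the original). That is the cleanest repair, and you should adopt it. Alternatively, your one-sweep deletion can be justified directly, but it needs an argument you did not give: if $S$ is a stable cutset of $G$ with consistent partition $A\dunion B$ and the entire side $A$ were contained in the deleted set $U\subseteq V(G)\setminus X$, then every $v\in A$ would have $N_G(v)\subseteq X$ and $N_G(v)\cap A=\emptyset$, hence $N_G(v)\subseteq S$, so Rule~\ref{rrule:stableneighborhood} would already apply to $G$ --- a contradiction; thus $A\setminus U,B\setminus U\neq\emptyset$ and $S\setminus U$ is a stable cutset of $G-U$. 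The converse direction of the batch deletion goes through as you describe, since the marked witnesses survive and the vertices of $U$ can be placed independently. Either fix is short, but as written the proof does not stand.
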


\begin{proof}
    Clearly, Reduction Rules~\ref{rrule:disconnected:cutvertex}, \ref{rrule:clique}, \ref{rrule:stableneighborhood}, and \ref{rrule:vertexcover:marking} can be exhaustively applied in polynomial time. We also know that they are safe, so the result will be the correct answer (respectively a dummy yes- or no-instance of constant size) or an equivalent instance $(G',X')$, where $|X'|\leq|X|$, to which none of the reduction rules applies. It remains to upper bound the size of $(G',X')$. Since Reduction Rule~\ref{rrule:vertexcover:marking} does not apply to $(G',X')$, we have $|V(G')\setminus X'|\leq |X'|^3\leq |X|^3$, hence $|V(G')|=\Oh(|X|^3)$.
\end{proof}

We now generalize this result to \stablecutset parameterized by the size of a twin cover. Clearly, each vertex cover of $G$ is also a twin cover.

\introduceparameterizedproblem{\stablecutsettc}{A graph $G$ and a twin cover $X$ of $G$.}{$|X|$.}{Does $G$ have a stable cutset?}

By definition, if $X$ is a twin cover of $G$ then $N_G[u]=N_G[v]$ holds for each edge $\{u,v\}$ of $G-X$. Thus each connected component of $G-X$ is a clique module of $G$. Thus, adding Reduction Rule~\ref{rrule:cliquemodule} reduces all connected components of $G-X$ (i.e., the clique modules of $G$) to at most two vertices each. It can be easily verified that Reduction Rule~\ref{rrule:vertexcover:marking} is safe also in this context and will reduce the number of isolated vertices in $G-X$ to at most $|X|^3$ many. It remains to reduce the number of connected components of size two of $G-X$. Recall that these are clique modules, i.e., if $C=\{u,v\}$ is a connected component of $G-X$ then both vertices have the same neighborhood in $X$ and no further neighbors.

\begin{rrule}\label{rrule:twincover:marking}
    If $G-X$ has more than $|X|^2$ cliques of size two, then, starting with all cliques of $G-X$ as unmarked, mark cliques of size two in $G-X$ according to the following condition:
    \begin{itemize}
        \item For each $(x_1,x_2)\in X^2$ mark a clique of $G-X$ of size two that is fully adjacent to both $x_1$ and $x_2$.
    \end{itemize}
    Pick any unmarked clique $C$ of $G-X$ of size two. If there is a $C$-simple stable cutset $S$ then return yes. Else delete $C$, i.e., return $(G-C,X)$.
\end{rrule}

\begin{lemma}
    Reduction Rule~\ref{rrule:twincover:marking} is safe, i.e., if Reduction Rules~\ref{rrule:disconnected:cutvertex}, \ref{rrule:clique}, \ref{rrule:stableneighborhood}, and \ref{rrule:cliquemodule} do not apply to $G$, and if $C$ is an unmarked clique of size two in $G-X$ then
    \begin{itemize}
        \item if there is a $C$-simple stable cutset $S$ then $(G,X)$ is yes, and
        \item else $(G,X)$ is yes if and only if $(G-C,X)$ is yes.
    \end{itemize}
\end{lemma}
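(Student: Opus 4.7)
The first bullet is immediate: a $C$-simple stable cutset is, by definition, a stable cutset of $G$, so $(G,X)$ is yes.

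For the forward direction of the second bullet, assume that $(G,X)$ is yes but no $C$-simple stable cutset exists. Then Lemma~\ref{lemma:deleteclique:nofalsenegative} applies directly to the clique $C$ and guarantees that $G-C$ has a stable cutset, so $(G-C,X)$ is yes.

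For the backward direction, let $C=\{u,v\}$ and note that, since $X$ is a twin cover and $C$ is a connected component of $G-X$ of size two (Reduction Rule~\ref{rrule:cliquemodule} has removed larger clique modules in $G-X$), the vertices $u$ and $v$ are true twins with a common neighborhood $N_X := N_G(u)\cap X = N_G(v)\cap X$, and have no further neighbors in $G$. Let $S$ be a stable cutset of $G-C$ with consistent partition $V(G-C)\setminus S = A\dunion B$. I would case-split on where the neighbors $N_X$ of $C$ lie. If $N_X\cap B=\emptyset$, then placing both $u$ and $v$ into $A$ keeps $S$ stable (it is unchanged) and yields the consistent partition $V(G)\setminus S=(A\cup\{u,v\})\dunion B$ of $G-S$: no edges from $\{u,v\}$ to $B$ exist because $N_G(u)\cup N_G(v)\subseteq X$ and $N_X\cap B=\emptyset$. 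The symmetric case $N_X\cap A=\emptyset$ is handled identically by placing $u,v$ into $B$.

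The one remaining case, which is the main obstacle, is $N_X\cap A\neq\emptyset$ and $N_X\cap B\neq\emptyset$; this is where the marking step is used. Pick $x_A\in N_X\cap A$ and $x_B\in N_X\cap B$. Since $C$ is fully adjacent to the pair $(x_A,x_B)\in X^2$ but $C$ is unmarked, the marking procedure must have marked some \emph{other} clique $C'=\{u',v'\}$ of $G-X$ of size two that is also fully adjacent to both $x_A$ and $x_B$. This clique $C'$ survives in $G-C$, and both of its vertices are adjacent to $x_A\in A$ and $x_B\in B$. Because $u'$ and $v'$ are adjacent and $S$ is stable, at least one of them, say $u'$, lies in $A\cup B$. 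But then $u'$ has a neighbor in $A$ and a neighbor in $B$, contradicting that $V(G-C)\setminus S=A\dunion B$ is a consistent partition. Hence this final case is impossible, and one of the two extensions above always works, proving $(G,X)$ is yes. This completes the safeness argument.
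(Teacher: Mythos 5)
Your proof is correct and follows essentially the same route as the paper's: the first bullet is immediate, the forward direction invokes Lemma~\ref{lemma:deleteclique:nofalsenegative}, and the backward direction case-splits on where $N_G(C)\subseteq X$ meets $A$ and $B$, using the marked clique $C'$ to derive a contradiction in the mixed case (the paper phrases that contradiction as forcing $C'\subseteq S$, you phrase it as a non-$S$ vertex of $C'$ having neighbors in both $A$ and $B$; these are the same observation).
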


\begin{proof}
    Clearly, if there is a $C$-simple stable cutset $S$ then $(G,X)$ is yes. Otherwise, by Lemma~\ref{lemma:deleteclique:nofalsenegative}, if $G$ has a stable cutset then $G-C$ has a stable cutset, i.e., if $(G,X)$ is yes then so is $(G-C,X)$. It remains to check the converse.

    Assume that $(G-C,X)$ is yes and let $S$ be a stable cutset of $G-C$ with consistent partition $V(G-C)\setminus S=A\dunion B$. If $N_G(C)\cap B=\emptyset$ then $S$ is also a stable cutset of $G$ with consistent partition $V(G)\setminus S=(A\cup C)\dunion B$. Symmetrically, if $N_G(C)\cap A=\emptyset$ then $S$ is a stable cutset of $G$ with consistent partition $V(G)\setminus S=A\dunion (B\cup C)$. It remains to consider the case that $C$ has neighbors in both $A$ and $B$. 
    
    In this case, as $N_G(C)\subseteq X$, there are $x_1\in A\cap X$ and $x_2\in B\cap X$ that are fully adjacent to $C$ (as $C$ is also a clique module of $G$). Since $C$ remained unmarked, a different connected component $C'$ of $G-X$ of size two was marked for being fully adjacent to $x_1$ and $x_2$ and its vertices are present in $G-C$. But then this full adjacency to $x_1\in A$ and $x_2\in B$ requires $C'\subseteq S$; a contradiction to $S$ being a stable set.
\end{proof}

Again, it is now easy to wrap up the kernelization result.

\begin{theorem}\label{theorem:stablecutsettc:kernelization}
    \stablecutsettc admits a polynomial kernelization to equivalent instances with at most $\Oh(|X|^3)$ vertices.
\end{theorem}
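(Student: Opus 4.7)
The plan is to combine Reduction Rules~\ref{rrule:disconnected:cutvertex}--\ref{rrule:stableneighborhood}, \ref{rrule:cliquemodule}, \ref{rrule:vertexcover:marking}, and \ref{rrule:twincover:marking}, applied exhaustively in the order they were introduced, and then argue that the resulting instance $(G',X')$ has $\Oh(|X|^3)$ vertices. Each rule runs in polynomial time (Reduction Rule~\ref{rrule:cliquemodule} via Observation~\ref{observation:rrule:cliquemodule:efficient}), and each is safe on the structures produced by the previous ones, so the overall kernelization is a polynomial-time algorithm that returns either the correct yes/no answer or an equivalent instance with twin cover at most $|X|$.

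First, I would apply Rules~\ref{rrule:disconnected:cutvertex}--\ref{rrule:stableneighborhood} to dispatch all trivial cases. Next, I would use the fact that, by the definition of a twin cover, every edge $\{u,v\}$ of $G-X$ satisfies $N_G[u]=N_G[v]$; a straightforward argument (transitivity of equal closed neighborhood on each component) then shows that every connected component of $G-X$ is a clique module of $G$. Exhaustively applying Rule~\ref{rrule:cliquemodule} therefore shrinks every component of $G-X$ to at most two vertices.

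Then I would invoke Rule~\ref{rrule:vertexcover:marking}, which was originally stated for a vertex cover but whose safeness proof only used that the deleted vertex $v$ satisfies $N_G(v)\subseteq X$. The rule picks unmarked \emph{isolated} vertices of $G-X$, and for such vertices this inclusion still holds when $X$ is only a twin cover, so the safeness proof transfers verbatim. Exhaustive application leaves at most $|X|^3$ isolated vertices in $G-X$. Finally, Rule~\ref{rrule:twincover:marking} caps the number of size-two components of $G-X$ at $|X|^2$. Summing, the kernel has at most $|X|+|X|^3+2|X|^2=\Oh(|X|^3)$ vertices.

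The only real obstacle is making sure Rule~\ref{rrule:vertexcover:marking} is still safe in the twin-cover setting; as noted, this follows because the rule only deletes vertices whose neighborhood sits entirely in $X$, which remains true for isolated vertices of $G-X$ regardless of whether $X$ is a vertex cover or only a twin cover. Everything else is bookkeeping on the number of vertices surviving each rule, yielding the claimed $\Oh(|X|^3)$ bound.
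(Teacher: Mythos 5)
Your proposal matches the paper's proof essentially step for step: shrink the clique-module components of $G-X$ to size at most two via Reduction Rule~\ref{rrule:cliquemodule}, bound the isolated vertices by $|X|^3$ via Reduction Rule~\ref{rrule:vertexcover:marking} (whose safeness indeed transfers since isolated vertices of $G-X$ have all neighbors in $X$), and bound the size-two components by $|X|^2$ via Reduction Rule~\ref{rrule:twincover:marking}. The argument and the resulting $|X|+|X|^3+2|X|^2=\Oh(|X|^3)$ bound are exactly as in the paper.
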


\begin{proof}
    The kernelization proceeds by exhaustive application of Reduction Rules~\ref{rrule:disconnected:cutvertex}, \ref{rrule:clique}, \ref{rrule:stableneighborhood}, \ref{rrule:cliquemodule}, \ref{rrule:vertexcover:marking}, and \ref{rrule:twincover:marking}. Clearly this can be done polynomial time (see also Observation~\ref{observation:rrule:cliquemodule:efficient}). We showed that all reduction rules are safe, which implies that they either return the correct answer (or a dummy yes- or no-instance of constant size) or an equivalent instance $(G',X')$, with $|X'|\leq|X|$, to which none of the reduction rules applies. Since Reduction Rule~\ref{rrule:vertexcover:marking} does not apply, there are at most $|X'|^3$ isolated vertices in $G'-X'$. Since Reduction Rule~\ref{rrule:twincover:marking} does not apply, there are at most $|X'|^2$ connected components of size two in $G'-X'$. We already discussed that there are no further components. Hence, $|V(G'-X')|\leq |X'|^3+2|X'|^2=\Oh(|X|^3)$.
\end{proof}

\subsection{Distance to cluster}

In this section we show a polynomial kernelization for \stablecutset parameterized by the size of a given modulator $X$ to the class of cluster graphs.

\introduceparameterizedproblem{\stablecutsetmodcluster}{A graph $G=(V,E)$ and $X\subseteq V$ s.t.\ $G-X$ is a cluster graph.}{$|X|.$}{Does $G$ have a stable cutset?}

We use Reduction Rules~\ref{rrule:disconnected:cutvertex}, \ref{rrule:clique}, \ref{rrule:stableneighborhood}, and \ref{rrule:simplicial}, as well as the following rules.

\begin{rrule}\label{rrule:cluster:simplecomponent}
    If any (clique) component $C$ of $G-X$ has a $C$-simple stable cutset then return yes.
\end{rrule}

Safeness of this rule is immediate. Since we only have to test this for connected components of $G-X$ it is also efficient (unlike testing for all cliques of a graph).
The following rule allows us to shrink connected (clique) components of $G-X$. Note that, as a special case, it deletes vertices in $G-X$ that have no neighbor in $X$, but these are already removed by Rule~\ref{rrule:simplicial} for being simplicial as all their neigbors are in their clique component in $G-X$.

\begin{rrule}\label{rrule:shrinkcliques}
Let $C$ be a (clique) component of $G-X$ and let $v\in C$. If each neighbor $x\in X$ of $v$ is adjacent to at least two other vertices of $C$ then delete $v$, i.e., return the instance $(G-v,X)$.
\end{rrule}

\begin{lemma}
    Reduction Rule~\ref{rrule:shrinkcliques} is safe.
\end{lemma}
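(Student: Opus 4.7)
The plan is to prove both directions separately. The forward direction (``$G$ has a stable cutset implies $G-v$ has one'') is immediate from Lemma~\ref{lemma:deletevertex:nofalsenegative}, since Reduction Rules~\ref{rrule:disconnected:cutvertex}, \ref{rrule:clique}, and \ref{rrule:stableneighborhood} are assumed not to apply (otherwise the earlier rules would have fired first). So the real work is the converse: start from a stable cutset $S$ of $G-v$ with consistent partition $V(G-v)\setminus S=A\dunion B$ and show how to obtain a stable cutset of $G$ by placing $v$ into $A$ (up to symmetry).

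The first observation I would record is that $C\setminus\{v\}$ is a clique in $G-v$, so it meets $S$ in at most one vertex, and the remainder lies entirely in one of $A$, $B$ because a clique in $(G-v)-S$ is connected and thus contained in a single component. Without loss of generality, assume $(C\setminus\{v\})\setminus S\subseteq A$; in particular $(C\setminus\{v\})\cap B=\emptyset$. The goal then reduces to showing that $N_G(v)\cap B=\emptyset$: once that is established, $S$ is a stable set in $G$ (no new vertices are added to $S$) and the partition $V(G)\setminus S=(A\cup\{v\})\dunion B$ is consistent in $G$, since the only new possible crossing edges would go from $v$ to $B$.

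The main (and only interesting) step is to rule out that $v$ has a neighbor in $B$. Since $N_G(v)\subseteq X\cup(C\setminus\{v\})$ and $(C\setminus\{v\})\cap B=\emptyset$, any neighbor of $v$ in $B$ would have to be some $x\in X\cap B$. By the hypothesis of the rule, such $x$ is adjacent to at least two vertices $c_1,c_2\in C\setminus\{v\}$. Because $x\in B$ and $(A,B)$ has no crossing edges in $(G-v)-S$, each $c_i$ must lie in $B\cup S$; combined with $(C\setminus\{v\})\cap B=\emptyset$ this forces $c_1,c_2\in(C\setminus\{v\})\cap S$. But $C\setminus\{v\}$ is a clique and $S$ is stable, so $|(C\setminus\{v\})\cap S|\leq 1$, a contradiction. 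Hence $N_G(v)\cap B=\emptyset$, as required.

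The main obstacle, and the reason the hypothesis ``at least two other vertices of $C$'' is needed, is precisely the pigeonhole step in the last paragraph: a neighbor $x\in X$ of $v$ that touches $C$ in only a single vertex could validly sit in $B$ with its one $C$-neighbor in $S$, breaking the argument. Requiring two such $C$-neighbors forces a collision inside the stable set $S$. I would also double-check the nonemptiness conditions ($A\cup\{v\}$ and $B$ remain nonempty), which is trivial since $B\neq\emptyset$ is inherited from the original consistent partition, and wrap up by noting that $S$ is unchanged and thus still stable in $G$.
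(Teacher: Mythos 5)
Your proof is correct and follows essentially the same route as the paper: locate the clique $C\setminus\{v\}$ (minus at most one $S$-vertex) in one side, say $A$, and use the hypothesis that every $X$-neighbor $x$ of $v$ has two neighbors in $C\setminus\{v\}$ to force $x\in A\cup S$. The paper argues this directly (one of the two $C$-neighbors of $x$ must avoid $S$ and hence lies in $A$), while you argue by contradiction from $x\in B$; these are the same pigeonhole step.
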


\begin{proof}
By Lemma~\ref{lemma:deletevertex:nofalsenegative}, if $(G,X)$ is yes then so is $(G-v,X)$.

Assume now that $(G-v,X)$ is yes, let $S$ be a stable cutset of $G-v$, and let $A\dunion B$ be a consistent partition of $(G-v)-S$. By requirement of the reduction rule, $C\setminus\{v\}$ is a clique of size at least two in $G-v$ and, hence, at most one of its vertices can be in $S$. Clearly, the nonempty rest of $C\setminus\{v\}$ must be fully contained in $A\cup S$ or in $B\cup S$; w.l.o.g., it is a subset of $A\cup S$. Let $x\in X$ be any neighbor of $v$ (in $G$). By requirement of the reduction rule, there are two other vertices $u,w\in C\setminus\{v\}$ that are both adjacent to $x$ (in both $G$ and $G-v$). As $u$ and $w$ are adjacent, w.l.o.g., we have $u\notin S$ and hence $u\in A$. It follows directly that $x\in A\cup S$ and, by extension, that all neighbors of $v$ (in $G$) are contained in $A\cup S$. Therefore, $S$ is also a stable cutset of $G$ as witnessed by consistent partition $(A\cup\{v\})\dunion B$ for $G-S$.
\end{proof}

\begin{observation}
    If Reduction Rule~\ref{rrule:shrinkcliques} does not apply to $(G,X)$ then each (clique) component $C$ of $G-X$ has size at most $2|X|$: Greedily pick a set $C'\subseteq C$ by adding any vertex $v\in C\setminus C'$ that has a neighbor $x\in X$ that is not yet shared by two or more vertices in $C'$. Clearly, each $x\in X$ can play this role at most twice, which implies that we terminate with some $C'\subseteq C$ of size at most $2|X|$. If $|C|>2|X|$ then the rule can be applied to any vertex $v\in C\setminus C'\neq\emptyset$.
\end{observation}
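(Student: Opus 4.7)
The plan is to prove the size bound by a direct counting argument (essentially dual to the greedy one sketched by the authors). I would define for each vertex of $C$ a ``certificate'' in $X$ witnessing that Rule~\ref{rrule:shrinkcliques} cannot be applied to it, and then show by a simple pigeonhole that each $x\in X$ can serve as certificate for at most two vertices of $C$.

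First, I would note that because Rule~\ref{rrule:simplicial} has already been declared to be exhaustively applied in this section, no simplicial vertex remains in $G$. For any $v\in C$, the neighbors of $v$ inside $G-X$ all lie in the clique component $C$ and therefore form a clique. Hence, if $v$ had no neighbor in $X$, then $N(v)\subseteq C\setminus\{v\}$ would be a clique, making $v$ simplicial, a contradiction. So every $v\in C$ has at least one neighbor in $X$. Now, since Rule~\ref{rrule:shrinkcliques} does not apply to $v$, it must be the case that at least one of its neighbors $x\in N(v)\cap X$ is adjacent to at most one vertex of $C\setminus\{v\}$; fix one such neighbor and call it $\phi(v)$. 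This yields a function $\phi\colon C\to X$.

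The key step is to show that each $x\in X$ has $|\phi^{-1}(x)|\leq 2$. Suppose, for contradiction, that three distinct vertices $v_1,v_2,v_3\in C$ all satisfy $\phi(v_i)=x$. Then $x$ is adjacent to each of $v_1,v_2,v_3$. But then $x$ has the two distinct neighbors $v_2,v_3$ inside $C\setminus\{v_1\}$, which contradicts the defining property of $\phi(v_1)=x$ (that $x$ has at most one neighbor in $C\setminus\{v_1\}$). Hence no $x$ is chosen more than twice.

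Summing over $x\in X$ gives $|C|=\sum_{x\in X}|\phi^{-1}(x)|\leq 2|X|$, as claimed. I do not expect a real obstacle here; the only point that requires a little care is handling the quantifier in Rule~\ref{rrule:shrinkcliques} correctly, namely distinguishing ``for each neighbor in $X$'' (trigger condition) from ``some neighbor in $X$'' (negation used to define $\phi$), and remembering that ``other'' in the rule excludes $v$ itself, so the accounting must be done with respect to $C\setminus\{v\}$ rather than $C$.
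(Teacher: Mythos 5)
Your proof is correct and rests on exactly the same counting idea as the paper's: each $x\in X$ can serve as a witness for at most two vertices of $C$. The paper phrases this via a greedy construction of $C'$ and the contrapositive (if $|C|>2|X|$ the rule applies to some leftover vertex), whereas you phrase it directly via the certificate map $\phi$ and pigeonhole — a presentational difference only, and your handling of the quantifiers and of ``other vertices of $C$'' is accurate.
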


The following more involved marking-based reduction rule allows us to reduce the number of connected (clique) components of $(G-X)$.

\begin{rrule}\label{rrule:markingcliques}
If $G-X$ has more than $|X|^3+3|X|^4$ (clique) components, then, starting with all vertices as unmarked, mark vertices of $G-X$ according to the following conditions:
\begin{itemize}
    \item For each $(x_1,x_2,x_3)\in X^3$ mark any vertex $p$ that is adjacent to $x_1$, $x_2$, $x_3$.
    \item For each $(x_1,x_2,x_3,x_4)\in X^4$ mark any two vertices $p$ and $q$ that are adjacent, and with $p$ adjacent to $x_1$ and $x_2$ and $q$ adjacent to $x_3$ and $x_4$.
    \item For each $(x_1,x_2,x_3,x_4)\in X^4$ mark any three vertices $p$, $q$, and $r$ that are pairwise adjacent, and with $p$ adjacent to $x_1$ and $x_2$, $q$ adjacent to $x_3$, and $r$ adjacent to $x_4$.
    \item For each $(x_1,x_2,x_3,x_4)\in X^4$ mark any four vertices $p$, $q$, $r$, and $s$ that are pairwise adjacent, and with $p$ adjacent to $x_1$, $q$ adjacent to $x_2$, $r$ adjacent to $x_3$, and $s$ adjacent to $x_4$.
\end{itemize}
Delete any (clique) component $C$ of $G-X$ whose vertices are all unmarked, i.e., return the instance $(G-C,X)$.
\end{rrule}

\begin{lemma}
    Reduction Rule~\ref{rrule:markingcliques} is safe.
\end{lemma}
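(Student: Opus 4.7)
The plan is to prove both directions of the equivalence. The forward direction, that $(G,X)$ yes implies $(G-C,X)$ yes, is immediate: Reduction Rule~\ref{rrule:cluster:simplecomponent} has already been applied exhaustively, so no clique component of $G-X$ admits a $C$-simple stable cutset, and Lemma~\ref{lemma:deleteclique:nofalsenegative} then says that any stable cutset of $G$ induces one of $G-C$.

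For the converse, I would fix a stable cutset $S'$ of $G-C$ with consistent partition $A' \dunion B'$ of $(G-C)-S'$, and abbreviate $N_A = N_G(C) \cap A'$, $N_B = N_G(C) \cap B'$, $N_S = N_G(C) \cap S'$. Since $C$ is a full component of $G-X$, we have $N_G(C)\setminus C\subseteq X$. I would then try, in turn, to extend $S'$ to a stable cutset of $G$ by one of the following placements: (i)~if $N_B=\emptyset$, keep $S=S'$ and put $C$ entirely on the $A$-side; (ii)~symmetric if $N_A=\emptyset$; (iii)~pick $v^*\in C$ with $N_G(v^*)\cap S'=\emptyset$ that is the unique vertex of $C$ with a neighbor in $N_A$, set $S=S'\cup\{v^*\}$, and put $C\setminus\{v^*\}$ on the $B$-side; (iv)~symmetric with the roles of $A$ and $B$ swapped. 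Each of (i)--(iv) is routine to verify once its conditions hold, using that $C$ is a clique and that the only external neighbors of $C$ lie in~$X$.

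If none of (i)--(iv) succeeds, I would derive a contradiction using the marking. The failure of (i)--(iv) extracts small witness subsets of $C$ with specific adjacencies into $S'\cup A'\cup B'\subseteq X$, matching exactly the four patterns of Reduction Rule~\ref{rrule:markingcliques}: a single vertex of $C$ with a neighbor in each of $S',A',B'$ (Type~1); an edge in $C$ with a $2{+}2$ split of four $X$-neighbors (Type~2); a triangle with a $2{+}1{+}1$ split (Type~3); or a $K_4$ with a $1{+}1{+}1{+}1$ split (Type~4). Since $C$ is unmarked, for the corresponding tuple $(x_1,\dots)\in X^k$ a matching witness $p,q,r,s$ is marked in some other component $C'$, hence present in $G-C$. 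I would then trace: each of $p,q,r,s$ lies in $S'\cup A'\cup B'$ and its prescribed adjacencies to the $x_i$ pin down which side it lies on (e.g.\ being adjacent to something in $S'$ forces it out of $S'$; being adjacent to something in $A'$ forces it into $A'\cup S'$). Because $p,q,r,s$ form a clique in $C'$, the pairwise adjacencies then force either two of them into $S'$ (violating that $S'$ is stable) or two onto opposite sides of $A'$ and $B'$ (violating consistency), contradicting the choice of $S'$.

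The main obstacle is the bookkeeping in that last step. I would organize the case analysis by the sizes of $C_A=\{v\in C:N(v)\cap N_A\neq\emptyset\}$ and $C_B$, and by whether the candidate $A$- or $B$-dominator happens to have a neighbor in $S'$; each regime selects one of Types~1--4 to contradict. A subtle point is handling repeated entries in tuples $(x_1,\dots,x_4)$ when, for instance, $|N_A|=1$ but $|C_B|\geq 2$; this is handled automatically because Reduction Rule~\ref{rrule:markingcliques} quantifies over all of $X^3$ and $X^4$, so a tuple with fewer distinct entries is treated by padding. Finally, the marked witnesses are guaranteed to lie in $V(G-C)$ precisely because $C$ itself contains no marked vertex, so every marked $p,q,r,s$ lies in a different component and the placement argument above applies to it inside the cutset $(S',A',B')$.
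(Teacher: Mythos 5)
Your proposal is correct and follows essentially the same strategy as the paper's proof: the forward direction via Lemma~\ref{lemma:deleteclique:nofalsenegative} (using that Rule~\ref{rrule:cluster:simplecomponent} no longer applies), and for the converse a case analysis that either places $C$ consistently relative to $(S',A',B')$ or extracts one of the four marking patterns and derives a contradiction from the marked witness clique surviving in $G-C$. Your placements (i)--(iv) and the four failure patterns correspond exactly to the paper's partition of $C$ into $C_{ASB}, C_{AS}, C_{SB}, C_{AB}, C_A, C_B, C_S$ and its case-by-case treatment, so the bookkeeping you defer is precisely what the paper carries out.
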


\begin{proof}
Let $C$ be the (clique) component of $G-X$ that is being deleted. As Reduction Rule~\ref{rrule:cluster:simplecomponent} did not apply, by Lemma~\ref{lemma:deleteclique:nofalsenegative}, if $G$ has a stable cutset then so does $G-C$, i.e., if $(G,X)$ is yes then so is $(G-C,X)$.

Now assume that $(G-C,X)$ is yes, let $S$ be a stable cutset of $G-C$, and let $A\dunion B$ be a consistent partition of $(G-C)-S$. We show via case distinction that we can always find a stable cutset $S'\supseteq S$ of $G$ with a consistent partition $A'\cup B'$ such that $A'\supseteq A\neq\emptyset$ and $B'\supseteq B\neq\emptyset$. Intuitively, we show that the vertices of $C$ can be placed consistently with solution $S$ and partition $A\cup B$.

Such a solution $S'$ can be trivially found in two (symmetric) cases: If $N(C)\subseteq A\cup S$ then let $S'=S$, $A'=A\cup C$, and $B'=B$. Symmetrically, if $N(C)\subseteq S\cup B$ then let $S'=S$, $A'=A$, and $B'=B\cup C$. 

In the remaining cases, $C$ contains at least one vertex adjacent to $A$ and at least one vertex adjacent to $B$ (possibly the same vertex). It will be convenient to partition $C$ according to neighborhood in $A$, $S$, and $B$, namely $C=C_{ASB}\dunion C_{AS}\dunion C_{SB}\dunion C_{AB}\dunion C_A\dunion C_B\dunion C_S$. E.g., each vertex in $C_{AS}$ has at least one neighbor in $A$, at least one neighbor in $S$, and no neighbor in $B$. Note that $N(C)\subseteq X$ as $C$ is a connected component of $G-X$, and that each vertex in $C$ has a neighbor in $X$ as Reduction Rule~\ref{rrule:simplicial} does not apply.

If $C_{ASB}\neq\emptyset$, let $p\in C_{ASB}$ and let $x_1\in A\cap X$, $x_2\in S\cap X$, and $x_3\in B\cap X$ be corresponding neighbors of $p$. Since $p$ is unmarked while being eligible to be marked for $(x_1,x_2,x_3)\in X^3$ there must be a vertex $p'$ that was marked for this triplet. As all vertices of $C$ were unmarked, $p'\notin C$ and, hence, $p'$ is a vertex of $G-C$. This, however, yields a contradiction since due to having neighbors in $A$, $S$, and $B$, vertex $p'$ cannot be in any of the three sets itself. It follows that $C_{ASB}=\emptyset$.

If $|C_{AB}|\geq 2$, let $p\in C_{AB}$ with corresponding neighbors $x_1\in A\cap X$ and $x_2\in B\cap X$ and let $q\in C_{AB}\setminus\{p\}$ with corresponding neighbors $x_3\in A\cap X$ and $x_4\in B\cap X$. Then $p$ and $q$ are eligible for being marked for the tuple $(x_1,x_2,x_3,x_4)\in X^4$ but remained unmarked (along with the rest of $C$). It follows that there are vertices $p'$ and $q'$ that were marked for $(x_1,x_2,x_3,x_4)$ and that they are present in $G-C$. Again, this yields a contradiction: Being adjacent, at most one of $p'$ and $q'$ can be in $S$, but then the other is not in $S$ but also adjacent to $A$ and $B$. It follows that $|C_{AB}|\leq 1$.

If $|C_{AB}|=1$, let $p\in C_{AB}$ with corresponding neighbors $x_1\in A\cap X$ and $x_2\in B\cap X$ (and no neighbor in $S$); we note that $C_{AB}=\{p\}$. (1) If no vertex in $C\setminus\{p\}$ has a neighbor in $B$ then we find the stable cutset $S'=S\cup\{p\}$ with consistent partition $A'=A\cup(C\setminus\{p\})$ and $B'=B$; note that in this case for each $q\in C\setminus\{p\}$ we have $N(q)\subseteq C\cup A\cup S=A'\cup S'$. (2) Symmetrically, if no vertex in $C\setminus\{p\}$ has a neighbor in $A$ then we get solution $S'=S\cup\{p\}$ with consistent partition $A'=A$ and $B'= B\cup(C\setminus\{p\})$. (3) If neither subcase (1) nor (2) applies, then $C\setminus\{p\}$ contains a vertex $q$ with neighbor $x_3\in A$ and a vertex $r$ with neighbor $x_4\in B$; clearly, $x_3\in A\cap X$ and $x_4\in B\cap X$ as $N(C)\subseteq X$. Note that $q\neq r$ as otherwise there would be a second vertex adjacent to $A$ and $B$ while $C_{AB}=\{p\}$. Since $p,q,r$ are eligible to be marked for $(x_1,x_2,x_3,x_4)\in X^4$ but are unmarked (along with the rest of $C$), some other vertices $p'$, $q'$, and $r'$ were marked for having this adjacency to $x_1$ through $x_4$, and they are present in $G-C$. This yields a contradiction: Out of $p'$, $q'$, and $r'$ at most one vertex can be in $S$, always leaving an adjacent pair of vertices that are adjacent to vertices in $A$ and $B$. Thus, when $|C_{AB}|=1$ we find a solution for $(G,X)$ as the remaining case leads to a contradiction.

It remains to consider the case that $C_{ABS}=C_{AB}=\emptyset$, i.e., that $C=C_{AS}\cup C_{SB}\cup C_A\cup C_B\cup C_S$. We first consider the case that $C_{AS}\neq\emptyset$, and $C_{SB}\neq\emptyset$ is symmetric. Afterwards, only $C=C_A\cup C_B\cup C_S$ remains.

If $C_{AS}\neq\emptyset$ let $p\in C_{AS}$ with corresponding neighbors $x_1\in A\cap X$ and $x_2\in S\cap X$. (1) If also $C_{SB}\neq \emptyset$ then let $q\in C_{SB}$ with corresponding neighbors $x_3\in S\cap X$ and $x_4\in B\cap X$. As $p$ and $q$ (and the rest of $C$) are unmarked, we have marked some other adjacent vertices $p'$ and $q'$ with the same adjacency. These, however, are present in $G-C$ and yield a contradiction: Neither $p'$ nor $q'$ can be in $S$, but then they form a path between $A$ and $B$. Thus, if $C_{AS}\neq\emptyset$ then $C_{SB}=\emptyset$. (2) If $|C_B|\geq 2$ then let $q\in C_B$ with corresponding neighbor $x_3\in B\cap X$ and let $r\in C_B\setminus\{q\}$ with corresponding neighbor $x_4\in B\cap X$. Again, there are marked pairwise adjacent vertices $p'$, $q'$, and $r'$ with same adjacency that are present in $G-C$ and we get a contradiction: We have $p'\notin S$, by adjacency to $x_2\in S$, and only one of the adjacent vertices $q'$ and $r'$ can be in $S$, leaving a path from $A$ via $p'$ and via $q'$ or $r'$ to $B$. Thus, if $C_{AS}\neq\emptyset$ then $|C_B|\leq 1$. (3) In the remaining case we thus have $C=C_{AS}\cup C_A\cup C_B\cup C_S$ where $|C_B|\leq 1$ but then $|C_B|=1$ as we already dealt with the case that $C$ has no neighbors in $B$. Then letting $C_B=\{q\}$, we get the stable cutset $S'=S\cup\{q\}$ with consistent partition $A'=A\cup(C\setminus\{q\})$ and $B'=B$.

Symmetrically, if $C_{SB}\neq\emptyset$ then we can rule out $C_{AS}\neq\emptyset$ and $|C_A|\geq 2$. For the remaining case of $C=C_{SB}\cup C_A\cup C_B\cup C_S$ with $|C_A|=1$. Letting $C_A=\{q\}$ we get the stable cutset $S'=S\cup\{q\}$ with consistent partition $A'=A$ and $B'=B\cup(C\setminus\{q\})$.

We arrive at the case that $C=C_A\cup C_B\cup C_S$. If $|C_A|\leq 1$ then we find a stable cutset $S'=S\cup C_A$ with consistent partition $A'=A$ and $B'=B\cup C_B\cup C_S$. Symmetrically, if $|C_B|\leq 1$ then there is a stable cutset $S'=S\cup C_B$ with consistent partition $A'=A\cup C_A\cup C_S$ and $B'=B$. Otherwise, if $|C_A|\geq 2$ and $|C_B|\geq 2$ we can identify a contradiction: Let $p\in C_A$ with neighbor $x_1\in A\cap X$, let $q\in C_A\setminus\{p\}$ with neighbor $x_2\in A\cap X$, let $r\in C_B$ with neighbor $x_3\in B\cap X$, and let $s\in C_B\setminus\{r\}$ with neighbor $x_4\in B\cap X$. As these are unmarked but eligible for $(x_1,x_2,x_3,x_4)$ with this adjacency, we must have marked instead four pairwise adjacent vertices $p'$, $q'$, $r'$, and $s'$ with the same adjacency and these are present in $G-C$. Clearly, only one of the four vertices can be in $S$ and the remaining ones always yield a path from $A$ to $B$ avoiding $S$; a contradiction.

Each case was either showed to be impossible or we identified a stable cutset for $G$, showing that $(G,X)$ is yes. This completes the proof.
\end{proof}

We can now complete the kernelization for \stablecutsetmodcluster.

\begin{theorem}
\stablecutsetmodcluster admits a polynomial kernelization to equivalent instances with $\Oh(|X|^5)$ vertices.
\end{theorem}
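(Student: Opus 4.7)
The plan is to run all the reduction rules introduced for this parameterization exhaustively and then bound the size of the resulting instance. First I would apply Reduction Rules~\ref{rrule:disconnected:cutvertex}, \ref{rrule:clique}, \ref{rrule:stableneighborhood}, and \ref{rrule:simplicial} to handle trivial cases and to strip off simplicial vertices, together with Reduction Rule~\ref{rrule:cluster:simplecomponent} to detect $C$-simple stable cutsets for any clique component $C$ of $G-X$. Then I would iteratively apply Reduction Rule~\ref{rrule:shrinkcliques} to shrink the clique components of $G-X$, and finally Reduction Rule~\ref{rrule:markingcliques} to prune their number. Each rule either correctly decides the instance or strictly decreases $|V(G)|$ without enlarging the modulator (since only vertices of $G-X$ or whole clique components of $G-X$ are removed). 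Polynomial runtime follows from the efficient implementations indicated in the observations accompanying the rules, and correctness of the output is immediate from the safety lemmas proved earlier for each individual rule.

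For the size bound, suppose exhaustive application leaves an equivalent instance $(G',X')$ with $|X'|\leq|X|$. By the observation following Reduction Rule~\ref{rrule:shrinkcliques}, each clique component of $G'-X'$ contains at most $2|X'|$ vertices, for otherwise Reduction Rule~\ref{rrule:shrinkcliques} would still be applicable to some vertex of an oversized component. Because Reduction Rule~\ref{rrule:markingcliques} does not apply either, the number of clique components of $G'-X'$ is at most $|X'|^3+3|X'|^4$. Multiplying the two bounds yields
\[
|V(G')\setminus X'|\leq 2|X'|\cdot\bigl(|X'|^3+3|X'|^4\bigr)=\Oh(|X|^5),
\]
and adding the at most $|X|$ vertices of $X'$ preserves this bound, giving $|V(G')|=\Oh(|X|^5)$.

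At this point no real obstacle remains: the technical heart of the argument, namely the extensive case analysis establishing safety of Reduction Rule~\ref{rrule:markingcliques}, has already been carried out, so the proof of the theorem reduces to the routine bookkeeping above that combines the per-component size bound (from Reduction Rule~\ref{rrule:shrinkcliques}) with the component-count bound (from Reduction Rule~\ref{rrule:markingcliques}).
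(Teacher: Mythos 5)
Your proposal is correct and follows essentially the same route as the paper: exhaustive application of the same reduction rules, then combining the per-component bound of $2|X'|$ vertices (from Reduction Rule~\ref{rrule:shrinkcliques}) with the component-count bound of $|X'|^3+3|X'|^4$ (from Reduction Rule~\ref{rrule:markingcliques}) to obtain $\Oh(|X|^5)$ vertices. Nothing further is needed.
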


\begin{proof}
Given an instance $(G,X)$ we exhaustively apply Reduction Rules~\ref{rrule:disconnected:cutvertex}, \ref{rrule:clique}, \ref{rrule:stableneighborhood}, \ref{rrule:simplicial}, \ref{rrule:cluster:simplecomponent}, \ref{rrule:shrinkcliques}, and \ref{rrule:markingcliques}. We already know that the reduction rules are safe, and it is straightforward to implement the kernelization to run in polynomial time. When the kernelization terminates we receive the correct yes or no answer (or a dummy yes- or no-instance of constant size) or an equivalent instance $(G',X')$ with $X'\subseteq X$. It remains to bound the number of vertices of $G'$ in the latter case.

Since Rule~\ref{rrule:markingcliques} does not apply, the graph $G'-X'$ has at most $|X|^3+3|X|^4=\Oh(|X|^4)$ (clique) components. As Rule~\ref{rrule:shrinkcliques} does not apply each such clique has at most $2|X'|$ vertices. Since $|X'|\leq|X|$, we get a total of $\Oh(|X'|+|X'|\cdot |X'|^4)=\Oh(|X|^5)$ vertices.
\end{proof}

As a free corollary we get a vertex-linear kernelization for \stablecutset parameterized by the size of a modulator to a single clique.

\introduceparameterizedproblem{\stablecutsetmodclique}{A graph $G=(V,E)$ and $X\subseteq V$ such that $G-X$ is a clique.}{$|X|$.}{Does $G$ have a stable cutset?}

\begin{corollary}
    \stablecutsetmodclique has a polynomial kernelization to equivalent instances with $\Oh(|X|)$ vertices.
\end{corollary}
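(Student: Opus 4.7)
The plan is to observe that the case where $G-X$ is a single clique is a restricted special case of the modulator-to-cluster setting already handled in the preceding theorem, and that in this restricted setting the bound can be tightened from $\Oh(|X|^5)$ to $\Oh(|X|)$ without any additional reduction rules. I would simply invoke the kernelization for \stablecutsetmodcluster and then analyze the resulting instance more carefully given the stronger structural assumption.

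Concretely, first I would note that if $G-X$ is a clique, then $G-X$ is in particular a cluster graph with exactly one (clique) component $C=V(G)\setminus X$. Hence the kernelization from the previous theorem applies: exhaustively apply Reduction Rules~\ref{rrule:disconnected:cutvertex}, \ref{rrule:clique}, \ref{rrule:stableneighborhood}, \ref{rrule:simplicial}, \ref{rrule:cluster:simplecomponent}, \ref{rrule:shrinkcliques}, and \ref{rrule:markingcliques}. All rules preserve the property that $G-X$ is a single clique (Rules~\ref{rrule:simplicial} and~\ref{rrule:shrinkcliques} only delete vertices from $C$, which keeps it a clique, and Rule~\ref{rrule:markingcliques} cannot delete $C$ since it is the only component). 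If the procedure terminates with a constant-size yes- or no-instance we are done; otherwise we obtain an equivalent instance $(G',X')$ with $X'\subseteq X$ and $G'-X'$ a single clique $C'$ to which none of the rules applies.

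Second, I would bound $|V(G')|$. Since Reduction Rule~\ref{rrule:shrinkcliques} does not apply to $C'$, the observation immediately following that rule gives $|C'|\leq 2|X'|\leq 2|X|$. Therefore
\[
|V(G')|=|X'|+|C'|\leq |X|+2|X|=3|X|=\Oh(|X|),
\]
establishing the claimed linear vertex bound.

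There is essentially no obstacle here: the nontrivial work was already done in the analysis of Rules~\ref{rrule:shrinkcliques} and~\ref{rrule:markingcliques}. The only thing to double-check is that Rule~\ref{rrule:markingcliques} cannot accidentally change the structure in a way that matters, but since it only deletes entire (clique) components of $G-X$ and there is only one such component (which would contain all marked vertices in particular), the rule is inapplicable on non-trivial inputs and so does not interfere with the linear bound coming from Rule~\ref{rrule:shrinkcliques}.
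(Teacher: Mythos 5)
Your proposal is correct and matches the paper's own argument: invoke the \stablecutsetmodcluster kernelization, observe that the single clique component of $G-X$ is never split into further components (and Rule~\ref{rrule:markingcliques} never triggers with only one component), and apply the $2|X|$ bound from the observation after Rule~\ref{rrule:shrinkcliques} to get at most $3|X|$ vertices in total.
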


\begin{proof}
    We apply the kernelization for \stablecutsetmodcluster and observe that it never introduces additional components to $G-X$. Thus, the single clique in $G-X$ is reduced to at most $2|X|$ vertices, for a total of at most $3|X|$ vertices.
\end{proof}

\subsection{Distance to co-cluster}

In this section we show a polynomial kernelization for \stablecutset parameterized by the size of a modulator to the class of co-cluster graphs (the complements of disjoint unions of cliques).

\introduceparameterizedproblem{\stablecutsetmodcocluster}{A graph $G=(V,E)$ and $X\subseteq V$ s.t.\ $G-X$ is a co-cluster.}{$|X|$.}{Does $G$ have a stable cutset?}

We use Reduction Rules~\ref{rrule:disconnected:cutvertex}, \ref{rrule:clique}, and \ref{rrule:stableneighborhood} as well as the following rules. For each instance $(G,X)$, we have that $\overline{G-X}$ is a disjoint union of cliques, while $G-X$ is a join of stable sets. If $G-X$ is a (single) stable set, $X$ is also a vertex cover of $G$ and we can apply the kernelization for \stablecutsetvc. Else, the following reduction rule can be used to reduce the size of the stable sets of $G-X$.

\begin{rrule}\label{rrule:cocluster:shrinkindset}
    If $G-X$ is the join of at least two stable sets, 
    and if one such stable set $I$ has more than $|X|^2$ vertices then, starting with all vertices of $I$ as unmarked, mark vertices of $I$ according to the following condition:
    \begin{itemize}
        \item For each $(x_1,x_2)\in X^2$ mark any vertex $p\in I$ that is adjacent to $x_1$ and $x_2$.
    \end{itemize}
    Delete any unmarked vertex $v$ of $I$, i.e., return $(G-v,X)$.
\end{rrule}

\begin{lemma}
    Reduction Rule~\ref{rrule:cocluster:shrinkindset} is safe, i.e., if Reduction Rules~\ref{rrule:disconnected:cutvertex}, \ref{rrule:clique}, and \ref{rrule:stableneighborhood} do not apply, if $G-X$ is the join of at least two stable sets, and if $v$ remains unmarked during an application of Reduction Rule~\ref{rrule:cocluster:shrinkindset}, then $(G,X)$ is yes if and only if $(G-v,X)$ is yes.
\end{lemma}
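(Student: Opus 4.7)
The forward direction is immediate from Lemma~\ref{lemma:deletevertex:nofalsenegative}, which applies since Reduction Rules~\ref{rrule:disconnected:cutvertex}, \ref{rrule:clique}, and~\ref{rrule:stableneighborhood} do not.

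For the reverse direction, let $S$ be a stable cutset of $G-v$ with consistent partition $V(G-v)\setminus S=A\dunion B$, and write $W:=V(G-X)\setminus I$. Since $G-X$ is the join of $I$ with at least one other stable set, $v$ is adjacent to every vertex of $W$, so $N_G(v)=W\cup (N_G(v)\cap X)$. The plan is a case analysis on how $N_G(v)$ distributes over $A$, $B$, and $S$. If $v$ has no neighbor in $B$ (respectively $A$), then $S$ is a stable cutset of $G$ with consistent partition $(A\cup\{v\})\dunion B$ (respectively $A\dunion(B\cup\{v\})$). If $v$ has neighbors in both $A$ and $B$ but none in $S$, then $S\cup\{v\}$ is a stable cutset with the partition $A\dunion B$ unchanged.

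The remaining hard case is when $v$ has a neighbor in each of $A$, $B$, and $S$; the plan is to show this cannot occur, by combining two ingredients. First, any $W$-neighbor of $v$ in a given part severely restricts where $I$ can live: a $w\in W\cap A$ forces $I\subseteq A\cup S$ (every $u\in I$ is joined to $w$, so $u\in B$ would give an $A$-$B$ edge), symmetrically for $B$, and a $w\in W\cap S$ forces $I\cap S=\emptyset$ (else $u$-$w$ is an edge inside $S$). Second, because $v$ is unmarked, for every pair $(x_1,x_2)\in X^2$ of $X$-neighbors of $v$ there exists a distinct $v''\in I\setminus\{v\}$ that is marked and adjacent to both; this $v''$ is present in $G-v$, and its placement into $A$, $B$, or $S$ is forced by the locations of $x_1,x_2$ together with the stability of $S$ (for instance, if $x_1\in A$ and $x_2\in B$ then $v''\in S$; if $x_1\in A$ and $x_2\in S$ then $v''\in A$; etc.).

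Using these tools, I would enumerate sub-cases according to whether each representative neighbor of $v$ in $A$, $B$, $S$ lies in $X$ or in $W$, and exhibit a contradiction in each. The canonical sub-case is when all three lie in $X$, say $x_A,x_B,x_S$: the pairs $(x_A,x_B)$, $(x_A,x_S)$, $(x_B,x_S)$ force substitutes $v_1\in I\cap S$, $v_2\in I\cap A$, $v_3\in I\cap B$, and then the join-adjacencies of $v_2$ and $v_3$ to all of $W$ yield $W\subseteq (A\cup S)\cap(B\cup S)=S$, so $v_1\in S$ is adjacent via the $I$-$W$ join to any $w\in W\subseteq S$, contradicting the stability of $S$. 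Every sub-case with at least one $W$-representative reduces more quickly using the first ingredient: the join constraint pins $I$ to a single side of the partition (or excludes $I$ from $S$), and then a single marked substitute derived from the remaining $X$-neighbors lands in a position that either creates an edge crossing $A$-$B$ or an edge inside $S$. I expect the main obstacle to be the organization of this case analysis rather than any single deep step, since one must verify in each sub-case that the required substitutes are genuinely present in $G-v$ and that the combined constraints from the marking and from the join structure are internally consistent.
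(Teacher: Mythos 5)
Your proposal is correct and in substance is the same proof as the paper's: both arguments rest on exactly the two ingredients you isolate, namely (a) marked substitutes in $I\setminus\{v\}$ whose placement in $A$, $B$, or $S$ is forced by the locations of the two $X$-neighbors they were marked for, and (b) the full adjacency of $I$ to the other stable sets $W$ propagating placement constraints. The only difference is bookkeeping: the paper's top-level split is on whether $v$ has an $X$-neighbor in $S$ (its cases (i) and (ii)), whereas you split on $X$-versus-$W$ membership of representatives in all three parts. Your fully worked sub-case (all three representatives in $X$) is correct, and the sub-cases you only gesture at do close as claimed: for instance, if $W\cap A\neq\emptyset$ then $I\setminus\{v\}\subseteq A\cup S$, and each way of supplying $v$'s $B$- and $S$-neighbors (from $X\cap B$, $W\cap B$, $X\cap S$, or $W\cap S$) yields either a direct join edge violating consistency or a single marked substitute forced into $B$ or into $S\cap(I\setminus\{v\})$ against the established constraint; and if $W$ meets only $S$, then $(I\setminus\{v\})\cap S=\emptyset$ while the pair $(x_A,x_B)$ forces a substitute into $S$. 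Two small repairs for a polished write-up: ingredient (a) constrains $I\setminus\{v\}$ rather than $I$ (you are reasoning inside $G-v$), and you should record where you use that $W\neq\emptyset$ (guaranteed since $G-X$ is a join of at least two stable sets) and that $I\setminus\{v\}\neq\emptyset$.
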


\begin{proof}
    By Lemma~\ref{lemma:deletevertex:nofalsenegative} if $G$ has a stable cutset then so does $G-v$, i.e., if $(G,X)$ is yes then so is $(G-v,X)$. It remains to show the converse.

    Assume that $(G-v,X)$ is yes and let $S$ be a stable cutset of $G-v$ with consistent partition $V(G-v)\setminus S=A\dunion B$. If $N_G(v)\cap A=\emptyset$ then $S$ is a stable cutset of $G$ with consistent partition $V(G)\setminus S=A\dunion (B\cup\{v\})$. Symmetrically, if $N_G(v)\cap B=\emptyset$ then $S$ is a stable cutset of $G$ with consistent partition $V(G)\setminus S=(A\cup\{v\})\dunion B$. If $N_G(v)\cap S=\emptyset$ then $S\cup\{v\}$ is a stable cutset of $G$ with consistent partition $V(G)\setminus(S\cup\{v\})=A\dunion B$. It remains to consider the case that $v$ has neighbors in all three sets $S$, $A$, and $B$.

    Recall the assumption that $G-X$ is the join of at least two stable sets and let $J=V(G-X)\setminus I\neq\emptyset$ be the union of the stable sets other than $I$. Note that $I$ and $J$ are fully adjacent, and that $J\subseteq N_G(v)\subseteq X\cup J$. We make a case distinction into case (i) that $(N_G(v)\cap S)\cap X\neq\emptyset$ and case (ii) that $(N_G(v)\cap S)\cap X=\emptyset$, i.e., $N_G(v)\cap S\subseteq J$.

    We first consider case (i) that $(N_G(v)\cap S)\cap X\neq\emptyset$. Let $x_1\in (N_G(v)\cap S)\cap X$. Since $v$ was not marked for $(x_1,x_1)\in X^2$, there is a vertex $p\in I\setminus\{v\}$ that was marked for this pair (and is adjacent to $x_1$). As $x_1\in S$ we must have $p\notin S$. W.l.o.g., assume that $p\in A$. Since $p\in I$, it is fully adjacent to $J$, which implies that $J\cap B=\emptyset$.

    Since $v$ has a neighbor in $B$ and $N_G(v)\subseteq X\cup J$, it follows that $v$ has a neighbor $x_2\in X\cap B$. Since $v$ is unmarked but eligible for $(x_1,x_2)$, some other vertex $q\in I\setminus\{v\}$ was marked for the pair $(x_1,x_2)\in X^2$ and is adjacent to both $x_1$ and $x_2$. As $x_1\in S$ and $x_2\in B$, we have that $q\in B$ (and hence that $q\neq p$). Since $q\in I$, it is fully adjacent to $J$, which implies that $J\cap A=\emptyset$. But then $J\subseteq S$, i.e., it must be a single stable set (and $G-X$ is the join of the stable sets $I$ and $J$).

    Since $v$ has a neighbor in $A$ and $N_G(v)\subseteq X\cup J$, it now follows that $v$ has a neighbor $x_3\in X\cap A$. Since $v$ is unmarked but eligible for $(x_2,x_3)$, some other vertex $r\in I\setminus\{v\}$ was marked for the pair $(x_2,x_3)\in X^2$ and is adjacent to both $x_2$ and $x_3$. As $x_2\in B$ and $x_3\in A$, we have that $r\in S$. This however is a contradiction since $r$ is fully adjacent to $J\subseteq S$ and $J\neq\emptyset$. This shows that case (i) is impossible.

    We now consider case (ii) that $(N_G(v)\cap S)\cap X=\emptyset$, which implies that $N_G(v)\cap S\subseteq J$ (as $N_G(v)\subseteq X\cup J$). Since $v$ does have at least one neighbor in $S$, hence in $S\cap J$, it follows that $I\setminus\{v\}\subseteq A\cup B$ as $I$ is fully adjacent to $J$. Assume for a contradiction that $J\subseteq S$. Then $v$ must have a neighbor $x_1\in X\cap A$ and a neighbor $x_2\in X\cap B$ (as it has at least one neighbor each in $A$ and $B$, and $N_G(v)\subseteq X\cup J$). Since $v$ was not marked for $(x_1,x_2)\in X$, some other vertex $p\in I\setminus\{v\}$ was marked for this pair and is adjacent to $x_1$ and $x_2$. As $x_1\in A$ and $x_2\in B$, we must have $p\in S$. This is a contradiction as $p\in I$ is fully adjacent to $J\subseteq S$ and $J\neq\emptyset$. It remains to consider the alternative that $J\nsubseteq S$.

    This implies that $J\cap A\neq\emptyset$ and/or $J\cap B\neq\emptyset$. Assume w.l.o.g.\ that $J\cap A\neq\emptyset$. Then $J$ contains vertices from $A$ and $S$ and, hence, $I\setminus\{v\}\subseteq A$ as $I$ and $J$ are fully adjacent. Similarly, with $J$ being fully adjacent to $I\setminus\{v\}\subseteq A$, we have $J\cap B=\emptyset$, hence $J\subseteq A\cup S$. Since $v$ has a neighbor in $B$ and $N_G(v)\subseteq X\cup J$, it follows that $v$ has a neighbor $x_3\in X\cap B$. Since $v$ was not marked for $(x_3,x_3)\in X^2$, some other vertex $q\in I\setminus\{v\}$ was marked for this pair and is adjacent to $x_3$. This, however, is a contradiction since $q\in I\setminus\{v\}\subseteq A$ has a neighbor $x_3\in B$. Thus, case (ii) is equally impossible.

    This shows that only the three cases of the first paragraph are possible, and in each of them a stable cutset for $G$ was found. Thus $(G,X)$ is yes if $(G-v,X)$ is yes.
\end{proof}

We use a second reduction rule to reduce the \emph{number} of stable sets whose join is $G-X$. This rule is an adaptation of Reduction Rule~\ref{rrule:shrinkcliques} that we used for \stablecutsetmodcluster. Note that, as a special case, it removes vertices in $G-X$ that have no neighbor in $X$, i.e., with $N_G(v)\cap X=\emptyset$.

\begin{rrule}\label{rrule:cocluster:reducestablesets}
    If $G-X$ is the join of at least four stable sets and if there is a vertex $v\in V(G-X)$ such that each neighbor $x\in N_G(v)\cap X$ is adjacent to at least three different stable sets of $G-X$ (including the one of $v$) then delete $v$, i.e., return $(G-v,X)$.
\end{rrule}

\begin{lemma}\label{lemma:cocluster:reducestablesets}
    Reduction Rule~\ref{rrule:cocluster:reducestablesets} is safe, i.e., if Reduction Rules~\ref{rrule:disconnected:cutvertex}, \ref{rrule:clique}, and \ref{rrule:stableneighborhood} do not apply and if $v$ fulfills the condition of the rule then $(G,X)$ is yes if and only if $(G-v,X)$ is yes.
\end{lemma}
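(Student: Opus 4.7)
The forward direction is immediate from Lemma~\ref{lemma:deletevertex:nofalsenegative}, so the plan is to establish the converse. Given a stable cutset $S$ of $G-v$ with consistent partition $V(G-v)\setminus S = A\dunion B$, the three easy cases are when $v$ has no neighbor in $A$, no neighbor in $B$, or no neighbor in $S$; in each, placing $v$ with $B$, with $A$, or adding it to $S$ respectively yields a stable cutset of $G$. The body of the argument focuses on the remaining case, where $v$ has at least one neighbor in each of $S$, $A$, $B$, and I intend to show this case cannot actually occur.

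Let $I$ be the stable set of the co-cluster $G-X$ containing $v$, and let $I_1,\dots,I_t$ (with $t\geq 3$, since $G-X$ is the join of at least four stable sets) be the remaining stable sets. I would rely on two structural tools. First, for distinct $j,k$ the sets $I_j,I_k$ are fully adjacent in the join, so they cannot simultaneously meet $A$ and $B$; likewise $I\setminus\{v\}$ is fully adjacent to each $I_j$, giving the analogous constraint between $I\setminus\{v\}$ and the $I_j$'s. Second, the rule's hypothesis guarantees that every $x\in N_G(v)\cap X$ has neighbors in at least two stable sets distinct from $I$, and these witness vertices persist in $G-v$.

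The first main step is to prove $I\setminus\{v\}\subseteq S$. Suppose instead that some vertex of $I\setminus\{v\}$ lies in $A$; then the join constraint forces $I_j\subseteq S\cup A$ for all $j$, so $v$'s neighbor in $B$ must be some $x\in X\cap B$. Applying the rule's hypothesis to $x$ produces two stable sets $I_j,I_k$ distinct from $I$, each containing a neighbor of $x$. Since $x\in B$ any such neighbor lies in $B\cup S$, and since $I_j,I_k\subseteq S\cup A$ it in fact lies in $S$. Thus $I_j$ and $I_k$ both meet $S$ while being fully adjacent, producing an edge inside $S$; a contradiction. The case of $I\setminus\{v\}$ meeting $B$ is symmetric. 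With $I\setminus\{v\}\subseteq S$ in hand, a final case split on $J_A=\{j\colon I_j\cap A\neq\emptyset\}$ and $J_B=\{j\colon I_j\cap B\neq\emptyset\}$ finishes the job: if both are nonempty, the join constraint forces $J_A=J_B=\{j^\ast\}$, so at least $t-1\geq 2$ other $I_k$'s lie entirely in $S$ and are pairwise adjacent; if exactly one is empty, the same ``$x\in X$ witness'' argument as above produces two fully adjacent $I_j, I_k$ meeting $S$; and if both are empty, already $t\geq 3$ nonempty $I_j\subseteq S$ give pairwise adjacencies inside $S$.

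The main obstacle I expect is bookkeeping: the case analysis branches on where $I\setminus\{v\}$ and each $I_j$ sit relative to $S$, $A$, $B$, and the rule's ``three stable sets, including $I$'' clause has to be invoked carefully so that the two witnesses it delivers are genuinely distinct from $I$ and hence still present in $G-v$. The assumption $t\geq 3$, which comes from ``at least four stable sets'', is exactly what powers the final edge-in-$S$ contradictions.
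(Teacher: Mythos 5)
Your proof is correct, and the details of each branch check out, but it is organized quite differently from the paper's argument even though it consumes the same two ingredients (the join structure of $G-X$ and the requirement that each $x\in N_G(v)\cap X$ sees at least two partite sets other than the one containing $v$). The paper argues directly rather than by contradiction: since $S$ is stable it meets at most one partite set of $(G-v)-X$, so this complete $k$-partite graph ($k\geq 3$) remains connected after removing $S$ and therefore lies entirely in, say, $A$; the degree condition then gives every $x\in N_G(v)\cap X$ a neighbor in that connected piece, hence $x\notin B$, so $N_G(v)\subseteq S\cup A$ and $v$ is simply added to $A$ --- no split into the three easy cases is needed at all. You instead run the three-easy-cases template and refute the fourth case, first pinning $I\setminus\{v\}$ into $S$ and then case-splitting on $J_A$ and $J_B$; each branch correctly ends in an edge inside $S$ (note that the $I_k$ with $k\neq j^\ast$ are nonempty since they are partite sets of the join not containing $v$, which your sketch implicitly uses). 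The trade-off is that the paper's single connectivity observation collapses your entire case analysis, including the $I\setminus\{v\}=\emptyset$ and $J_A/J_B$ configurations, into one line, whereas your version makes explicit exactly where each hypothesis is spent: the ``at least four stable sets'' bound only through $t\geq 3$, and the witness condition only to exclude an $x\in N_G(v)\cap X$ on the wrong side of the partition.
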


\begin{proof}
    By Lemma~\ref{lemma:deletevertex:nofalsenegative} if $G$ has a stable cutset then so does $G-v$, hence if $(G,X)$ is yes then so is $(G-v,X)$. It remains to verify the converse.

    Now, assume that $(G-v,X)$ is yes and let $S$ be a stable cutset of $G-v$ with consistent partition $V(G-v)\setminus S=A\dunion B$. Let $Y=V(G)\setminus X$ and observe that the graph $G[Y\setminus\{v\}]=(G-v)-X$ is a complete $k$-partite graph with $k\geq3$ as $G-X$ is complete $\ell$-partite with $\ell\geq 4$. As $S$ is a stable set, it contains vertices from at most one partite set of $G[Y\setminus\{v\}]$ and hence $G[Y\setminus\{v\}]-S$ remains connected. Thus, it must hold that $(Y\setminus\{v\})\setminus S\subseteq A$ or $(Y\setminus\{v\})\setminus S\subseteq B$; w.l.o.g., assume that $(Y\setminus\{v\})\setminus S\subseteq A$. It also follows that each vertex in $N_G(v)\cap X$ is adjacent to a vertex in $(Y\setminus\{v\})\setminus S$ because $(Y\setminus\{v\})\setminus S$ completely contains all but at most two partite sets of $G-X$. Thus, such vertices cannot be in $B$ and we have $N_G(v)\cap X\subseteq S\cup A$.
    At the same time, $N_G(v)\setminus X=N_G(v)\cap Y\subseteq Y\setminus\{v\}\subseteq S\cup A$. Thus, $N_G(v)\subseteq S\cup A$, which implies that $S$ is also a stable cutset for $G$ with consistent partition $V(G)\setminus S=(A\cup\{v\})\dunion B$.
\end{proof}

We are now ready to complete the kernelization.

\begin{theorem}
    \stablecutsetmodcocluster admits a polynomial kernelization to equivalent instances with $\Oh(|X|^3)$ vertices.
\end{theorem}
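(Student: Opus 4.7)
The plan is to apply Reduction Rules~\ref{rrule:disconnected:cutvertex}, \ref{rrule:clique}, \ref{rrule:stableneighborhood}, \ref{rrule:cocluster:shrinkindset}, and \ref{rrule:cocluster:reducestablesets} exhaustively, and then bound the number of vertices in the reduced instance $(G',X')$ with $X'\subseteq X$. Since all these rules have already been shown to be safe and each is implementable in polynomial time, correctness and polynomial runtime follow routinely; only the size bound requires real work.

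I would first dispatch the degenerate case: if $G'-X'$ happens to be a single stable set, then $X'$ is simultaneously a vertex cover of $G'$, so I can simply invoke Theorem~\ref{theorem:stablecutsetvc:kernelization} to obtain an equivalent instance with $\Oh(|X'|^3)=\Oh(|X|^3)$ vertices. Otherwise $G'-X'$ is the join of $k\geq 2$ stable sets $I_1,\ldots,I_k$. In this regime Rule~\ref{rrule:cocluster:shrinkindset} is applicable whenever some $I_j$ exceeds $|X'|^2$ vertices, so after exhaustion each $|I_j|\leq |X'|^2$. The target $|V(G')|\leq |X'|+k\cdot|X'|^2=\Oh(|X|^3)$ therefore reduces to establishing $k=\Oh(|X|)$.

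To bound $k$, I would use the fact that Rule~\ref{rrule:cocluster:reducestablesets} is no longer applicable. The rule is vacuous when $k\leq 3$, so assume $k\geq 4$. Then for every $v\in V(G'-X')$ there must exist some neighbor $x\in N_{G'}(v)\cap X'$ that is adjacent to at most two of the stable sets $I_1,\ldots,I_k$; otherwise $v$ would have been deleted (noting in particular that any $v$ with empty $X'$-neighborhood satisfies the deletion condition vacuously). I would call such an $x$ a \emph{bad witness} for $v$ and observe the key point: a single bad witness $x\in X'$ is adjacent to at most two stable sets in total, hence can serve as a witness for vertices in at most two distinct $I_j$. Now pick one vertex $v_j\in I_j$ from each stable set and let $x_j$ be its bad witness; the map $j\mapsto x_j$ is at most $2$-to-$1$, so $k\leq 2|X'|\leq 2|X|$.

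Combining these estimates gives $|V(G')|\leq |X'|+k\cdot |X'|^2=\Oh(|X|^3)$, as desired. I expect the main obstacle to be the witness-counting argument for bounding $k$, which needs the observation that deletion of $X$-isolated vertices is implicit in Rule~\ref{rrule:cocluster:reducestablesets}; once that is in hand, everything else is bookkeeping. I would also double-check that the interaction of the two co-cluster rules under exhaustive application preserves the per-stable-set bound, but since Rule~\ref{rrule:cocluster:reducestablesets} only deletes vertices (and never merges or enlarges stable sets), the bound $|I_j|\leq |X'|^2$ survives.
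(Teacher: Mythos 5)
Your proposal is correct and follows essentially the same route as the paper: the same five reduction rules, the same fallback to the vertex-cover kernelization when $G'-X'$ is a single stable set, and the same per-class bound $|I_j|\le|X'|^2$ from Rule~\ref{rrule:cocluster:shrinkindset}. The only difference is cosmetic: the paper bounds $|V(G'-X')|$ directly by charging each surviving vertex to a witness $x\in X'$ with $|N_{G'}(x)\cap V(G'-X')|\le 2|X'|^2$, whereas you first bound the number of partite classes by $k\le 2|X'|$ using the same witnesses; both counting arguments are valid and yield $\Oh(|X|^3)$.
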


\begin{proof}
    We exhaustively use Reduction Rules~\ref{rrule:disconnected:cutvertex}, \ref{rrule:clique}, \ref{rrule:stableneighborhood}, \ref{rrule:cocluster:shrinkindset}, and  \ref{rrule:cocluster:reducestablesets}. Additionally, if at any point $G-X$ is a stable set then $X$ is also a vertex cover of $G$ and we proceed as in the kernelization for \stablecutsetvc. (More formally, we can append the reduction rules for \stablecutsetvc and set them to apply only when $G-X$ is a stable set.) Clearly, all reduction rules can be applied exhaustively in polynomial time. As they are all safe, we get the correct yes or no answer (or a dummy yes- or no-instance of constant size) or an equivalent instance $(G',X')$ with $X'\subseteq X$. It remains to bound the size of $G'$ in the latter case.

    If $G'-X'$ is a stable set then we already know that $G'$ has $\Oh(|X'|^3)=\Oh(|X|^3)$ vertices as the reduction rules for \stablecutsetvc do not apply (and not for $G'-X'$ not being a stable set). Else, $G'-X'$ is the join of at least two stable sets, and since Reduction Rule~\ref{rrule:cocluster:shrinkindset} does not apply, we know that each (maximal) stable set of $G'-X'$ has at most $|X'|^2$ vertices. If $G'-X'$ is the join of at most three stable sets then this already allows to upper bound the number of vertices by $|X'|+3\cdot |X'|^2=\Oh(|X'|^2)=\Oh(|X|^2)$. It remains to bound the number of vertices in $G'$ when $G'-X'$ is the join of at least four stable sets.

    In this case, since Reduction Rule~\ref{rrule:cocluster:reducestablesets} does not apply, we know that for each vertex $v\in Y:=V(G'-X')$ there is at least one vertex $x\in N_{G'}(v)\cap X'$ that is adjacent to at most two different stable sets of $G'-X'$. Since each stable set has at most $|X'|^2$ vertices, we get $|N_{G'}(x)\cap Y|\leq 2|X'|^2$. Overall, since such an adjacent vertex $x\in X$ exists for each $v\in Y$, we get that $|Y|\leq |X|\cdot 2|X'|^2=\Oh(|X'|^3)$. Together with $X'$, we get the same upper bound of $\Oh(|X|^3)$ vertices.
\end{proof}

\section{Lower bounds for kernelizations for Stable Cutset}\label{section:lowerbounds}

Our main result in this section is a lower bound for kernelization for \stablecutset parameterized by the deletion distance to a linear forest. As a free corollary, our construction also provides a lower bound for parameterization by the size of a dominating set. We also show simple lower bounds relative to parameters treewidth and the size of the stable cutset.

\subsection{Parameterization by distance to a linear forest}

In this section, we show that \stablecutsetmodlinforest is hard under polynomial parameter transformations for the class \mktwo introduced by Hermelin et al.~\cite{HermelinKSWW15}. This implies that unless \containment (and the polynomial hierarchy collapses) it has no polynomial kernelization~\cite{HermelinKSWW15}.

\introduceparameterizedproblem{\stablecutsetmodlinforest}{A graph $G=(V,E)$ and $X\subseteq V$ s.t.\ $G-X$ is a linear forest.}{$|X|$.}{Does $G$ have a stable cutset?}

The class \mktwo can be defined via its complete problem \hittingsetn, i.e., it consists of all parameterized problems that have a PPT to \hittingsetn.

\introduceparameterizedproblem{\hittingsetn}{A ground set $U$, a family $\F$ of subsets of $U$, and $k\in\N$.}{$n=|U|$.}{Is there a hitting set $S\subseteq U$ for $\F$ of size at most $k$?}

We establish \mktwo-hardness of \stablecutsetmodlinforest via the intermediate problems \multicoloredhittingsetn and \setsplittingn (see further below), both of which we show to be complete for \mktwo under PPTs, which may be of independent interest (though not at all surprising).

\introduceparameterizedproblem{\multicoloredhittingsetn}{A ground set $U$, a family $\F$ of subsets of $U$, $k\in\N$, and\\
& a coloring function $\phi\colon U\to[k]$.}{$n=|U|$.}{Is there a hitting set $S\subseteq U$ for $\F$ that contains exactly \\
&one element of each color $i\in[k]$?}

\begin{lemma}\label{lemma:multicoloredhittingsetn:mktwocomplete}
    \multicoloredhittingsetn is complete for \mktwo. 
\end{lemma}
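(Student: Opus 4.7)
The plan is to establish membership in and hardness for \mktwo separately, each via a polynomial parameter transformation (PPT). Since \mktwo is defined as the class of parameterized problems admitting a PPT to \hittingsetn, membership amounts to constructing a PPT from \multicoloredhittingsetn to \hittingsetn, while hardness (by transitivity of PPTs) is witnessed by a PPT in the opposite direction. In both cases the parameter $n = |U|$ must remain polynomially bounded.

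For membership, the idea is to encode the ``exactly one element per color'' constraint using additional sets that must be hit. Given $(U, \F, k, \phi)$, I would output the \hittingsetn instance $(U, \F \cup \{\phi^{-1}(i) : i \in [k]\}, k)$. Since the color classes $\phi^{-1}(1), \ldots, \phi^{-1}(k)$ partition $U$ (and in particular are pairwise disjoint), hitting all $k$ of them with a budget of only $k$ forces a hitting set to pick exactly one element per color; such a set also hits $\F$ by construction, and conversely any multicolored hitting set hits $\F$ and all color classes. The ground set is unchanged, so the new parameter equals $n$.

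For hardness, I would use a standard ``$k$ colored copies'' construction: given $(U, \F, k)$, assume WLOG that $k \le |U|$ (otherwise the instance is trivial modulo checking whether $\emptyset \in \F$) and define $U' = U \times [k]$ with coloring $\phi((u,i)) = i$, together with $\F' = \{F \times [k] : F \in \F\}$. A multicolored hitting set $\{(u_1, 1), \ldots, (u_k, k)\}$ then projects to a hitting set $\{u_1, \ldots, u_k\}$ of size at most $k$ for $\F$; conversely, any hitting set of size at most $k$ can be padded with arbitrary elements of $U$ and assigned distinct colors to yield a multicolored solution. The new parameter is $n' = k|U| \le n^2$, which is polynomially bounded in $n$, so this is a valid PPT.

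The only step requiring real care is padding in the hardness direction: a hitting set of size strictly less than $k$ in the original instance must still extend to a valid multicolored solution, which relies on $k \le |U|$ so that enough distinct elements are available for padding. Once this trivial preprocessing (handling $\emptyset \in \F$ and the $k > |U|$ case explicitly) is in place, both reductions are essentially one-line encodings whose correctness follows from the disjointness of color classes in one direction and from a simple project-and-pad argument in the other.
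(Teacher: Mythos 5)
Your proposal is correct and matches the paper's proof essentially verbatim: hardness via the $U\times[k]$ colored-copies construction with $\F'=\{F\times[k]\}$ (handling large $k$ trivially to keep $n'\leq n^2$), and membership by adding the color classes $\phi^{-1}(i)$ as sets to be hit so that the budget $k$ forces exactly one element per color. The only cosmetic difference is the padding step in the hardness direction, where the paper reuses the last chosen element across the remaining colors rather than padding with distinct elements.
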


\begin{proof}
    We give simple PPTs between \hittingsetn and \multicoloredhittingsetn, which establish that the latter is complete for \mktwo under PPTs. 
    
    For the first reduction, let $(U,\F,k)$ be an instance of \hittingsetn with $n=|U|$ and $m=|\F|$, asking whether there is a subset $S\subseteq U$ of size at most $k$ that has a nonempty intersection with each set $F\in\F$. 
    If $k\geq n$ then $(U,\F,k)$ is trivially yes so return a yes instance of \multicoloredhittingsetn of constant size. Else, construct an instance $(U',\F',k,\phi)$ of \multicoloredhittingsetn as follows:
    \begin{itemize}
        \item Let $U'=\{(u,i) \mid u\in U \wedge i\in[k]\}$ and let $\phi\colon U'\to[k]\colon (u,i)\mapsto i$. In other words, for each element $u\in U$ create $k$ new elements $(u,1),\ldots,(u,k)$, one of each color $i\in[k]$.
        \item For each $F\in\F$ let $\F'$ contain the set $F'=\{(u,i)\mid u\in F\}$, i.e., the sets $F'$ of $\F'$ simply contain the elements of the corresponding set $F$ in all $k$ colors.
    \end{itemize}
    Clearly, this can be done in polynomial time and the parameter value of instance $(U',\F',k,\phi)$ is $n'=|U'|\leq k\cdot |U|\leq n^2$ which is polynomially bounded in the input parameter $n$. It remains to (briefly) verify correctness.

    Assume first that $(U,\F,k)$ is yes and let $S\subseteq U$ be of size at most $k$ have a nonempty intersection with each set $F\in \F$. Let $S=\{u_1,\ldots,u_{k'}\}$ with $k'\leq k$ and let $S'=\{(u_1,1),\ldots,(u_{k'},k'),(u_{k'},k'+1),\ldots,(u_{k'},k)\}$. Clearly, $S'$ has a nonempty intersection with each set $F'\in\F'$ and it contains exactly one element of each color. Thus $(U',\F',k,\phi)$ is yes.

    Now assume that $(U',\F',k,\phi)$ is yes and let $S'\subseteq U'$ be a multicolored hitting set for $\F'$. Let $S'=\{(u_1,1),\ldots,(u_k,k)\}$ and let $S=\{u\mid (u,i)\in S'\}\subseteq U$. Clearly $S$ is of size at most $k$ and has a nonempty intersection with each set $F\in\F$. Thus $(U,\F,k)$ is yes.
    This completes the PPT from \hittingsetn to \multicoloredhittingsetn.

    For the reverse reduction, let $(U,\F,k,\phi)$ be an instance of \multicoloredhittingsetn. We construct an instance $(U,\F',k)$ of \hittingsetn as follows:
    \begin{itemize}
        \item Start from $\F'=\F$. For each color $i\in[k]$ add to $\F'$ a set $F_i$ containing all elements of $U$ of color $i$, formally $F_i=\{u\mid \phi(u)=i\}$.
    \end{itemize}
    Clearly, this can be done in polynomial time and the parameter value of the output instance is the same as for the input. Correctness is straightforward: A multicolored hitting set for the input also hits the additional sets $F_i$ in the output instance and has size at most $k$. Conversely, a hitting set of size at most $k$ for the output instance hits all the sets $F_i$ and thereby must contain exactly one element per color (while hitting all sets in $\F\subseteq\F'$). This completes the second PPT and the proof of \mktwo-completeness of \multicoloredhittingsetn.
\end{proof}

We now turn to proving \mktwo-completeness of \setsplittingn under PPTs. For proving \mktwo-hardness we adapt a reduction due to Cygan et al.~\cite{CyganDLMNOPSW16}.

\introduceparameterizedproblem{\setsplittingn}{A ground set $U$ and a family $\F$ of subsets of $U$.}{$n=|U|$.}{Is there $S\subseteq U$ with $F\nsubseteq S$ and $F\nsubseteq U\setminus S$ for all $F\in\F$?}

\begin{lemma}
    \setsplittingn is complete for \mktwo under PPTs.
\end{lemma}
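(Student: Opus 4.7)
The plan is to establish both directions: that \setsplittingn lies in \mktwo via a PPT to \hittingsetn, and that \setsplittingn is \mktwo-hard via a PPT from \multicoloredhittingsetn, which is complete by Lemma~\ref{lemma:multicoloredhittingsetn:mktwocomplete}.

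For membership, given an instance $(U,\F)$ of \setsplittingn with $n=|U|$, I would construct an equivalent instance $(U',\F',k)$ of \hittingsetn as follows. Take $U'=U\cup \bar U$ where $\bar U=\{\bar u:u\in U\}$ is a disjoint copy of $U$, so $|U'|=2n$. Add to $\F'$ every pair $\{u,\bar u\}$, and set $k=n$; since a hitting set of size at most $n$ must hit all $n$ pairs, it picks exactly one of $\{u,\bar u\}$ for each $u$. Then, for each $F\in\F$, add both $F$ itself and $\bar F=\{\bar u:u\in F\}$ to $\F'$. A hitting set $H$ of size $k$ yields $S=H\cap U$, and hitting $F$ means $F\cap S\neq\emptyset$ (so $F\nsubseteq U\setminus S$), while hitting $\bar F$ means $F\cap (U\setminus S)\neq\emptyset$ (so $F\nsubseteq S$); the reverse implication is symmetric. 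The new parameter is $2n$, polynomially bounded in $n$.

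For hardness, I would reduce from \multicoloredhittingsetn, adapting Cygan et al.'s idea of simulating arbitrary-polarity NAE-SAT via a pair of ``anchor'' elements. Given $(U,\F,k,\phi)$ with color classes $C_1,\ldots,C_k$, let $U'=U\cup\{z_0,z_1\}$, so $|U'|=n+2$. Build $\F'$ containing: (i) the pair $\{z_0,z_1\}$, which forces the two anchors onto opposite sides of any splitter, so by the symmetry $S\leftrightarrow U'\setminus S$ of set splitting we may assume $z_0\notin S$ and $z_1\in S$; (ii) $C_i\cup\{z_0\}$ for each color, which with $z_0\notin S$ forces at least one element of $C_i$ into $S$; (iii) $\{u,v,z_1\}$ for each pair $u,v$ in a common color class, which with $z_1\in S$ prevents both $u$ and $v$ from being in $S$, giving at most one per color; and (iv) $F\cup\{z_0\}$ for each $F\in\F$, enforcing that $S$ hits $F$. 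Combining (ii) and (iii) yields exactly one element per color, so $S\cap U$ is a multicolored hitting set; conversely any multicolored hitting set $T$ gives the splitter $T\cup\{z_1\}$.

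The main care will go into verifying the ``exactly one per color'' encoding, since set splitting is effectively \emph{positive} NAE-SAT and cannot by itself express counting constraints. The trick of using a fixed-false and a fixed-true anchor (here $z_0$ and $z_1$), linked by $\{z_0,z_1\}$ and stabilized by the complementation symmetry of splitters, is precisely what promotes the problem to full NAE-SAT over $U$, in which the required at-most-one and at-least-one gadgets have constant arity. The new parameter is $n+2$, hence linearly bounded in $n$, and both reductions are clearly computable in polynomial time; together they establish that \setsplittingn is complete for \mktwo under PPTs.
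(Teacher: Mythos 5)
Your proof is correct. For the hardness direction you take essentially the same route as the paper: a PPT from \multicoloredhittingsetn using two anchor elements linked by a two-element set, with $\{u,v,\text{anchor}\}$ sets enforcing at-most-one per color and $F\cup\{\text{anchor}\}$ sets enforcing hitting (your $z_1$ and $z_0$ play the roles of the paper's $r$ and $b$). The one substantive difference there is that you add the sets $C_i\cup\{z_0\}$ to force \emph{at least} one element per color directly; the paper omits these and instead observes that a hitting set with at most one element per color can always be padded to exactly one per color when all color classes are nonempty (which it checks up front). Both are fine; yours avoids the completion argument at the cost of $k$ extra sets. For membership, your route is genuinely different: you give an explicit PPT to \hittingsetn (the defining complete problem of \mktwo) via the standard duplication $U\cup\bar U$ with pair-sets and budget $k=n$, whereas the paper appeals to a reduction to \cnfsatn. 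Your version is more self-contained since it lands directly on the problem used to define the class; the paper's is shorter because it can cite an existing reduction. Minor nit: you should (as the paper does implicitly via the nonempty-color-class check) note that empty color classes or empty sets $F$ make both instances trivially no, so the singleton output sets $\{z_0\}$ they would produce cause no trouble.
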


\begin{proof}
    To show \mktwo-hardness of \setsplittingn, we adapt a reduction due to Cygan et al.~\cite{CyganDLMNOPSW16} (from bounded arity \hittingset to bounded arity \setsplitting) to get a polynomial parameter transformation from \multicoloredhittingsetn to \setsplittingn.

    Let $(U,\F,k,\phi)$ be an instance of \multicoloredhittingsetn, with $\phi\colon U\to[k]$, asking whether there is a hitting set $S\subseteq U$ for $\F$ that contains exactly one element of each color. If there are any empty color classes then the instance is trivially no and we return a dummy no-instance of \setsplittingn.
    Else, we construct an instance $(U',\F')$ of \setsplittingn as follows:
    \begin{itemize}
        \item Let $U'=U\cup\{r,b\}$ and let $U_i=\{u\in U \mid \phi(u)=i\}$.
        \item Start with $\F'$ containing just the set $\{r,b\}$.
        \item For each $i\in[k]$, add to $\F'$ all sets $\{u,u',r\}$ with $\{u,u'\}\in\binom{U_i}{2}$.
        \item For each $F\in\F$, add to $\F'$ the set $F\cup\{b\}$.
    \end{itemize}
    Clearly this can be done in polynomial time and the parameter value of the output instance, namely $n'=|U'|=n+2$, is polynomially bounded in the input parameter value $n$. It remains to verify correctness, i.e., that $(U,\F,k,\phi)$ is yes for \multicoloredhittingsetn if and only if $(U',\F')$ is yes for \setsplittingn.

    Assume first that $(U,\F,k,\phi)$ is yes and let $S\subseteq U$ be a hitting set of $\F$ containing exactly one element $u_i$ of each color $i\in[k]$. We claim that $S'=S\cup\{r\}=\{u_1,\ldots,u_k,r\}$ splits all sets $F'\in\F'$, i.e., $F'\nsubseteq S'$ and $F'\nsubseteq U'\setminus S'$ for all sets $F'\in\F'$. We check this for each type of set:
    \begin{itemize}
        \item Clearly $\{r,b\}$ is split as $r\in S'$ but $b\notin S'$.
        \item Each set $\{u,u',r\}$ with $\{u,u'\}\in\binom{U_i}{2}$ is split as $r\in S'$ but at least one of $u$ and $u'$ is not in $S'$ as it contains exactly one element from each set $U_i$ (the elements of color $i$ in $U$).
        \item For each set $F'\in\F'$ with $F'=F\cup\{b\}$ for some $F\in\F$ we know that $S$, and thereby $S'$, contain some element $u\in U$ that is also in $F\subseteq F'$. At the same time, $b\notin S'$ so the set is split.
    \end{itemize}
    Thus, $(U',\F')$ is yes for \setsplittingn.

    Now assume that $(U',\F')$ is yes and let $S'\subseteq U'$ such that all sets $F'\in\F'$ are split, i.e., $F'\nsubseteq S'$ and $F'\nsubseteq U'\setminus S'$ for all sets $F'\in\F'$. Assume w.l.o.g.\ that $r\in S'$ since otherwise its complement $U'\setminus S'$ could be used instead. We show that $S=S'\setminus\{r\}$ is a hitting set of size at most $k$ that contains at most one element of each color:
    \begin{itemize}
        \item We have $b\notin S$ since $\{r,b\}$ is split by $S'$ while $r\in S'$, hence $b\notin S'\supseteq S$. Thus $S\subseteq U$.
        \item If $S$ were to contain two elements of the same color, say $u$ and $u'$ with $\phi(u)=\phi(u')=i$, then $\{u,u'\}\in \binom{U_i}{2}$ and $S'\supseteq S\supseteq\{u,u'\}$ would not split the corresponding set $\{u,u',r\}$ as also $r\in S'$.
        \item For each set $F\in\F$ we know that $S'$ splits $F\cup\{b\}$ while $b\notin S'$. Thus $S'$ contains some element $u\in F$ while clearly $u\neq r$ hence $u\in S=S'\setminus\{r\}$.
    \end{itemize}
    Thus, $S$ is a hitting set for $\F$ with at most one element per color. This directly implies that there is a hitting set for $\F$ containing exactly one element per color and that $(U,\F,k,\phi)$ is yes for \multicoloredhittingsetn. This completes the proof part for \mktwo-hardness of \setsplittingn.

    To establish membership of \setsplittingn in \mktwo it suffices to give a PPT to any problem in \mktwo. E.g., a straightforward reduction from \setsplittingn to \cnfsatn (e.g., as in~\cite[Observation 3.6]{CyganDLMNOPSW16}) suffices.
\end{proof}

We are now ready to prove \mktwo-hardness of \stablecutsetmodlinforest by giving a PPT from \setsplittingn.

\begin{theorem}\label{theorem:stablecutsetmodlinforest:mktwohardness}
    \stablecutsetmodlinforest is \mktwo-hard under PPTs.
\end{theorem}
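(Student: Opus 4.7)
The plan is to give a polynomial parameter transformation from \setsplittingn, which was just shown to be \mktwo-complete. Starting from an instance $(U,\F)$ with $n=|U|$, I will build $(G,X)$ with $|X|=O(n)$ such that $G-X$ is a linear forest and $G$ has a stable cutset if and only if some $T\subseteq U$ splits every $F\in\F$.

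The modulator $X$ will contain one vertex $x_u$ for each $u\in U$, together with a constant-size anchor gadget: two anchor vertices $\alpha$ and $\beta$ joined to a small local structure (a few auxiliary $X$-vertices plus adjacent $X$-internal edges) that, in combination with Reduction Rule~\ref{rrule:stableneighborhood}, forces $\alpha,\beta$ to end up in opposite sides of any consistent partition of $V(G)\setminus S$ and in particular outside $S$. Intuitively, vertex $x_u$ being placed in $A$ (respectively $B$) will correspond to $u\in T$ (respectively $u\notin T$). I will also add light forcing so that each $x_u$ must lie in $A\cup B$ rather than in $S$; this can be arranged by embedding $x_u$ in a small edge-containing neighborhood.

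For each set $F\in\F$ I will attach a path gadget $P_F$ whose vertices lie in $V(G)\setminus X$ and whose only edges outside the path connect to $\alpha,\beta$ and to $\{x_u:u\in F\}$. Because the $P_F$'s are vertex-disjoint paths outside $X$, the graph $G-X$ is a disjoint union of paths, as required. The gadget is designed so that its two endpoints are adjacent to $\alpha$ and $\beta$ respectively, while its interior vertices are attached (in a specific interleaved pattern) to the $x_{u_i}$ for $u_i\in F$. Since the endpoints are pinned to opposite sides of the partition, any legal $\{S,A,B\}$-labeling of $P_F$ must transition from $A$ to $B$ through a vertex of $S$; stability of $S$ together with the interior adjacencies to the $x_{u_i}$'s then forces the transition to occur exactly at a position where some $x_{u_i}\in A$ meets an $x_{u_j}\in B$. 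Hence $P_F$ admits a valid labeling iff $F$ is split by $\{u:x_u\in A\}$.

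Correctness is then handled in two directions. Given a splitting $T$, set $x_u\in A$ iff $u\in T$, place $\alpha\in A$, $\beta\in B$ (and fill out the anchor gadget accordingly), and for each $F\in\F$ exhibit a concrete labeling of $P_F$ that crosses from its $A$-end to its $B$-end through one vertex placed in $S$, producing a legal stable cutset with $A,B\neq\emptyset$. Conversely, given a stable cutset $S$ of $G$ with consistent partition $A\dunion B$, I use the anchor gadget to conclude (up to symmetry) that $\alpha\in A$, $\beta\in B$, note that each $x_u$ is in $A\cup B$, and argue from the structure of $P_F$ that $F$ must contain elements on both sides; so $T=\{u:x_u\in A\}$ splits $\F$. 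The parameter bound $|X|=n+O(1)$ is immediate, and the size (and as a bonus the fact that the anchor gadget can be realized so that it also forms a connected dominating set of the whole graph) will yield the second part of Theorem~\ref{theorem:intro:lowerbounds} as a free corollary.

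The main obstacle will be the precise design and verification of $P_F$: it must (a) be a single path, so that $G-X$ stays a linear forest, (b) remain consistent with the anchor forcing and with $x_u\notin S$, and (c) admit \emph{no} legal labeling when $F$ is monochromatic while always admitting one when $F$ is split, uniformly in $|F|$. A secondary subtlety is keeping the anchor gadget small while making it rigid enough to pin $\alpha,\beta$ to opposite sides without accidentally creating trivial stable cutsets caught by the reduction rules of Section~\ref{section:properties}; the case analysis for both of these points will be the technical core of the proof.
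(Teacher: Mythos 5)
Your overall strategy matches the paper's: a PPT from \setsplittingn in which the element vertices plus a constant number of anchor vertices form the modulator $X$, and per-set path gadgets live in $G-X$ with one anchor pinned to $A$ and one to $B$. But the proposal leaves two genuine gaps, and the second is the actual technical core of the theorem rather than the ``secondary subtlety'' you call it. First, the gadget $P_F$ is never constructed: you list the properties it must have (a single path; no legal $\{S,A,B\}$-labeling when $F$ is monochromatic; always one when $F$ is split) and explicitly defer the design and verification. Note also that your encoding is harder to realize than the paper's: you need each single vertex $x_u$ forced out of $S$ and into a specific one of $A$ or $B$, which no ``small edge-containing neighborhood'' can guarantee on its own, whereas the paper uses an adjacent pair $u_a,u_b$ and encodes the split by which of the two lands in $S$ (exactly one must, because $(a_1,u_a,u_b,b_1)$ is a path between the two sides). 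With one path per set you must additionally exclude the degenerate labeling in which the entire path sits in $A\cup S$ and the endpoint attached to $\beta$ is absorbed into $S$; the paper prevents this by making $s$ adjacent to every second path vertex and proving $s\in S$, and by using two paths $L_F,R_F$ per set with crossing edges at positions one and two.

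Second, and more fundamentally, a local anchor gadget cannot pin $\alpha$ and $\beta$ to opposite sides of \emph{every} stable cutset, because a stable cutset of $G$ need not separate $\alpha$ from $\beta$ at all: it could instead cut off a small piece of some path gadget or a single element vertex, leaving both anchors in the same component. Excluding such unintended cutsets is a global property of the whole construction, not of the anchor. The paper devotes most of its correctness argument to precisely this: a ``weak reachability'' induction showing that every vertex of $G-S$ is reachable from $a_1$ or $b_1$, so that $G-S$ has exactly two components; only afterwards can one deduce $s\in S$, that exactly one of $u_a,u_b$ lies in $S$, and that a monochromatic set would yield a forbidden $a_1$--$b_1$ path. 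Until $P_F$ is written down explicitly and this global argument is carried out for every vertex of every gadget, the reduction is not established.
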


\begin{proof}
    We give a polynomial parameter transformation from \setsplittingn to \stablecutsetmodlinforest. 
    
    Let $(U,\F)$ be an instance of \setsplittingn with $n=|U|$ and $m=|\F|$. If some set $F\in\F$ has size at most one then the instance is trivially no and we return a dummy no-instance of \stablecutsetmodlinforest. Otherwise, we construct an instance $(G,X)$ as follows (see also Fig.~\ref{fig:example}):
    \begin{itemize}
        \item We begin by adding five vertices $s$, $a_1$, $a_2$, $b_1$, and $b_2$, called \emph{special vertices} along with edges $\{s,a_1\}$, $\{s,a_2\}$, $\{s,b_1\}$, $\{s,b_2\}$, $\{a_1,a_2\}$, and $\{b_1,b_2\}$.
        \item For each element $u\in U$ we add two vertices $u_a$ and $u_b$, called \emph{element vertices}, and the edge $\{u_a,u_b\}$ to $G$.
        \item For each set $F\in\F$ we add the following two paths of length $2|F|-1$ each, called \emph{set paths} to $G$: A path $L_F=(\ell_{F,1},\ell'_{F,1},\ell_{F,2},\ell'_{F,2},\ldots,\ell_{F,|F|},\ell'_{F,|F|})$ and a path $R_F=(r_{F,1},r'_{F,1},r_{F,2},r'_{F,2},\ldots,r_{F,|F|},r'_{F,|F|})$.
        \item Between special vertices and element vertices we add the following edges:
        \begin{itemize}
            \item We make $a_1$ and $a_2$ adjacent to each vertex $u_a$ with $u\in U$.
            \item We make $b_1$ and $b_2$ adjacent to each vertex $u_b$ with $u\in U$.
        \end{itemize}
        \item Between special vertices and set paths we add the following edges:
        \begin{itemize}
            \item We make $s$ adjacent to each vertex $\ell'_{F,i}$ with $F\in\F$ and $1\leq i\leq |F|$.
            \item We make $s$ adjacent to each vertex $r'_{F,i}$ with $F\in\F$ and $1\leq i\leq |F|$.
            \item We make $a_1$ and $a_2$ adjacent to the final vertex of each set path $R_F$, i.e., to $r'_{F,|F|}$ for each $F\in\F$.
            \item We make $b_1$ and $b_2$ adjacent to the final vertex of each set path $L_F$, i.e., to $\ell'_{F,|F|}$ for each $F\in\F$.
        \end{itemize}
        \item Between element vertices and set paths we add the following edges:
        \begin{itemize}
            \item For each set $F\in\F$ we arbitrarily enumerate its elements as $c_{F,1},\ldots,c_{F,|F|}$, e.g., by following any linear ordering of $U$.
            \item For each $u\in F$ with $u=c_{F,i}$ we make $u_a$ adjacent to $r_{F,i}$ and $r'_{F,i}$ in set path $R_F$, and $u_b$ adjacent to $\ell_{F,i}$ and $\ell'_{F,i}$ in set path $L_F$.
            \item For the first and second element of each set $F$, according to the above enumeration, there are additional edges between element vertices and set paths (recall also that each set has size at least two):
            \begin{itemize}
                \item For $u=c_{F,1}$ we make $u_a$ adjacent to $\ell_{F,2}$ in set path $L_F$, and we make $u_b$ adjacent to $r_{F,2}$ in set path $R_F$.
                \item For $u=c_{F,2}$ we make $u_a$ adjacent to $r_{F,1}$ in set path $R_F$, and we make $u_b$ adjacent to $\ell_{F,1}$ in set path $L_F$.
            \end{itemize}
        \end{itemize}
        \item We let $X=\{s,a_1,a_2,b_1,b_2\}\cup\{u_a,u_b\mid u\in U\}$, i.e., $X$ contains all special and element vertices. Clearly $G-X$ is a disjoint union of all the set paths, i.e., a linear forest.
    \end{itemize}

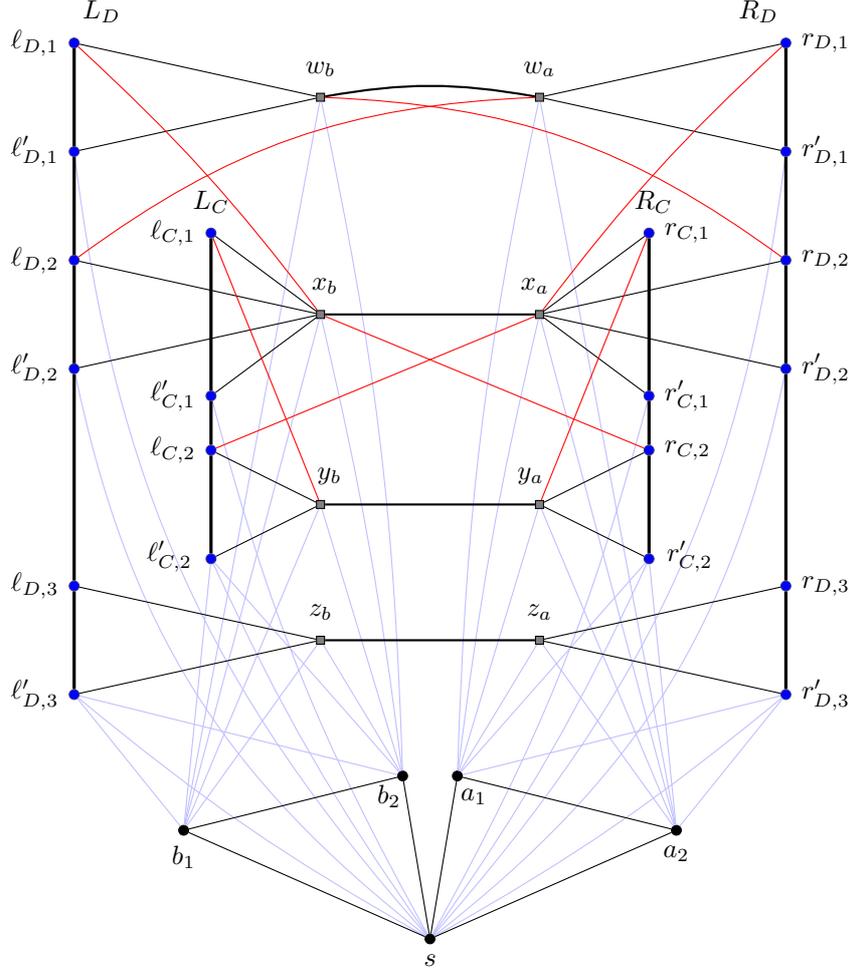
\begin{figure}
\begin{center}
\tikzstyle{vertex}=[draw,circle,inner sep=1.3pt,fill=black] 
\tikzstyle{clause}=[draw,circle,inner sep=1.3pt,fill=black]
\tikzstyle{var}=[draw,rectangle,inner sep=1.5pt,fill=gray] 
\tikzstyle{subsC}=[draw=Blue,circle,inner sep=1.3pt,fill=blue] 
\tikzstyle{subsD}=[draw=Blue,circle,inner sep=1.3pt,fill=blue] 
%\tikzstyle{vertexC3}=[draw=ForestGreen,circle,inner sep=1.8pt,fill=green] 
\tikzstyle{vertexR}=[draw=black!15!flax,circle,inner sep=1.2pt,fill=flax] 
\tikzstyle{vertexB}=[draw=black!15!teal,circle,inner sep=1.2pt,fill=teal] 

\begin{tikzpicture}[scale=.36] 
\node[var] (wa) at (21,36) {}; %[label=below:{\small $w_a$}] {};
\node[var] (wb) at (13,36) {}; %[label=below:{\small $w_b$}] {};
\node[var] (xa) at (21,28) {}; %[label=below:{\small $~~x_a$}] {};
\node[var] (xb) at (13,28) {}; %[label=below:{\small $x_b~~$}] {};
\node[var] (ya) at (21,21) {}; %[label=below:{\small $y_a$}] {};
\node[var] (yb) at (13,21) {}; %[label=below:{\small $y_b$}] {};
\node[var] (za) at (21,16) {}; %[label=below:{\small $z_a$}] {};
\node[var] (zb) at (13,16) {}; %[label=below:{\small $z_b$}] {};

%C=x,y
\node  at (9,31) [label=above:{\small $L_C$}] {};
\node  at (25,31) [label=above:{\small $~R_C$}] {};
\node[subsC] (lc1) at (9,31)  [label=left:{\small $\ell_{C,1}$}] {};
\node[subsC] (rc1) at (25,31)  [label=right:{\small $r_{C,1}$}] {};
\node[subsC] (l'c1) at (9,25) [label=left:{\small $\ell'_{C,1}$}] {};
\node[subsC] (r'c1) at (25,25) [label=right:{\small $r'_{C,1}$}] {};
\node[subsC] (lc2) at (9,23) [label=left:{\small $\ell_{C,2}$}] {};
\node[subsC] (rc2) at (25,23) [label=right:{\small $r_{C,2}$}] {};
\node[subsC] (l'c2) at (9,19) {}; %[label=left:{\small $\ell'_{C,2}$}] {};
\node[subsC] (r'c2) at (25,19) {}; %[label=right:{\small $r'_{C,2}$}] {};

%D=w,x,z
\node  at (5,38) [label=above:{\small $L_D$}] {};
\node  at (29,38) [label=above:{\small $R_D$}] {};
\node[subsD] (ld1) at (4,38) [label=left:{\small $\ell_{D,1}$}] {}; 
\node[subsD] (rd1) at (30,38) [label=right:{\small $r_{D,1}$}] {}; 
\node[subsD] (l'd1) at (4,34) [label=left:{\small $\ell'_{D,1}$}] {};
\node[subsD] (r'd1) at (30,34) [label=right:{\small $r'_{D,1}$}] {};
\node[subsD] (ld2) at (4,30) [label=left:{\small $\ell_{D,2}$}] {};
\node[subsD] (rd2) at (30,30) [label=right:{\small $r_{D,2}$}] {};
\node[subsD] (l'd2) at (4,26) [label=left:{\small $\ell'_{D,2}$}] {};
\node[subsD] (r'd2) at (30,26) [label=right:{\small $r'_{D,2}$}] {};
\node[subsD] (ld3) at (4,18) [label=left:{\small $\ell_{D,3}$}] {};
\node[subsD] (rd3) at (30,18) [label=right:{\small $r_{D,3}$}] {};
\node[subsD] (l'd3) at (4,14) [label=left:{\small $\ell'_{D,3}$}] {};
\node[subsD] (r'd3) at (30,14) [label=right:{\small $r'_{D,3}$}] {};

\node[vertex] (s) at (17,5) [label=below:{\small $s$}] {};
\draw[lightblue,thin] (s)[bend angle=7, bend left]to (l'c1);
\draw[lightblue,thin] (s)[bend angle=5, bend left]to (l'c2);
\draw[lightblue,thin] (s)[bend angle=7, bend right]to (r'c1);
\draw[lightblue,thin] (s)[bend angle=5, bend right]to (r'c2);
\draw[lightblue,thin] (s)[bend angle=17, bend left] to (l'd1);
\draw[lightblue,thin] (s)[bend angle=20, bend left]to (l'd2);
\draw[lightblue,thin] (s)[bend angle=5, bend left]to  (l'd3);
\draw[lightblue,thin] (s)[bend angle=17, bend right]to (r'd1);
\draw[lightblue,thin] (s)[bend angle=20, bend right]to (r'd2); 
\draw[lightblue,thin] (s)[bend angle=5, bend right]to (r'd3); 
\node  at (9.05,19.1) [label=left:{\small $\ell'_{C,2}$}] {};
\node at (24.9,19.1) [label=right:{\small $r'_{C,2}$}] {};

\node[vertex] (a1) at (18,11) {}; %[label=below:{\small $a_1$}] {};
\node at (18.6,10.25)  {\small $a_1$};
\node[vertex] (a2) at (26,9) [label=below:{\small $a_2$}] {};
\draw (a1)--(s)--(a2)--(a1);
\draw[lightblue,thin] (a1)[bend angle=5, bend left] to (wa); 
\draw[lightblue,thin] (a1)[bend angle=3, bend left] to (xa); 
\draw[lightblue,thin] (a1)--(ya); 
\draw[lightblue,thin] (a1)--(za); 
\draw[lightblue,thin] (a1)--(r'c2); 
\draw[lightblue,thin] (a1)--(r'd3); 
\draw[lightblue,thin] (a2)--(wa); 
\draw[lightblue,thin] (a2)--(xa); 
\draw[lightblue,thin] (a2)--(ya); 
\draw[lightblue,thin] (a2)--(za); 
\draw[lightblue,thin] (a2)--(r'c2); 
\draw[lightblue,thin] (a2)--(r'd3);

\node[vertex] (b1) at (8,9) [label=below:{\small $b_1$}] {};
\node[vertex] (b2) at (16,11) {};
\node at (15.5,10.25) {\small $b_2$};
\draw (b1)--(s)--(b2)--(b1);
\draw[lightblue,thin] (b1)--(wb); 
\draw[lightblue,thin] (b1)--(xb); 
\draw[lightblue,thin] (b1)--(yb); 
\draw[lightblue,thin] (b1)--(zb); 
\draw[lightblue,thin] (b1)--(l'c2); 
\draw[lightblue,thin] (b1)--(l'd3);
\draw[lightblue,thin] (b2)[bend angle=5, bend right]to (wb); 
\draw[lightblue,thin] (b2)[bend angle=3, bend right]to (xb); 
\draw[lightblue,thin] (b2)--(yb); 
\draw[lightblue,thin] (b2)--(zb); 
\draw[lightblue,thin] (b2)--(l'c2); 
\draw[lightblue,thin] (b2)--(l'd3);

\node at (21,36) [label=above:{\small $w_a$}] {};
\node at (13,36) [label=above:{\small $w_b$}] {};
\node at (21,28) [label=above:{\small $x_a~$}] {};
\node at (13,28) [label=above:{\small $~x_b$}] {};
\node at (21,21) [label=above:{\small $y_a~~$}] {};
\node at (13,21) [label=above:{\small $~~y_b$}] {};
\node at (21,16) [label=above:{\small $z_a$}] {};
\node at (13,16) [label=above:{\small $z_b$}] {};

\draw[thick] (wa)[bend angle=10, bend right] to(wb);
\draw[thick] (xa)--(xb);
\draw[thick] (ya)--(yb);
\draw[thick] (za)--(zb);

%C=x,y
\draw[very thick] (lc1)--(l'c1)--(lc2)--(l'c2);
\draw[very thick] (rc1)--(r'c1)--(rc2)--(r'c2);
\draw (lc1)--(xb)--(l'c1); \draw (rc1)--(xa)--(r'c1);
\draw (lc2)--(yb)--(l'c2); \draw (rc2)--(ya)--(r'c2);
%D=v,x,z
\draw[very thick] (ld1)--(l'd1)--(ld2)--(l'd2)--(ld3)--(l'd3);
\draw[very thick] (rd1)--(r'd1)--(rd2)--(r'd2)--(rd3)--(r'd3);
\draw (ld1)--(wb)--(l'd1); \draw (rd1)--(wa)--(r'd1); 
\draw (ld2)--(xb)--(l'd2); \draw (rd2)--(xa)--(r'd2); 
\draw (ld3)--(zb)--(l'd3); \draw (rd3)--(za)--(r'd3); 

\draw[red] (lc1)--(yb); \draw[red] (rc1)--(ya);
\draw[red] (lc2)--(xa); \draw[red] (rc2)--(xb);
\draw[red] (ld1)[bend angle=5, bend left]to (xb); \draw[red] (rd1)[bend angle=5, bend right]to(xa);
\draw[red] (ld2)[bend angle=17, bend left] to (wa); \draw[red] (rd2)[bend angle=17, bend right] to(wb);
\end{tikzpicture} 
\caption{The graph $G$ from the set splitting instance $(U,{\cal F})$ with $U=\{w,x,y,z\}$, ${\cal F}=\{C,D\}$, $C=\{x,y\}$ and $D=\{w,x,z\}$.} 
\label{fig:example}
\end{center}
\end{figure}
    
    Clearly this construction can be carried out in polynomial time and the set $X$ is a modulator to the class of linear forests. The parameter value $|X|=5+2n$ is polynomial in the input parameter $n$. It remains to verify correctness, i.e., that $(U,\F)$ is yes for \setsplittingn if and only if $(G,X)$ is yes for \stablecutsetmodlinforest.

    Assume first that $(G,X)$ is yes for \stablecutsetmodlinforest and let $S\subseteq V(G)$ be a stable cutset of $G$. Since $a_1$ and $a_2$ as well as $b_1$ and $b_2$ are twins, we may assume that $a_1\notin S$ and $b_1\notin S$. Our first goal is to show that for each vertex $v\in V(G)\setminus S$ there is a path from $a_1$ or $b_1$ to $v$ in $G-S$.

    We say that a vertex $v$ is \emph{weakly reachable} if there is a path from $a_1$ or $b_1$ to $v$ in $G$ whose vertices lie in $V(G)\setminus S$ except that $v\in S$ is permitted. Equivalently, it is required that there exists a path from $a_1$ or $b_1$ to any neighbor $w$ of $v$ in $G-S$. (Using this notion avoids several case distinctions about vertices in $S$.)

    \begin{observation}\label{observation:weaklyreachable:reachable}
        For each $v\in V(G)\setminus S$, weak reachability in $G$ and (regular) reachability are equivalent.
    \end{observation}

    \begin{observation}\label{observation:weaklyreachable:neighbor}
        If $v\notin S$ is (weakly) reachable then each neighbor $w$ of $v$ is weakly reachable (since $w$ has $v$ as reachable neighbor).
    \end{observation}

    \begin{observation}\label{observation:weaklyreachable:triangle}
        If $u$, $v$, and $w$ form a triangle in $G$ and both $u$ and $v$ are weakly reachable then $w$ is weakly reachable (since only one of $u$ and $v$ can be in $S$).
    \end{observation}

    \begin{claim}
        Each vertex of $G$ is weakly reachable.
    \end{claim}

    \begin{proof}
        Clearly, $a_1$ and $b_1$ are weakly reachable. Since $a_1,b_1\notin S$, their neighbors $s$, $a_2$, and $b_2$ are weakly reachable by Observation~\ref{observation:weaklyreachable:neighbor}. Similarly, all element vertices, $u_a$ and $u_b$ for $u\in U$, are weakly reachable as each is adjacent (amongst others) to $a_1$ or $b_1$, and $a_1,b_1\notin S$. It remains to verify that the vertices of the set paths are weakly reachable.

        For this we need a case distinction on whether $s$ is contained in $S$. Assume first that $s\notin S$. Then all vertices $\ell'_{F,i}$ and $r'_{F,i}$, for $F\in\F$ and $i\in[|F|]$, are weakly reachable because they are adjacent to $s$ (Observation~\ref{observation:weaklyreachable:neighbor}). It remains to check weak reachability of vertices $\ell_{F,i}$ and $r_{F,i}$ for $F\in\F$ and $i\in[|F|]$: Let $u\in U$ with $u=c_{F,i}$ (from enumeration of elements of $F$). Then $u_b$, $\ell_{F,i}$, and $\ell'_{F,i}$ form a triangle and we already know that $u_b$ and $\ell'_{F,i}$ are weakly reachable. It follows from Observation~\ref{observation:weaklyreachable:triangle} that $\ell_{F,i}$ is weakly reachable too. Similarly, $u_a$, $r_{F,i}$, and $r'_{F,i}$ form a triangle and $u_a$ as well as $r'_{F,i}$ are weakly reachable, which implies that $r_{F,i}$ is weakly reachable too. This completes the first case.

        In the second case, we assume that $s \in S$ and it follows that the adjacent vertices $\ell'_{F,i}$ and $r'_{F,i}$, for $F\in\F$ and $i\in[|F|]$, are not in $S$ because $S$ is a stable set. Fix any set $F\in\F$; we show that the vertices of its two set paths $L_F$ and $R_F$ are all weakly reachable. By symmetry it suffices to show this for all vertices of path $L_F$. We begin with $\ell_{F,1}$, $\ell'_{F_1}$, $\ell_{F,2}$, and $\ell'_{F,2}$ using the special edges related to the first and second elements of $F$. Let $u,v\in U$ with $u=c_{F,1}$ and $v=c_{F,2}$, and make a case distinction on whether $v_b$ is contained in $S$:
        \begin{itemize}
            \item If $v_b\notin S$ then its neighbors $\ell_{F,1}$, $\ell_{F,2}$, and $\ell'_{F,2}$ are all weakly reachable (Observation~\ref{observation:weaklyreachable:neighbor}). Vertex $\ell'_{F,1}$ is then weakly reachable because it forms a triangle with the weakly reachable vertices $u_b$ and $\ell_{F,1}$.
            \item Else, if $v_b\in S$, then its neighbors $\ell_{F,1}$ and $\ell_{F,2}$ are not in the stable set $S$, along with $\ell'_{F,1}$ and $\ell'_{F,2}$ as observed at the start of the paragraph. Since these vertices form a path, they are either all weakly reachable or none of them is (Observation~\ref{observation:weaklyreachable:neighbor}). We observe that $u_a\notin S$ or $u_b\notin S$ must hold (as they are adjacent), so neighbors $L_{F,2}$ or $L_{F,1}$ are weakly reachable; this implies that all of $\ell_{F,1}$, $\ell'_{F_1}$, $\ell_{F,2}$, and $\ell'_{F,2}$ are weakly reachable.
        \end{itemize}
        For the remaining vertices of the set path $L_F$ we can use an inductive argument: We show that $\ell_{F,k}$ and $\ell'_{F,k}$ are weakly reachable for $k\in[|F|]$, having just proved this for $k=1,2$. For $k\geq 3$ we can, in particular, assume that $\ell'_{F,k-1}$ is weakly reachable. Since $\ell'_{F,k-1}\notin S$, it follows that its neighbor $\ell_{F,k}$ is weakly reachable. Then $\ell'_{F,k}$ is weakly reachable too as its forms a triangle with the weakly reachable vertices $\ell_{F,k}$ and $u_b$ with $u_b=c_{F,k}$. This completes the proof of the claim.
    \end{proof}

    It follows from Observation~\ref{observation:weaklyreachable:reachable} that each vertex $v\in V(G)\setminus S$ is reachable, i.e., that there is a path from $a_1$ or $b_1$ to $v$ in $G-S$ (or both). It follows directly that $G-S$ has at most two connected components. Since $S$ is a stable cutset, it follows that $G-S$ has exactly two connected components, one containing $a_1$ and the other containing $b_1$. This in turn implies that $s\in S$ as it is adjacent to both vertices in $G$. Moreover, for each element $u\in U$ exactly one of $u_a$ and $u_b$ must be in $S$: (1) We have the path $(a_1,u_a,u_b,b_1)$ in $G$ and $a_1,b_1\notin S$, hence at least one of them is in $S$. (2) They are adjacent, hence at most one of them is in the stable set $S$.

    \begin{claim}
        For each set $F\in\F$ there are elements $u,v\in F$ with $u_a\in S$ and $v_a\notin S$.
    \end{claim}

    \begin{proof}
        Assume for contradiction that $u_a\in S$ for all $u\in F$. (The case that $u_a\notin S$ for all $u\in F$ is symmetric because then $u_b\in S$ for all $u\in F$, as exactly one of $u_a$ and $u_b$ is in $S$.) Then $r_{F,i},r'_{F,i}\notin S$ because each of them is adjacent to $u_a$ for $u=c_{F,i}$ (from enumeration of elements $u\in F$). Let $v=c_{F,1}$ and recall that $v_b$ is adjacent to $r_{F,2}$ . Since $v_a\in S$ we have $v_b\notin S$, which implies that
        \[
        (b_1,v_b,r_{F,2},r'_{F,2},r_{F,3},r'_{F,3},\ldots,r_{F,|F|},r'_{F,|F|},a_1)
        \]
        is a path in $G-S$ that connects $a_1$ and $b_2$; a contradiction. This completes the proof of the claim.
    \end{proof}

    We can now construct a splitting set for $(U,\F)$: Let $S'$ contain $u$ if $u_a\in S$. The previous claim implies that for each set $F\in\F$ there is an element $u$ with $u_a\in S$, hence $u\in S'$, and an element $v$ with $v_a\notin S$, hence $v\notin S'$. Thus $S'$ splits $F$, implying that $(U,\F)$ is yes for \setsplittingn.

    Now assume that $(U,\F)$ is yes for \setsplittingn and let $S'\subseteq U$ be a set that splits all $F\in\F$, i.e., such that $F\nsubseteq S'$ and $F\nsubseteq U\setminus S'$ for all $F\in\F$. We will construct a partition of $V(G)$ into three sets $S$, $A$, and $B$ such that the following three properties hold (1) $S$ is a stable set, (2) $A$ and $B$ are nonempty, and (3) there are no edges with one endpoint in $A$ and the other in $B$: 
    \begin{itemize}
        \item We begin with $S=\{s\}$, $A=\{a_1,a_2\}$, and $B=\{b_1,b_2\}$. We will add the remaining vertices of $G$ until we have a partition while maintaining properties (1)-(3). Clearly all three properties hold at the moment and adding further vertices to the three sets cannot break property (2).
        \item For each $u\in U$,
        \begin{itemize}
            \item if $u\in S'$ then we add $u_a$ to $S$ and $u_b$ to $B$,
            \item while if $u\notin S'$ then we add $u_a$ to $A$ and $u_b$ to $S$.
        \end{itemize}
        Since the only edges between these vertices as well as to the set $\{s,a_1,a_2,b_1,\allowbreak 
        b_2\}$ are of form $\{u_a,u_b\}$, $\{u_a,a_i\}$, and $\{u_b,b_i\}$, this maintains all three properties. (In particular, there are no edges between element vertices corresponding to different elements, and there are no edges from $s$ to any element vertex.) It will be helpful to observe that vertices $u_a$, with $u\in U$, are each in $A$ or $S$, while vertices $u_b$, with $u\in U$, are each in $B$ or $S$.
        \item We will now handle the vertices of the set paths $L_F$ and $R_F$ by treating one set $F\in\F$ at a time. Note that there are no edges between the set paths for different sets $F,F'\in\F$ (nor in fact between the two set paths $L_F$ and $R_F$ for any single $F\in\F$). Hence, to maintain the three properties it suffices to check the adjacency of the vertices on the current set paths to the element vertices and to the special vertices $s$, $a_1$, $a_2$, $b_1$, and $b_2$. For assigning the vertices of $L_F$ and $R_F$ to the sets $S$, $A$, and $B$ we make a case distinction on whether the first element $c_{F,1}$ of the current set $F$ is contained in the splitting set $S'$; the two cases are symmetric.
        \item For each $F\in\F$ with $c_{F,1}\in S'$ let $k\in\N$ be maximal such that $c_{F,1},\ldots,c_{F,k}\in S'$. Since $F\nsubseteq S'$, we have $1\leq k\leq |F|-1$. We now proceed depending on the value of $k$:
        \begin{itemize}
            \item If $k=1$, i.e., if $c_{F,1}\in S'$ and $c_{F,2}\notin S'$ then we add to $S$ the vertex $r_{F,2}$, we add to $A$ all other vertices of the set path $R_F$, and we add to $B$ all vertices of the set path $L_F$. We check separately the new vertices in all three sets $S$, $A$, and $B$ for maintaining properties (1) and (3). Let $u=c_{F,1}$ and $v=c_{F,2}$, and note that $u_a \in S$, $u_b\in B$, $v_a\in A$, and $v_b\in S$ as $c_{F,1}\in S'$ and $c_{F,2}\notin S'$.
            \begin{itemize}
                \item The additional vertex $r_{F,2}$ in $S$ is adjacent to $r'_{F,1}\in A$, to $r'_{F,2}\in A$, to $u_b \in B$, and to $v_a \in A$, preserving property (1).
                \item The additional vertices $r_{F,1},r'_{F,1},r'_{F,2},r_{F,3},r'_{F,3},\ldots,r_{F,|F|},r'_{F,|F|}$ in $A$ have edges amongst themselves and to $r_{F,2}\in S$, to the vertex $s\in S$, to the vertices $a_1,a_2\in A$, and to element vertices $w_a\in A\cup S$. There are no edges to vertices in $B$ (noting once more that there are no edges from $R_F$ to other set paths, also not to vertices on $L_F$).
                \item The additional vertices $\ell_{F,1},\ell'_{F,1},\ell_{F,2},\ell'_{F,2},\ldots,\ell_{F,|F|},\ell'_{F,|F|}$ in $B$ have edges amongst themselves and to the vertex $s\in S$, to the vertices $b_1,b_2\in B$, to the vertex $u_a\in S$, and to element vertices $w_b\in B\cup S$. There are no edges to vertices in $A$, so altogether we also preserve property (3).
            \end{itemize}
            Altogether, adding the vertices of the set paths for $F$ to the sets $S$, $A$, and $B$ preserves all three properties (1)-(3) in this case.
            \item If $2\leq k\leq |F|-1$, i.e., if $c_{F,1},\ldots,c_{F,k}\in S'$ and $c_{F,k+1}\notin S'$ then we add to $S$ the vertex $r_{F,k+1}$, add to $A$ the vertices $r'_{F,k+1},r_{F,k+2},r'_{F,k+2},\ldots,\allowbreak r_{F,|F|},r'_{F,|F|}$ (all those of $R_F$ after $r_{F,k+1}$), and add to $B$ the remaining vertices of $R_F$, namely $r_{F,1},r'_{F,1},r_{F,2},r'_{F,2},\ldots,r_{F,k},r'_{F,k}$, along with all vertices of the set path $L_F$. We again check separately the new vertices in all three sets $S$, $A$, and $B$ for maintaining properties (1) and (3). Let $u=c_{F,1}$ and $w=c_{F,k+1}$, and note that $u_a\in S$, $u_b\in B$, $w_a\in A$, and $w_b\in S$ since $u\in S'$ and $w\notin S'$.
            \begin{itemize}
                \item The additional vertex $r_{F,k+1}$ in $S$ is adjacent to $r'_{F,k}\in B$, to $r'_{F,k+1}\in A$, and to $w_a\in A$, preserving property (1).
                \item The additional vertices $r'_{F,k+1},r_{F,k+2},r'_{F,k+2},\ldots,r_{F,|F|},r'_{F,|F|}$ in $A$ have edges amongst themselves and to $r_{F,k+1}\in S$, to the vertex $s\in S$, to the vertices $a_1,a_2\in A$, and to element vertices $x_a\in A\cup S$. There are no edges to vertices in $B$.
                \item The additional vertices $r_{F,1},r'_{F,1},r_{F,2},r'_{F,2},\ldots,r_{F,k},r'_{F,k}$ in $B$ have edges amongst themselves and to $r_{F,k+1}\in S$, to the vertex $s\in S$, (not to $a_1$ and $a_2$ as $k\leq |F|-1$,), to the vertex $u_b\in B$ (from $r_{F,2}$ as $k\geq 2$), and to element vertices $v_a \in S$ for $v=c_{F,1},\ldots,c_{F,k}\in S'$. There are no edges to vertices in $A$.
                \item The additional vertices $\ell_{F,1},\ell'_{F,1},\ell_{F,2},\ell'_{F,2},\ldots,\ell_{F,|F|},\ell'_{F,|F|}$ in $B$ have edges amongst themselves and to the vertex $s\in S$, to the vertices $b_1,b_2\in B$, to the vertex $u_a\in S$, and to element vertices $x_b\in B\cup S$. There are no edges to vertices in $A$, so altogether we also preserve property (3).
            \end{itemize}
            Altogether, also in this case we get that adding the vertices of the set paths for $F$ to the sets $S$, $A$, and $B$ maintains all three properties (1)-(3).
        \end{itemize}
        For $F\in\F$ with $c_{F,1}\in S'$ we have showed how to add the vertices of set paths for $F$ to $S$, $A$, and $B$ while maintaining all three properties.
        \item For each $F\in\F$ with $c_{F,1}\notin S'$ let $k\in\N$ be maximal such that $c_{F,1},\ldots,c_{F,k}\notin S'$. Since $F\nsubseteq U\setminus S'$, we have $1\leq k\leq |F|-1$. We again proceed depending on the value of $k$; since the case is symmetric we only specify the placement of the vertices but do not go over the preservation of properties again:
        \begin{itemize}
            \item If $k=1$, i.e., if $c_{F,1}\notin S'$ and $c_{F,2}\in S'$ then we add to $S$ the vertex $\ell_{F,2}$, we add to $B$ all other vertices of the set path $L_F$, and we add to $A$ all vertices of the set path $R_F$.
            \item If $2\leq k \leq |F|-1$, i.e., if $c_{F,1},\ldots,c_{F,k}\notin S'$ and $c_{F,k+1}\in S'$ then we add to $S$ the vertex $\ell_{F,k+1}$, add to $B$ the vertices $\ell'_{F,k+1},\ell_{F,k+2},\ell'_{F,k+2},\ldots,\allowbreak \ell_{F,|F|},\ell'_{F,|F|}$ (all those of $L_F$ after $\ell_{F,k+1}$), and add to $A$ the remaining vertices of $L_F$, namely $\ell_{F,1},\ell'_{F,1},\ell_{F,2},\ell'_{F,2},\ldots,\ell_{F,k},\ell'_{F,k}$, along with all vertices of the set path $R_F$.
        \end{itemize}
        It can be checked, analogously to the previous case with $c_{F,1}\in S'$, that in both subcases all properties (1)-(3) are maintained.
    \end{itemize}
    Overall, we obtain a partition of $V(G)$ into $S$, $A$, and $B$ such that (1) $S$ is a stable set, (2) $A$ and $B$ are nonempty, and (3) no edge of $G$ has one endpoint in $A$ and the other in $B$. It follows directly that $S$ is a stable cutset in $G$. Hence, $(G,X)$ is yes. This completeness the correctness proof for our PPT from \setsplittingn to \stablecutsetmodlinforest. Since \setsplitting is complete for \mktwo under PPTs, it follows that \stablecutsetmodlinforest is hard for \mktwo under PPTs, as claimed.
\end{proof}

As a free corollary, we get the same hardness result for \stablecutset parameterized by the size of a given (connected) dominating set. It suffices to observe that the set $X$ in the reduction from \setsplittingn to \stablecutsetmodlinforest is also a (connected) dominating set of size $5+2n$.

\begin{corollary}
    \stablecutset parameterized by the size of a (connected) dominating set is hard for \mktwo under PPTs.
\end{corollary}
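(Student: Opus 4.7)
The plan is to recycle the polynomial parameter transformation constructed in the proof of Theorem~\ref{theorem:stablecutsetmodlinforest:mktwohardness}, and merely re-certify that the same modulator $X=\{s,a_1,a_2,b_1,b_2\}\cup\{u_a,u_b\mid u\in U\}$ qualifies as a (connected) dominating set of the constructed graph $G$. Since that theorem already shows that $(U,\F)$ is a yes-instance of \setsplittingn if and only if $(G,X)$ is a yes-instance of \stablecutset, the only thing left to establish is the parameter inequality: namely, that $|X|=5+2n$ is polynomially bounded in $n$ (which is immediate) and that $X$ is \emph{a fortiori} a dominating set (and indeed a connected one).

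For domination, I would simply run through the three kinds of vertices outside $X$, which are all vertices on some set path $L_F$ or $R_F$. By construction, each $\ell'_{F,i}$ and each $r'_{F,i}$ is adjacent to $s\in X$, while each $\ell_{F,i}$ is adjacent to $u_b\in X$ for the element $u=c_{F,i}$, and each $r_{F,i}$ is adjacent to the corresponding $u_a\in X$. Hence every vertex of $V(G)\setminus X$ has a neighbor in $X$, so $X$ dominates~$G$.

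For connectedness of $G[X]$, I would trace short paths inside $X$: the vertex $s$ is adjacent to each of $a_1,a_2,b_1,b_2$; the vertices $a_1,a_2$ are adjacent to every $u_a$ with $u\in U$; the vertices $b_1,b_2$ are adjacent to every $u_b$; and, by construction, $u_a$ is adjacent to $u_b$ for each $u\in U$. Thus any two vertices of $X$ are connected in $G[X]$ through $s$ (and possibly through $a_1$ or $b_1$), so $G[X]$ is connected and $X$ is a connected dominating set.

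Combining these observations with Theorem~\ref{theorem:stablecutsetmodlinforest:mktwohardness} shows that the same mapping $(U,\F)\mapsto(G,X)$ is a polynomial parameter transformation from \setsplittingn (complete for \mktwo) to \stablecutset parameterized by the size of a given (connected) dominating set, proving the corollary. There is no real obstacle here beyond writing the verifications down; the entire argument is simply an enumeration of neighborhoods already present in the construction.
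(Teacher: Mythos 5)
Your proposal is correct and matches the paper's argument exactly: the paper also derives this corollary by observing that the modulator $X$ from the proof of Theorem~\ref{theorem:stablecutsetmodlinforest:mktwohardness} is a connected dominating set of size $5+2n$, so the same PPT serves for this parameterization. Your explicit neighborhood checks (each $\ell'_{F,i}, r'_{F,i}$ dominated by $s$, each $\ell_{F,i}, r_{F,i}$ by the corresponding element vertex, and connectivity of $G[X]$ via $s$, $a_1$, $b_1$) are exactly the verifications the paper leaves implicit.
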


By a small modification of the reduction, one can ensure that $G-X$ is a single path, i.e., that $X$ is a modulator to the class of all paths, a subclass of the linear forests (so we get a slightly stricter parameter). This is similar in spirit to a lower bound for \threecoloring parameterized by distance from a single path~\cite{JansenK13}.

\begin{corollary}
    \stablecutsetmodpath is hard for \mktwo under PPTs.
\end{corollary}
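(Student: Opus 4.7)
The plan is to adapt the reduction from Theorem~\ref{theorem:stablecutsetmodlinforest:mktwohardness} so that $G - X$ becomes a single path instead of a linear forest. Fix an arbitrary ordering $F_1, \ldots, F_m$ of the sets in $\F$ and envision stringing the $2m$ set paths $L_{F_1}, R_{F_1}, L_{F_2}, R_{F_2}, \ldots, L_{F_m}, R_{F_m}$ into one long path. Between each pair of consecutive set paths I would insert a short bridge gadget consisting of three new vertices $p_1, p_2, p_3$ (all added to $G - X$) where $p_1 p_2 p_3$ is a path, $p_1$ is joined to the last vertex of the preceding set path, $p_3$ is joined to the first vertex of the succeeding set path, and additionally $p_1$ and $p_3$ (but not $p_2$) are made adjacent to the special vertex $s \in X$. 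Since the bridges contribute no new vertices to $X$, the parameter $|X| = 5 + 2n$ is unchanged, so the PPT property survives; after all bridges are added, $G - X$ is a single path.

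The correctness argument would carry over from Theorem~\ref{theorem:stablecutsetmodlinforest:mktwohardness} with the following additions. In the forward direction I would first observe that the weak-reachability argument is unaffected by the bridges (bridges only add edges and new reachable vertices), so $a_1$ and $b_1$ still end up in different components of $G' - S$, which forces $s \in S$. Consequently $p_1, p_3 \notin S$ on every bridge while $p_2$ is free to enter $S$; in the former case the bridge is ``broken'' and the two adjacent set paths may lie on different sides of the partition, while in the latter the bridge is ``conducting'' and forces them to the same side. In either case each individual set path is still fully contained in a single class of the partition (modulo at most one vertex in $S$), so the extraction of $S' = \{u : u_a \in S\}$ and the per-set argument that both $u_a \in S$ for some $u \in F$ and $v_a \notin S$ for some $v \in F$ go through exactly as before. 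In the backward direction I would build the partition $(S, A, B)$ as in the original proof and then extend it over each bridge: if the neighbors of $p_1$ and $p_3$ end up on the same side place the entire bridge on that side; otherwise put $p_2 \in S$ and $p_1, p_3$ each on the side of their respective neighbor.

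The main obstacle will be to check that conducting bridges do not disturb the equivalence. On the one hand, one has to verify that a conducting bridge in a stable cutset of the modified graph only couples the two adjacent set paths to the same side of the partition, which the per-set argument in the forward direction can tolerate. On the other hand, one has to verify that the local side-assignments of adjacent set paths produced by the backward direction can always be reconciled through a bridge; here the three-vertex design is precisely what supplies the one-bit degree of freedom through $p_2$, while pinning $p_1, p_3$ out of $S$ via their adjacency to $s$ prevents the bridge from being exploited as an extra stable cutset. Once this check is in place, the modified mapping is a PPT from \setsplittingn to \stablecutsetmodpath, and by the \mktwo-completeness of \setsplittingn established earlier the corollary follows.
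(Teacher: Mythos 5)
Your bridge gadget breaks the reduction before any of the finer points arise: in your construction the middle vertex $p_2$ of every bridge has $N(p_2)=\{p_1,p_3\}$, and $p_1$ and $p_3$ are non-adjacent (you only add the path edges $p_1p_2$ and $p_2p_3$, plus edges from $p_1,p_3$ to $s$ and to the set-path endpoints). Hence $\{p_1,p_3\}$ is a stable set whose removal isolates $p_2$, i.e.\ it is a stable cutset of the output graph \emph{regardless} of the input instance. Every instance you produce is therefore a yes-instance, and no amount of care about ``conducting'' versus ``broken'' bridges can repair the equivalence. (Equivalently: any new vertex whose neighborhood is a stable set hands the adversary a free cutset; this is exactly the situation that Reduction Rule~\ref{rrule:stableneighborhood} detects.) A secondary worry, which your sketch glosses over, is that forcing $s\in S$ in the forward direction requires re-establishing the weak-reachability claim for the new vertices so that $G'-S$ still has exactly two components; with your wiring, $p_2$ can end up in a component of its own, which also invalidates the step ``$a_1$ and $b_1$ lie in different components, hence $s\in S$''.

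The paper's construction avoids both problems by using a \emph{single} connector vertex between consecutive set paths and making it adjacent to all of $a_1,a_2,b_1,b_2$ (in addition to the two path endpoints it joins). Because $\{a_1,a_2\}$ and $\{b_1,b_2\}$ are edges, the connector's neighborhood is never a stable set, so no degenerate cutset appears; in the forward direction each connector is adjacent to at least one of $a_1,a_2$ outside $S$ and therefore cannot create or merge components beyond what $G$ already has, and in the backward direction all connectors are forced into $S$ (they see both $A$ and $B$) while remaining pairwise non-adjacent and non-adjacent to the $S$-vertices of the constructed solution. If you want to salvage a multi-vertex bridge, every bridge vertex must be wired into $X$ so that its neighborhood contains an edge and so that it is either forced into $S$ or provably absorbed into one side; as written, your $p_2$ satisfies neither condition.
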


\begin{proof}
    The idea is to connect the linear forest $G-X$ into a single path. The path components are appended to one another through new vertices that are forced to be in $S$. We formalize the additional construction to get from $(G,X)$ to $(G',X)$ and then briefly sketch correctness.

    Given any instance $(U,\F)$ of \setsplittingn, we first construct an equivalent instance $(G,X)$ of \stablecutsetmodlinforest as in the proof of Theorem~\ref{theorem:stablecutsetmodlinforest:mktwohardness}. Then, using an arbitrary enumeration of sets in $\F$, say, $\F=\{F_1,\ldots,F_m\}$, we add additional vertices to $G$ in order to obtain $G'$ and hence the instance $(G',X)$:
    \begin{itemize}
        \item Take $m$ new vertices $c_1,\ldots,c_m$ and make each $c_i$ adjacent to $\ell'_{F_i,|F_i|}$ and $r_{F_i,1}$ as well as to $a_1$, $a_2$, $b_1$, and $b_2$.
        \item Take $m-1$ new vertices $d_1,\ldots,d_{m-1}$ and make each $d_i$ adjacent to $r'_{F_i,|F_i|}$ and $\ell_{F_{i+1},1}$ as well as to $a_1$, $a_2$, $b_1$, and $b_2$.
    \end{itemize}
    This completes the (extended) construction. Observe that $G'-X$ consists of a single path of form 
    \[
        L_{F_1},c_1,R_{F_1},d_1,L_{F_2},\ldots,R_{F_{m-1}},d_{m-1},L_{F_m},c_m,R_{F_m}.
    \]
    Thus, $X$ is indeed a modulator of $G'$ to a single path, with size polynomial in $n$ as before, and it remains to check correctness. It will be convenient to have $T:=\{c_1,\ldots,c_m,d_1,\ldots,d_{m-1}\}$.

    Assume first that $(U,\F)$ is yes for \setsplittingn. We showed that this implies the existence of a stable cutset $S$ for $G$ and consistent partition $V(G)\setminus S=A\dunion B$ with the following properties:
    \begin{itemize}
        \item We have $a_1,a_2\in A$ and $b_1,b_2\in B$, i.e., $a_1,a_2,b_1,b_2\notin S$.
        \item We have $\ell_{F,1},r_{F,1}\notin S$ for all $F\in\F$.
        \item We have $\ell'_{F,|F|},r'_{F,|F|}\notin S$ for all $F\in \F$.
    \end{itemize}
    Thus, none of the vertices adjacent to the new vertices in $T$ are in $S$. Since we also have that $T$ is a stable set, it follows that $S'=S\cup T$ is a stable cutset with consistent partition $V(G')\setminus S=V(G)\setminus S=A\dunion B$. Thus, $(G',X)$ is yes for \stablecutsetmodpath.

    Now assume that $(G',X)$ is yes for \stablecutsetmodpath and let $S'$ be a stable cutset of $G'$. Let $S=S'\cap V(G)$, which is a stable set in $G$ (though not necessarily a cutset of $G$). Assume for contradiction that $G-S$ is connected and let $C=V(G)\setminus S$ be the sole connected component of $G-S$. Recall that the vertices of $T$ are each adjacent to $a_1$ and $a_2$, out of which at most one, say $a_2$, is in $S$. But then each vertex of $T$ is either in $S'$ or it is adjacent to $a_1\in C$, which implies that $G'-S'$ is connected too (using that $G=G'-T$); a contradiction. It follows that $G-S$ is disconnected and that $S$ is a stable cutset of $G$. But then $(G,X)$ is yes for \stablecutsetmodlinforest and we already know that this implies $(U,\F)$ to be yes for \setsplittingn.
\end{proof}

\subsection{Lower bounds for other parameterizations}

We show a simple construction that given $t$ graphs $G_1,\ldots,G_t$ returns a single graph $G$ such that $G$ has a stable cutset if and only if at least one graph $G_i$ has a stable cutset. This construction is somewhat similar to a construction in obtaining lower bounds for Matching Cut in~\cite{KomusiewiczKL20}. 
Using the standard framework for kernelization lower bounds via cross-compositions~\cite{BodlaenderJK14} then allows to rule out polynomial kernelizations for \stablecutset parameterized by either treewidth or solution size unless \containment.

\introduceparameterizedproblem{\stablecutsetk}{A graph $G$ and $k\in\N$.}{$k$.}{Does $G$ have a stable cutset of size at most $k$?}

\introduceparameterizedproblem{\stablecutsettw}{A graph $G$ and a tree decomposition $\T$ of $G$.}{The width of $\T$.}{Does $G$ have a stable cutset?}

\begin{lemma}
    Given $t$ graphs $G_1,\ldots,G_t$, one can efficiently construct a graph $G$ such that $G$ has a stable cutset if and only if at least one graph $G_i$ has a stable cutset. Moreover, the minimum size of stable cutset in $G$ is at most the minimum size of any stable cutset of any graph $G_i$, and the treewidth of $G$ is at most the maximum treewidth of any graph $G_i$ plus an additive constant.
\end{lemma}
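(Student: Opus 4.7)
The plan is to combine the $G_i$ with a small constant-size hub that (a) makes $G$ connected, (b) lets any stable cutset of an individual $G_i$ lift to $G$ without enlargement, and (c) prevents any spurious cutset of $G$ that does not correspond to a cutset of some $G_i$. First assume every $G_i$ is connected and contains at least one edge; otherwise $G_i$ has a trivial stable cutset (e.g.\ $\emptyset$ when $G_i$ is disconnected), and we may simply output any fixed graph that has a stable cutset. For each $i$, fix an arbitrary edge $v_i^1 v_i^2$ of $G_i$, introduce two new vertices $u, v$ together with the edge $uv$, and add the four edges $u v_i^1, u v_i^2, v v_i^1, v v_i^2$ for every $i$. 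Let $G$ be the resulting graph.

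For the forward direction I will show that any stable cutset $S_i$ of some $G_i$ lifts \emph{without enlargement} to a stable cutset of $G$, which simultaneously gives the $\Leftarrow$ implication and the bound on the minimum stable cutset size. Given $S_i$ with consistent partition $V(G_i) \setminus S_i = A_i \dunion B_i$, set $S = S_i$ and place the hub $\{u, v\}$ together with every $V(G_j)$ for $j \neq i$ on the side containing whichever of $A_i, B_i$ includes the anchors $v_i^1, v_i^2$; the remaining part of $V(G_i)$ forms the other side. A short case distinction on where $v_i^1, v_i^2$ lie (in $A_i$, in $B_i$, or in $S_i$) shows that no edges cross the partition and that both sides are nonempty.

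The backward direction is the main obstacle, as we must rule out stable cutsets of $G$ that do not restrict to a stable cutset of any single $G_i$. Given a stable cutset $S$ of $G$ with consistent partition $V(G) \setminus S = A \dunion B$, the fact that $uv$ is an edge forces $u, v$ on the same side of the partition (or at most one of them in $S$); relabeling $A \leftrightarrow B$ if necessary, we may assume $u, v \in A \cup S$. The key observation is that the edge $v_i^1 v_i^2$ allows at most one anchor of $G_i$ to lie in $S$, while both anchors are adjacent to both hub vertices; a brief case analysis on which of $u, v$ is in $S$ then forces at least one of $v_i^1, v_i^2$ into $A$, so $V(G_i) \cap A \neq \emptyset$ for every $i$. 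Since $u, v \notin B$ we have $B \subseteq \bigcup_i V(G_i)$, and nonemptyness of $B$ yields some $j$ with $V(G_j) \cap B \neq \emptyset$. For that $j$ the sets $V(G_j) \cap A$ and $V(G_j) \cap B$ are both nonempty and separated in $G_j$ by the stable set $S \cap V(G_j)$, giving the required stable cutset of $G_j$.

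For the treewidth bound I will take tree decompositions of each $G_i$ of width $\mathrm{tw}(G_i)$ and join them into a single tree via an auxiliary root bag, adding $\{u, v\}$ to every bag. The bag of each $G_i$ that contains the edge $v_i^1 v_i^2$ (which exists by the definition of a tree decomposition) then also contains $\{u, v\}$, covering all four new hub-to-anchor edges, while the auxiliary root bag $\{u, v\}$ covers the edge $uv$. The resulting decomposition has width at most $\max_i \mathrm{tw}(G_i) + 2$, as required.
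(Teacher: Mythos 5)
Your construction and both directions of the equivalence match the paper's proof essentially verbatim: the same two-vertex adjacent hub attached to the endpoints of an arbitrary edge of each $G_i$, the same lifting of a stable cutset of a single $G_i$ to $G$ without enlargement (placing the hub and all other $G_j$ on the side containing the anchors), and the same converse argument that at least one hub vertex and at least one anchor per $G_i$ avoid $S$, forcing some $G_j$ with both $V(G_j)\cap A$ and $V(G_j)\cap B$ nonempty; your explicit tree decomposition giving width $\max_i \tw(G_i)+2$ is also the intended one. The only slip is in the degenerate-case handling: a connected, edgeless $G_i$ is a single vertex and has \emph{no} stable cutset, so such a graph must be discarded (as the paper does with cliques generally) rather than trigger the output of a yes-instance as your ``otherwise'' branch prescribes.
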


\begin{proof}
    W.l.o.g., each of the graphs $G_i$ has at least one edge. (If any graph $G_i$ is disconnected then return $G:=G_i$, clearly fulfilling the lemma. We may also discard any graphs $G_i$ that are cliques, including the empty graph and the graph with a single vertex.) Take the disjoint union of all the graphs and choose an arbitrary edge $\{u_i,v_i\}$ in each graph $G_i$. Add two adjacent vertices $p$ and $q$, and make them fully adjacent to all vertices $u_i$ and $v_i$. Return the obtained graph $G$. Clearly, its treewidth is at most the maximum of the treewidths of the graph $G_i$ and three (due to the cliques of size four on $p$, $q$, and any $u_i$ and $v_i$). It remains to check the existence of stable cutsets.

    We show that any stable cutset $S\subseteq V(G_i)$ is also a stable cutset of $G$, which also proves the remaining moreover part of the lemma. (The converse of the main statement is proved below.) Let $V(G_i)\setminus S=A\dunion B$ be a consistent partition for $S$ in $G_i$. Clearly, $u_i$ and $v_i$ cannot be one in $A$ and one in $B$, as they are adjacent. By symmetry, assume that $u_i,v_i\in S\cup A$. It follows directly that $(A\cup\{p,q\}\cup \bigcup_{j\neq i} V(G_j))\dunion B$ is a consistent partition for $S$ in $G$.
    
    For the converse, let $S\subseteq V(G)$ be a stable cutset of $G$. 
    Then at least one of $p$ and $q$, and for each $i$, at least one of $u_i$ and $v_i$ are not in $S$. Hence $S\cap V(G_i)$ is a stable cutset of $G_i$ for some $i$ (otherwise $G-S$ would be connected).
\end{proof}

\begin{theorem}
    \stablecutsetk and \stablecutsettw have no polynomial kernelization unless \containment.
\end{theorem}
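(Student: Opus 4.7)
The plan is to invoke the cross-composition framework of Bodlaender, Jansen, and Kratsch~\cite{BodlaenderJK14}, using \stablecutset (which is NP-complete) as the source problem and the preceding lemma as the combinatorial engine. As the polynomial equivalence relation on \stablecutset instances I would take equality of vertex counts, so that within one class all inputs share a common value $n=|V(G_i)|$; this clearly yields at most $N+1$ classes among strings of length at most $N$. Given $t$ equivalent instances $G_1,\ldots,G_t$, applying the lemma produces in polynomial time a single graph $G$ whose stable-cutset existence is precisely the OR over $i$ of the inputs.

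For \stablecutsetk, I would output $(G, k:=n)$. In the forward direction, a stable cutset of any $G_i$ has size at most $n$ and, by the lemma, lifts to a stable cutset of $G$ of the same size, hence of size at most $k$. Conversely, any stable cutset of $G$ (in particular one of size at most $k$) restricts to a stable cutset of some $G_i$ by the converse direction of the lemma. Since $k=n$ is polynomial in the size of a single input, this is an OR-cross-composition and the claimed kernelization lower bound follows.

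For \stablecutsettw, the lemma bounds $\tw(G) \leq \max_i \tw(G_i) + O(1) \leq n + O(1)$, so I would explicitly construct a tree decomposition $\T$ of $G$ of width $O(n)$ to output alongside $G$. Starting from the trivial single-bag decomposition of each $G_i$ with bag $V(G_i)$, enlarge every bag by $\{p,q\}$ and chain these bags along a path; one checks easily that every edge of $G$ (within some $G_i$, or incident to $p$ or $q$) is covered by some bag and that each vertex induces a connected subtree. The width is $n+1$, polynomial in a single input's size, so this is again a valid OR-cross-composition. I do not expect a real obstacle here: the lemma already guarantees the OR property and controls both solution size and treewidth, leaving only the standard embedding into the cross-composition framework and, for the treewidth parameter, writing down $\T$ explicitly.
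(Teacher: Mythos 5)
Your proposal is correct and matches the paper's (largely implicit) argument: the paper proves exactly the OR-composition lemma you invoke and then appeals to the cross-composition framework of Bodlaender, Jansen, and Kratsch, just as you do. Your filling-in of the standard details (equivalence relation by vertex count, setting $k:=n$, and chaining the trivial bags enlarged by $\{p,q\}$ into a width-$(n+1)$ tree decomposition) is exactly what the framework requires and is consistent with the lemma's size and treewidth guarantees.
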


\begin{observation}
    The lower bound extends also to parameterization by pathwidth, tree-depth, and vertex integrity (but clearly not by size of largest connected component, which has a trivial kernelization).
\end{observation}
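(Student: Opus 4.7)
The plan is to apply the OR-cross-composition framework of Bodlaender, Jansen, and Kratsch, using the construction from the preceding lemma as the composition routine. Concretely, I would use the unparameterized \stablecutset problem as the NP-hard source and take as polynomial equivalence relation the trivial one that makes all well-formed instances equivalent. Given $t$ instances $G_1,\dots,G_t$ of \stablecutset with $n_{\max}=\max_i |V(G_i)|$, apply the lemma to obtain the composed graph $G$. The lemma guarantees that $G$ has a stable cutset if and only if at least one $G_i$ does, which is precisely the OR-property required.

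For \stablecutsetk, output the instance $(G,k)$ with $k:=n_{\max}$. By the ``moreover'' part of the lemma, whenever some $G_i$ has a stable cutset at all (necessarily of size at most $|V(G_i)|\le n_{\max}$), the minimum stable cutset of $G$ has size at most $n_{\max}$, so $(G,n_{\max})$ is a yes-instance precisely when some $G_i$ is yes. Since $k=n_{\max}$ is polynomial in $\max_i |x_i|$, this is a valid OR-cross-composition, and the lower bound for polynomial kernelization relative to solution size follows from the standard framework.

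For \stablecutsettw, output $(G,\T)$ where $\T$ is obtained by building a trivial tree decomposition of each $G_i$ (one bag containing $V(G_i)$, giving width $|V(G_i)|-1$), adding the two new vertices $p,q$ together with the chosen edge endpoints $u_i,v_i$ to every bag, and joining the individual decompositions via a path. The resulting decomposition has width at most $n_{\max}+3$, which is polynomial in $\max_i|x_i|$, matching what the lemma asserts about $\tw(G)$. The same cross-composition argument then yields the lower bound. The observation at the end follows by essentially the same argument: deleting $\{p,q\}$ from $G$ disconnects it into the $G_i$'s, so vertex integrity, pathwidth, and tree-depth of $G$ are all bounded by a polynomial (in fact, linear plus an additive constant) function of the corresponding parameter among the $G_i$'s, which in the worst case is at most $n_{\max}+O(1)$.

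The only subtle point is bookkeeping: one must verify that the composition is polynomial time and that the chosen parameter value really is bounded polynomially in the maximum input instance size (independent of $t$), so that the cross-composition machinery kicks in. Both are immediate from the explicit construction in the lemma and from the fact that any stable cutset (respectively any tree decomposition of a graph on $N$ vertices) has size (respectively width) at most $N$. No additional obstacles arise, so Theorem~\ref{theorem:intro:lowerbounds} and the subsequent observation follow directly.
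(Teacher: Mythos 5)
Your proposal is correct and follows essentially the route the paper intends: the preceding lemma is exactly the OR-composition gadget, and the observation about pathwidth, tree-depth, and vertex integrity is justified precisely by noting that deleting $\{p,q\}$ leaves the disjoint union of the $G_i$'s, so each of these parameters of $G$ is at most the corresponding maximum over the $G_i$'s plus an additive constant, hence polynomial in $\max_i|V(G_i)|$ as the cross-composition framework requires. The bookkeeping you flag (parameter bounded independently of $t$, witness decompositions constructible in polynomial time) is handled correctly.
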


\section{Conclusion}\label{section:conclusion}

We have initiated the study of polynomial kernelizations for \stablecutset. Our most general positive results are the kernelizations relative to the size of modulators to the classes of cluster respectively co-cluster graphs. Complementing this, the lower bound relative to size of a modulator to a single path rules out positive results for many (if not most) often-studied graph classes above cluster and/or co-cluster graphs. An interesting case to study may be the size of a modulator to the class of cographs. It would also be interesting to prove lower bounds on the size of kernelizations for the positive cases that we obtained, and to push towards making these bounds tight.

\subsubsection*{Acknowledgements.}
VBL thanks Hoang-Oanh Le for discussions on stable cutsets, and beyond.

% \newpage
\bibliographystyle{abbrv}
\bibliography{scs.bib}

\end{document}